\documentclass{article}
\usepackage{graphicx} 
\usepackage{caption} 
\usepackage{subcaption} 

\usepackage[english]{babel}
\usepackage{enumerate}
\usepackage{amsthm} 
\usepackage{siunitx} 
\usepackage{multirow} 

\usepackage[a4paper,top=3cm,bottom=3cm,left=3cm,right=3cm,marginparwidth=1.75cm]{geometry}
\usepackage[textsize=small]{todonotes}
\usepackage[numbers]{natbib}
\usepackage{pifont}
\usepackage{bm}
\usepackage{bbm}

\usepackage{amsmath, amssymb}
\setlength{\marginparwidth}{2cm}
\usepackage{todonotes}

\makeatletter
\let\save@mathaccent\mathaccent
\newcommand*\if@single[3]{%
	\setbox0\hbox{${\mathaccent"0362{#1}}^H$}%
	\setbox2\hbox{${\mathaccent"0362{\kern0pt#1}}^H$}%
	\ifdim\ht0=\ht2 #3\else #2\fi
}
\newcommand*\rel@kern[1]{\kern#1\dimexpr\macc@kerna}
\newcommand*\widebar[1]{\@ifnextchar^{{\wide@bar{#1}{0}}}{\wide@bar{#1}{1}}}
\newcommand*\wide@bar[2]{\if@single{#1}{\wide@bar@{#1}{#2}{1}}{\wide@bar@{#1}{#2}{2}}}
\newcommand*\wide@bar@[3]{%
	\begingroup
	\def\mathaccent##1##2{%
		\let\mathaccent\save@mathaccent
		\if#32 \let\macc@nucleus\first@char \fi
		\setbox\z@\hbox{$\macc@style{\macc@nucleus}_{}$}%
		\setbox\tw@\hbox{$\macc@style{\macc@nucleus}{}_{}$}%
		\dimen@\wd\tw@
		\advance\dimen@-\wd\z@
		\divide\dimen@ 3
		\@tempdima\wd\tw@
		\advance\@tempdima-\scriptspace
		\divide\@tempdima 10
		\advance\dimen@-\@tempdima
		\ifdim\dimen@>\z@ \dimen@0pt\fi
		\rel@kern{0.6}\kern-\dimen@
		\if#31
		\overline{\rel@kern{-0.6}\kern\dimen@\macc@nucleus\rel@kern{0.4}\kern\dimen@}%
		\advance\dimen@0.4\dimexpr\macc@kerna
		\let\final@kern#2%
		\ifdim\dimen@<\z@ \let\final@kern1\fi
		\if\final@kern1 \kern-\dimen@\fi
		\else
		\overline{\rel@kern{-0.6}\kern\dimen@#1}%
		\fi
	}%
	\macc@depth\@ne
	\let\math@bgroup\@empty \let\math@egroup\macc@set@skewchar
	\mathsurround\z@ \frozen@everymath{\mathgroup\macc@group\relax}%
	\macc@set@skewchar\relax
	\let\mathaccentV\macc@nested@a
	\if#31
	\macc@nested@a\relax111{#1}%
	\else
	\def\gobble@till@marker##1\endmarker{}%
	\futurelet\first@char\gobble@till@marker#1\endmarker
	\ifcat\noexpand\first@char A\else
	\def\first@char{}%
	\fi
	\macc@nested@a\relax111{\first@char}%
	\fi
	\endgroup
}

\newcommand\norm[1]{\left\lVert#1\right\rVert}

\usepackage{amsfonts}
\usepackage{bbm}
\usepackage{enumitem}
\usepackage{graphicx}
\usepackage[colorlinks=true, allcolors=blue]{hyperref}
\usepackage{mathtools}
\usepackage{breqn}
\usepackage{float}
\usepackage{xcolor}
\setlength{\parindent}{0em} 
\setlength{\parskip}{1em} 
\usepackage{tcolorbox}
\usepackage{soul}

\newtheorem{theorem}{Theorem}[section]
\newtheorem{lemma}[theorem]{Lemma}
\newtheorem{corollary}[theorem]{Corollary}

\newtheorem{proposition}[theorem]{Proposition}

\newtheorem{remark}[theorem]{Remark}

\numberwithin{equation}{section}

\theoremstyle{definition}

\newtheorem{definition}[theorem]{Definition}

\newcommand{\upperRomannumeral}[1]{\uppercase\expandafter{\romannumeral#1}}
\newcommand{\lowerRomannumeral}[1]{\lowercase\expandafter{\romannumeral#1}}

\newcommand{\dd}{\mathrm{d}}

\def\E{{\mathbb{E}}}
\newcommand{\R}{{\mathbb R}}

\newcommand{\bd}{\begin{displaymath}}
\newcommand{\ed}{\end{displaymath}}
\newcommand{\be}{\begin{equation}}
\newcommand{\ee}{\end{equation}}
\newcommand{\bq}{\begin{eqnarray}}
\newcommand{\eq}{\end{eqnarray}}
\newcommand{\bn}{\begin{eqnarray*}}
\newcommand{\en}{\end{eqnarray*}}

\usepackage{mathtools}
\usepackage{authblk}
\mathtoolsset{showonlyrefs}

\DeclareMathOperator*{\esssup}{ess\,sup}
\usepackage[textsize=small]{todonotes}

\title{Fredholm Approach to Nonlinear Propagator Models}

\author[1]{Eduardo Abi Jaber\thanks{EAJ is grateful for the financial support from the Chaires FiME-FDD, Financial Risks, Deep Finance \& Statistics and Machine Learning and systematic methods in finance at Ecole Polytechnique.}}
\author[2]{Alessandro Bondi}
\author[3]{Nathan De Carvalho\thanks{NDC is grateful for the financial support provided by Engie Global Markets.}}
\author[4]{Eyal Neuman}
\author[5]{Sturmius Tuschmann\thanks{ST is supported by the EPSRC Centre for Doctoral Training in Mathematics of Random \mbox{Systems}: Analysis, Modelling and Simulation (EP/S023925/1).}}

\affil[1]{Ecole Polytechnique, CMAP}
\affil[2]{Luiss University, DEF}
\affil[3]{Université Paris Cité, LPSM}
\affil[4,5]{Department of Mathematics, Imperial College London}

\begin{document}

\maketitle

\begin{abstract}
We formulate and solve an optimal trading problem with alpha signals, where transactions induce a nonlinear transient price impact described by a general propagator model, including power-law decay. Using a variational approach, we demonstrate that the optimal trading strategy satisfies a nonlinear stochastic Fredholm equation with both forward and backward coefficients. We prove the existence and uniqueness of the solution under a monotonicity condition reflecting the nonlinearity of the price impact. Moreover, we derive an existence result for the optimal strategy beyond this condition when the underlying probability space is countable. In addition, we introduce a novel iterative scheme and establish its convergence to the optimal trading strategy. Finally, we provide a numerical implementation of the scheme that illustrates its convergence, stability, and the effects of concavity on optimal execution strategies under exponential and power-law decay.
\end{abstract}

\begin{description}
\item[Mathematics Subject Classification (2010):] 93E20, 60H30, 91G80
\item[JEL Classification:] C02, C61, G11
\item[Keywords:] Optimal trading, nonlinear market impact, propagator model, power-law decay, square root-law, Fredholm equations, nonlinear stochastic control 
\end{description}

 
 \section{Introduction}

Price impact refers to the empirical observation that executing a large order adversely affects the price of a risky asset in a persistent manner, leading to less favorable execution prices. Consequently, an agent seeking to liquidate a large order, known as a \emph{metaorder}, must split it into smaller parts, referred to as \emph{child orders}, which are typically executed over a period of hours or days. A fundamental question in this context concerns the impact of a metaorder as a function of its size. A large body of empirical and theoretical work suggests that this relationship is concave and in fact, well approximated by a square-root law, see e.g.~\cite{donier2015fully,durin2023two,PhysRevE.89.042805,Sato_24,TothEtAl,Zarinelli_15} and Chapter 12.3 of \cite{bouchaud_bonart_donier_gould_2018} for a comprehensive introduction.   The square-root price impact law states that the peak impact \( D^{\textrm{peak}} \) for a metaorder of volume \( X \), executed over a time horizon \( T \), is given by  
\begin{equation} \label{sqrt}  
D^{\textrm{peak}} = Y\sigma_T \left(\frac{X}{V_T} \right)^\delta, \quad \textrm{for } X \ll V_T,  
\end{equation}  
where \( Y \) is a constant of order 1, \( V_T \) is the total traded volume in the market, and \( \sigma_T \) is the asset's contemporaneous volatility, both measured over time \( T \). The scaling exponent \( \delta \) is estimated to be in the range \( [0.4, 0.7] \), and is often taken as $0.5$ as an approximation.  
The square-root law has been shown to be a good approximation in various markets (such as equities, foreign exchange, options, and even cryptocurrencies) and is general enough to include different types of market liquidity, broad classes of execution strategies, and various trading frequencies (\cite{Bershova_13,Donier_15,Toth_17,Zarinelli_15}).

While the square-root law provides a simple connection between the metaorder size and its impact on the mid-price, price impact evolves dynamically, and the reaction of the mid-price to the metaorder is mostly transient.  
Propagator models are a central tool for mathematically describing this decay phenomenon. They express price moves in terms of the influence of past trades and therefore capture the decay of price impact after each trade (\cite{bouchaud_bonart_donier_gould_2018, gatheral2010no}).  
The price distortion \( D_t \) is quantified by
\be \label{d_t}  
D^X_t  =\int_0^t G(t,s) dX_s, \quad t \geq 0,  
\ee  
where \( X_t \) describes the amount of shares of the asset executed by time \( t \) and \( G(t,s) \) is a positive semi-definite Volterra kernel, often referred to as a \emph{propagator}.  
\citet{obizhaeva2013optimal} and \citet{garleanu2016dynamic}, along with follow-up papers, assume that the price distortion \( D_t \) decays exponentially over time, that is, $G(t,s) =\kappa e^{-\rho(t-s)} \mathbbm{1}_{\{ t > s \}}$ for some positive constants $\kappa$ and $\rho$.  
On the other hand, \citet{bouchaud_bonart_donier_gould_2018} (see Chapter 13.2.1 and references therein) report on empirical observations that the propagator \( G \) decays as a power-law of the lag, i.e.,  
\begin{equation}\label{power-prop} 
G(t,s) \approx (t-s)^{-\beta} \mathbbm{1}_{\{ t > s \}}, \quad t,  s \geq 0,
\end{equation}
with \( 0<\beta < 1 \), and these results are also supported by theoretical arguments.  
 
While the power-law behavior of price impact decay is strongly supported by market data, results on corresponding optimal liquidation strategies have been scarce. 
Power-law or other general propagators that do not exhibit exponential decay introduce non-Markovianity in the corresponding optimal trading problem, thereby rendering standard stochastic control tools, such as dynamic programming or FBSDEs, inapplicable. Recent progress  has been made in the context of  optimal execution with the linear propagator model, see \cite{GSS} for the case without signals and  \cite{abi2022optimal,abijaber2023equilibrium,abijaberOptPort2024} for the case with stochastic signals. However, incorporating both the nonlinearity arising from the square-root law in \eqref{sqrt} and the power-law decay in \eqref{power-prop} remains a long-standing open problem.

As a first step in merging these models, \citet{alfonsi2010optimal} introduced a nonlinear propagator model, motivated by the local shape density of the limit order book and its reaction to market orders. In their model, the price impact \( I_t \) of a metaorder \( (X_t)_{t\geq 0} \) is given by  
\begin{equation} \label{imp}  
I^X_t= h(D^X_t), \quad t\geq 0,  
\end{equation}
where \( h \) is an increasing, nonnegative, and concave function, and \( D^X \) is defined in \eqref{d_t}. Note that the case \( h(x)=x \) corresponds to the linear propagator model, while the case \( h(x)= \text{sign}(x)|x|^{\delta} \), with \( \delta \in (0,1) \), is inspired by \eqref{sqrt} (see \cite{bouchaud2009markets,curato2017optimal,hey2023trading}). \citet{alfonsi2010optimal}  then solved the associated execution problem, which involves minimizing the expected execution costs, \begin{equation}  
C(X) = \mathbb{E} \left[ \int_{0}^T  h(D^X_t)dX_t  \right],  
\end{equation}  
in the special case where \( G(\cdot) \) is an exponentially decaying function, i.e., the Markovian case, obtaining closed-form, deterministic optimal strategies.  

\citet{hey2023trading} generalized the framework of \cite{alfonsi2010optimal} by incorporating a predictive trading signal (also referred to as \emph{alpha}) into the price process. From a modeling perspective, this means that the unaffected price process is no longer a martingale. The agent therefore faces a tradeoff between exploiting the signal to make an immediate profit and managing price impact costs. The objective functional then changes to maximizing the expected revenue functional:  
\begin{equation} \label{costs-sig}  
 R(X)=   \mathbb{E} \left[\int_{0}^{T} \left( \alpha_{t} - I_{t}^{X} \right) dX_t \right], 
\end{equation}  
where the signal $(\alpha_t)_{t \geq 0} $ captures the drift term information of the unaffected asset price. In addition, inventory risk and a penalty for terminal inventory can be incorporated into \eqref{costs-sig}. The optimal execution strategy with respect to \eqref{costs-sig} is expected to be adapted to the signal's filtration. \citet{hey2023trading} worked within the Markovian setup and derived a first-order condition that characterizes an implicit connection between price distortion and the signal. A more tractable relation between impact and alpha was then established in the case where the function \( h(x) \) in \eqref{imp} follows a power-law, and the optimal price impact was derived in closed form.  
The model in \cite{hey2023trading} can be regarded as a generalization of the exponentially decaying linear propagator model studied in \cite{lehalle2019incorporating, neuman2022optimal}, for which closed-form solutions were derived.
For additional work on cost misspecification in the setting of \citet{hey2023trading}, we refer to \cite{hey2023cost}, and a similar problem for the general linear propagator model was studied in \cite{neuman2023statistical}.

The main goal of this work is to study the optimal execution problem in which the price impact is a nonlinear (concave) function of the distortion as in \eqref{imp}, and the price distortion decays as in the general propagator model \eqref{d_t}. The special case where $G$ is given by \eqref{power-prop} will be of major importance, as it is as close as possible to empirical studies. From a mathematical perspective, this will generalize the work in \cite{abi2022optimal,abijaberOptPort2024} to nonlinear price impact models, and generalize the Markovian nonlinear model in \citet{hey2023trading} to the non-Markovian case. We provide an outline of our main results below. 

\textbf{Existence, uniqueness, and first-order conditions.}  
To characterize the optimal strategy for the aforementioned nonlinear, non-Markovian optimal execution problem, we use a variational approach to derive the first-order condition, which takes the form of a nonlinear stochastic Fredholm equation with both forward and backward terms. Under sublinear growth conditions on $h$ in \eqref{imp} and a monotonicity condition (see \eqref{eq:definition_monotone}), we prove that there exists a unique solution to this Fredholm equation, which is the maximizer of the gain functional (see Theorem \ref{T:mainnonlinear}).  
We then provide two examples where the above conditions are satisfied, one of which is the well-known case of an exponential propagator studied in \cite{hey2023trading}. Theorem \ref{T:mainnonlinear} can be regarded as a generalization of Theorem 4.2 in \cite{hey2023trading} to the non-Markovian setting of general propagators. Indeed, in Proposition \ref{P:one_exponential_case}, we show that the monotonicity condition \eqref{eq:definition_monotone} reduces to the conditions assumed in \cite{hey2023trading}.  

Since the monotonicity condition is quite challenging to verify, we extend the results of Theorem \ref{T:mainnonlinear} by relaxing it. In Theorem \ref{T:existence_beyond_monotonicity}, we prove an existence result for the solution to the execution problem by replacing the monotonicity condition with an assumption of a countable underlying probability space.  
Moreover, we prove that this solution must satisfy the stochastic Fredholm equation. These assumptions allow us to include a large class of propagators in our models, including power-law decay. Our proof presents a novel approach for obtaining the sequential weak lower semi-continuity of the negative of the gain functional \eqref{eq:gain_functional_alpha_impact_cost_trade_of}, which, together with the proof of coercivity, yields the existence of a solution to the Fredholm equation. 
 
\textbf{Numerical scheme and convergence results.}  
We formulate an iterative scheme to solve the nonlinear stochastic Fredholm equation with both forward and backward terms, which was derived from the first-order condition, in order to determine the optimal trading strategy. In Theorem \ref{pp:conv_scheme}, we prove the convergence of the scheme and derive a bound on its convergence rate. 
Our scheme is designed to handle a broad class of kernels and stochastic signals. Numerical methods for nonlinear propagator models remain scarce, to the best of our knowledge, the only existing approach is \cite{brokmann2024efficient}, which employs neural networks to approximate the solution, but has been implemented only for the exponential kernel. In contrast, our approach guarantees convergence with explicit bounds on the convergence rate, and is also implemented for power-law propagators.

Moreover, we illustrate the convergence of the numerical scheme for both deterministic and stochastic signals and for general kernels. In the case of an exponential kernel and vanishing instantaneous impact, we use the explicit impact solution provided by  \citet{hey2023trading}  as a benchmark (see Section \ref{s:numerics}). Additionally, we provide a numerical implementation that showcases the stability of the numerical scheme when approximating the power-law kernel by sums of exponentials.
    
\paragraph{Organization of the paper.}  
In Section \ref{s:problem_formulation_and_main_results}, we introduce the nonlinear propagator model. We state the main existence theorems, present a proposition on well-posedness in the single-exponential case, and prove two stability results. Additionally, we outline the iterative numerical scheme designed to determine the optimal trading strategy.  
Section \ref{s:numerics} focuses on numerical results. We implement our numerical scheme for various scenarios and discuss its numerical convergence.  
Sections \ref{S:proofofoptimality}--\ref{s:proof_existence_uniqueness_beyond_monotonicity} contain the proofs of our main results. Finally, in the appendix additional numerical results are provided.

\section{Problem formulation and main theoretical results} \label{s:problem_formulation_and_main_results}

Fix a finite time horizon $T > 0$ and a filtered probability space $(\Omega, \mathcal F, (\mathcal F_t)_{t \in [0,T]}, \mathbb{P})$ satisfying the usual conditions. We denote by $\dd t$ the Lebesgue measure  on the Borel $\sigma$-algebra $\mathcal{B} \left( [0,T] \right)$, and we denote by $\dd t \otimes \mathbb{P}$ the product measure on the $\sigma$-algebra $\mathcal{B} \left( [0,T] \right) \otimes \mathcal{F}$. We introduce the standard Banach spaces
\begin{equation*} \label{eq:def_mathcal_L_p}
    \mathcal{L}^{p} := \left\{ f: [0,T] \times \Omega \to \R \; \text{prog.~measurable}, \; \E \left[ \int_{0}^{T}  |f_{t}|^{p} \dd t \right] < \infty \right\},  \quad p \geq 1. 
\end{equation*}
For $p=2$, we equip $\mathcal L^2$ with the inner product
\begin{equation*}
    \langle f, g \rangle := \E \left[ \int_{0}^{T}  f_{t} g_{t}  \dd t \right], \quad f,g \in \mathcal{L}^{2},
\end{equation*}
which makes it a Hilbert space with the associated norm $\| f\| := \sqrt{\langle f,f\rangle}$. For $u \in \mathcal{L}^{1}$ and $t \in [0,T]$, we denote by $\E_t u$ the conditional expectation of $u$ with respect to the $\sigma$-algebra $\mathcal{F}_t$.

We consider an agent who wishes to liquidate a given amount of assets $X_{0}>0$ at an admissible trading rate $u \in \mathcal{L}^{2}$. The agent's inventory $X^u$ is therefore given by 
\begin{equation}\label{eq:running_inventory}
    X^u_t := X_{0}+\int_{0}^{t}u_s \dd s, \quad t\in [0,T]. 
\end{equation}  
Notice in particular that 
\[
    \sup_{t\in [0,T]}\mathbb{E}\left[|X^u_t|^2\right]<\infty.
\]
We fix a process $S=(S_t)_{t\in [0,T]}\in \mathcal{L}^{2}$ with terminal random variable $S_T \in L^2(\left(\Omega, \mathcal{F}_{T}, \mathbb{P} \right), \mathbb{R})$ representing the fundamental price of the risky asset. The execution price $S^u=(S^u_t)_{t\in [0,T]}$ is therefore given by, 
\begin{equation}\label{eq:effective_price}
    S^u_t = S_t + \frac{\gamma}{2} u_t + h(Z^u_t), \quad t \in [0,T],
\end{equation}
where $\gamma>0$ is a constant, $h \colon \mathbb{R} \mapsto \mathbb{R}$ is a measurable function satisfying suitable regularity conditions \emph{to be specified} and  $Z^u=(Z^u_t)_{t\in[0,T]}$ is the process in $\mathcal{L}^{2}$ given by 
\begin{equation} \label{eq:def_Z}
    Z^u_t := g_{t} + \left(\mathbf{G}u\right)_{t}, \quad t \in [0,T]. 
\end{equation}
 Here $g \in \mathcal{L}^{2}$ is an input stochastic process independent of $u$ and $\mathbf{G}: \mathcal{L}^{2} \mapsto \mathcal{L}^{2}$ is the linear operator induced by a Volterra kernel $G \colon [0,T]^2 \to \mathbb{R}$. 
 The terms $\frac{\gamma}{2} u$ and $Z^u-g$ in \eqref{eq:effective_price} are, respectively, the instantaneous and transient price impacts of the trading strategy $u$ on the unaffected price process $S$. Financially, $\gamma$ captures slippage costs, i.e., bid-ask spreads. Furthermore, $h$ is the impact function applying a potentially nonlinear effect to the price distortion $\mathbf{G}u$ caused by the trading strategy $u$. When $h \equiv \text{Id}$, our setting reduces to the one in \cite{abi2022optimal,abi2024trading, abijaberOptPort2024}, where linear transient price impact is investigated. We discuss the financial interpretation of $g$ below \eqref{eq:def_impact_costs}. 

Next we present the class of admissible price impact kernels, which are also known as propagators. 
\begin{definition}\label{def:admissible_kernel}
    A Volterra kernel $G \colon [0, T]^2 \to \mathbb{R}$ is said to be \emph{admissible} if
    \begin{equation} \label{eq:constant_norm_of_G}
        C_{G} :=
        \sup_{t\in [0, T]} \int_{0}^{t}|G(t,s)|^2\dd s < \infty.
    \end{equation}
    Any admissible kernel $G$ induces a unique linear and bounded integral operator $\mathbf{G}: \mathcal{L}^{2} \mapsto \mathcal{L}^{2}$ defined by
    \begin{equation*}
        \left(\mathbf{G}u\right)_{t} := \int_{0}^{t} G(t,s) u_{s} \dd s, \quad t \in [0,T], \quad u \in \mathcal{L}^{2},
    \end{equation*}
    such that, by Cauchy-Schwarz's inequality, 
    \begin{equation} \label{eq:estimate_on_admissible_G_operator}
        \| \mathbf{G}u \|^2 \le T C_{G} \| u \|^2, \quad u \in \mathcal{L}^2.
    \end{equation}
    As a consequence of Fubini's theorem and the tower property of the conditional expectation, the unique linear and bounded adjoint operator $\mathbf{G}^{\ast}: \mathcal{L}^{2} \to \mathcal{L}^{2}$ of $\mathbf{G}$ is explicitly given by
    \begin{equation} \label{eq:dual_operator_def}
        \left(\mathbf{G}^{\ast}u\right)_{t} := \int_{t}^{T} G(s,t) \E_{t}[ u_{s} ]\dd s, \quad t \in [0,T], \quad u \in \mathcal{L}^{2},
    \end{equation}
    where $\E_t$ denotes conditional expectation with respect to $\mathcal F_t$. 
     The boundedness of $\mathbf{G}^\ast$ is ensured by conditional Jensen's inequality and Fubini's theorem such that
    \begin{equation} \label{eq:estimate_on_adjoint}
        \| \mathbf{G}^{\ast}u \|^2 \leq T C_{G} \| u \|^2,\quad u \in \mathcal{L}^2,
    \end{equation}
    see for example \cite[Chapter 6, Section 2]{dunford1988linear}.
\end{definition}
We introduce the following revenue-risk functional of the agent,  
\begin{equation} \label{eq:gain_functional}
    \mathcal{J}(u) := \mathbb{E}\bigg[- \int_{0}^{T}S_t^u u_t \dd t+X^u_T S_T - \frac{\phi}{2} \int_{0}^{T}(X^u_t)^2 \dd t - \frac{\varrho}{2} (X^u_T)^2\bigg], \quad u\in \mathcal{L}^{2},
\end{equation}
where $\phi$ and $\varrho$ are nonnegative constants representing the penalty factor on the risk aversion term and on the terminal inventory, respectively. In order for $\mathcal{J}$ in \eqref{eq:gain_functional} to be well-defined we require that $S^u \in \mathcal{L}^{2}$ for every $u \in  \mathcal{L}^{2}$. This is satisfied by enforcing the a  sublinear growth condition on $h$ (see see Definition \ref{def:admissible_impact_function}).
Our objective is to maximize the performance functional $\mathcal{J}$, that is, to find an optimal trading rate $\hat{u} \in \mathcal{L}^{2}$ such that \begin{equation} \label{eq:optimal_strategy}
    \mathcal{J}(\hat{u})= \sup_{u \in \mathcal{L}^{2}} \mathcal{J}(u).
\end{equation}
Inserting the definitions of the final inventory $X_{T}^{u}$ from \eqref{eq:running_inventory} and the effective price $S^{u}$ from \eqref{eq:effective_price} into $\mathcal{J}$ from \eqref{eq:gain_functional}, while using the tower property of the conditional expectation, allows us to rewrite the performance functional as follows
\begin{equation} \label{eq:gain_functional_alpha_impact_cost_trade_of}
    \mathcal{J}(u) = \mathbb{E}\bigg[\int_{0}^{T} \left( \alpha_{t} - I_{t}^{u} \right) u_t \dd t - \frac{\phi}{2} \int_{0}^{T}(X^u_t)^2 \dd t - \frac{\varrho}{2} (X^u_T)^2 \bigg] + X_0 \mathbb{E} \left[ S_{T} \right], \quad u \in \mathcal{L}^{2}.
\end{equation}
This expression of $\mathcal{J}$ features the standard tradeoff between the exploitation of the asset's alpha-signal, 
\begin{equation}\label{eq:def_alpha_signal}
    \alpha_t := \mathbb E_t[S_T-S_t],\quad 0\leq t \leq T,
\end{equation}
and the price impact, 
\begin{equation}\label{eq:def_impact_costs}
    I_{t}^{u} := \frac{\gamma}{2} u_{t} + h \left( Z^u_t \right),  0\leq t \leq T,
\end{equation}
see for example  \cite[Lemma~5.3]{abi2024trading}, \cite[Equation (3.2)]{hey2023trading}.

\textbf{Financial interpretation.}  
The stochastic process $\alpha$ in \eqref{eq:def_alpha_signal} captures the drift of the fundamental price $S$, and it is interpreted as a short-term trading signal (see \cite{lehalle2019incorporating}). The process $g$ in \eqref{eq:def_Z} provides additional flexibility to the model, as it can serve as a proxy for the aggregated price impact generated by the order flow of other market participants trading the same asset.  
It is shown in Theorem 2.13 of \cite{N-V-23}, using game-theoretical arguments, that for $h \equiv \text{Id}$ the aggregated price impact of other players serves as an additional signal competing with $\alpha$. In the case of a  concave $h$ the resulting effective signal is given by $\tilde{\alpha} = \alpha - h(g+\mathbf{G}u)$.

\begin{remark}[Flexibility of the model specification] \label{r:model_flexibility}
    Our model formulation includes some well-known examples for nonlinear price impact models. In particular, it accommodates the model introduced in \cite{alfonsi2010optimal} in the special case where
    \begin{equation}\label{eq:expkern}
        G(t,s) := \xi e^{-x (t-s)} \mathbbm{1}_{\{ t > s \}}, \quad \xi, \; x>0,
    \end{equation}
    as well as the case of nonlinear permanent price impact by setting $G$ to a constant.  Our framework also includes the cases where $G$ is a sum of multiple decaying  exponentials as in \cite{hey2023trading}, or has a power-law decay as illustrated in Section \ref{s:numerics}.
    Finally, taking  the exponential kernel \eqref{eq:expkern} with $\xi = x$, and taking $\gamma \to 0$ and $x \to \infty$ yields a nonlinear instantaneous price impact such that \eqref{eq:def_impact_costs} has the form
    \begin{equation*}
        I_{t}^{u} = h \left( g_t + u_{t} \right),
    \end{equation*}
    which is reminiscent of the impact models in \cite{kolm2014multiperiod,kyle2011market}. 
\end{remark}

\subsection{Main results on existence of the optimal strategy}
Since the performance functional \eqref{eq:gain_functional_alpha_impact_cost_trade_of} is nonlinear and not necessarily convex, one of the main fundamental questions addressed in this section deals with existence and characterization of the optimal strategy. We start by introducing the set of admissible impact functions.
  
\begin{definition} \label{def:admissible_impact_function}
An impact function $h \colon \mathbb{R} \mapsto \mathbb{R}_+$ is said to be admissible if it satisfies the following conditions:
\begin{enumerate}
    \item[\textbf{(i)}] $h$ is differentiable with bounded derivative $h'$, 
    \item[\textbf{(ii)}] $h$ grows strictly sublinearly, i.e., there exists $0 < \zeta < 1$ such that
    \begin{equation*}
        |h(x)| = \mathcal{O} \left( 1+|x|^\zeta \right), \quad x \in \R.
    \end{equation*}
\end{enumerate}
\end{definition}

Given an admissible impact function $h$, using Definition \ref{def:admissible_impact_function}(i), we can define an operator $\mathbf{A} \colon \mathcal{L}^{2} \mapsto \mathcal{L}^{2}$ such that, for every $u\in\mathcal{L}^{2}$ and $t \in [0,T]$,
\begin{equation}\label{eq:def_A_operator}
\begin{aligned}
\!\!\mathbf{A}(u)_{t} & := h(Z^u_t) + (\mathbf{G}^\ast (h'(Z^u)u))_{t}
 \\
&= h \left(g_{t} + \int_0^t G(t,s) u_s \dd s \right) +   \int_{t}^T G(s,t) \E_t\bigg[h'\left({ g_{s} + } \int_0^{s} G(s,r)u_r\dd r\right) u_s \bigg] \dd s,
\end{aligned}
\end{equation}
where  $Z^u$ is defined in \eqref{eq:def_Z}. The following monotonicity condition will play a crucial role in the proof of the existence of a maximizer to \eqref{eq:gain_functional_alpha_impact_cost_trade_of}, 
\begin{equation}\label{eq:definition_monotone}
    \langle u-v,\mathbf A(u)-\mathbf{A}(v) \rangle\ge0,\quad u,\,v\in\mathcal{L}^{2},
\end{equation} 
as it ensures the strong concavity of the functional $\mathcal{J}$, see Lemma \ref{thm:conc_conditions}.

\begin{remark}[Assumptions on the impact function $h$] \label{rem_conditions_h}
    We briefly discuss assumptions (i) and (ii) in Definition \ref{def:admissible_impact_function}. (i) allows for a rigorous G\^ateaux differentiation of the performance functional $\mathcal{J}$ in \eqref{eq:gain_functional} and ensures that the nonlinear operator $\mathbf{A}$ in \eqref{eq:def_A_operator} is well-defined. (ii) ensures that the performance functional is well-defined in $\mathcal{L}^{2}$ and also guarantees  the connectivity of the performance functional, which is needed in order to obtain existence of an optimal trading strategy. 
\end{remark}

\begin{remark}[Linear transient impact]
Note that the case of linear transient price impact, i.e., $h(x) = x$,  \eqref{eq:definition_monotone} is equivalent to requiring the linear integral operator $\mathbf{G}$ in $\mathcal{L}^{2}$ to be positive semi-definite in the following sense, that is, 
    \begin{equation} \label{eq:sdp_operator_def}
        \langle u,\mathbf{G}u \rangle \ge 0, \quad u \in \mathcal{L}^{2} \iff \langle u,\left(\mathbf{G} + \mathbf{G}^{\ast}\right)u \rangle \ge 0, \quad u \in \mathcal{L}^{2},
    \end{equation}
    where the equivalence is a consequence of Fubini's Theorem. Indeed, from  \eqref{eq:def_Z} and \eqref{eq:def_A_operator} it follows that, 
	\begin{equation*}
		\langle u-v,\mathbf A (u)-\mathbf{A}(v) \rangle =
		\langle u-v, \left( \mathbf{G} + \mathbf{G}^{\ast} \right) \left( u - v \right) \rangle.
	\end{equation*}
	Such assumption on $\mathbf G$ is standard to ensure the well-posedness of linear price impact problems and has already appeared in \cite{abijaber2023equilibrium,abi2024trading,abijaberOptPort2024,GSS}.
\end{remark}
Moreover, we introduce the admissible kernel
\begin{equation}\label{eq:kernel_H}
H_{\phi,\varrho}(t,s) := \left(\phi(T-t)+\varrho\right) \mathbbm{1}_{\{ t > s \}}, \quad s,t \in [0,T],
\end{equation}
which captures both the running and terminal inventory soft-penalization terms in the performance functional  \eqref{eq:gain_functional}. In particular, the corresponding operator $\mathbf{H}_{\phi,\varrho}$ in $\mathcal{L}^2$ is positive semi-definite in the sense of \eqref{eq:sdp_operator_def} as shown in Lemma \ref{thm:conc_conditions}.

Our first main result characterizes the optimal strategy in terms of the first-order condition (FOC) in the Gâteaux derivative sense.

\begin{theorem}\label{T:mainnonlinear} Let $h$ be an admissible function as in Definition \ref{def:admissible_impact_function}. Assume that the operator $\mathbf{A}$ in \eqref{eq:def_A_operator} satisfies the monotonicity condition \eqref{eq:definition_monotone}. Then, there exists a unique admissible optimal control $\hat{u} \in \mathcal{L}^{2}$ satisfying \eqref{eq:optimal_strategy}. In particular, the optimal control $\hat{u}$ is the unique $\mathcal{L}^{2}-$valued solution to the nonlinear stochastic Fredholm equation
\begin{equation}\label{eq:nonlinearfredholm}
    \gamma \hat{u}_t + \left(\mathbf A(\hat{u})\right)_{t} + \left( \mathbf{H}_{\phi,\varrho}\hat{u} \right)_{t}
    + \left( \mathbf{H}_{\phi,\varrho}^{*}\hat{u} \right)_{t} = \alpha_t - X_{0} \left( \phi(T-t) + \varrho \right), \quad \left( \dd t \otimes \mathbb{P} \right)-a.e.,
\end{equation}
where $\alpha$ and $H_{\phi,\varrho}$ are defined in \eqref{eq:def_alpha_signal} and \eqref{eq:kernel_H}, respectively.

\end{theorem}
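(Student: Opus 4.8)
The plan is to prove Theorem~\ref{T:mainnonlinear} via the classical variational scheme for monotone operators, combined with a Gâteaux-differentiability computation to identify the first-order condition. I would proceed in four steps.

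\textbf{Step 1: Gâteaux differentiability of $\mathcal{J}$ and identification of the FOC.} Starting from the representation \eqref{eq:gain_functional_alpha_impact_cost_trade_of}, I fix $u,w\in\mathcal{L}^2$ and differentiate $\varepsilon\mapsto\mathcal{J}(u+\varepsilon w)$ at $\varepsilon=0$. Using Definition~\ref{def:admissible_impact_function}(i) (boundedness of $h'$) to justify differentiating under the expectation and integral via dominated convergence (the sublinear growth (ii) gives the integrable dominating functions, together with $\sup_t\E[|X^u_t|^2]<\infty$), the derivative of the impact term $\int_0^T h(Z^u_t)u_t\,\dd t$ produces $\int_0^T h'(Z^u_t)(\mathbf{G}w)_t u_t\,\dd t + \int_0^T h(Z^u_t)w_t\,\dd t$; moving $\mathbf{G}$ onto the other factor via the adjoint \eqref{eq:dual_operator_def} turns the first piece into $\langle w,\mathbf{G}^\ast(h'(Z^u)u)\rangle$, and assembling all terms yields exactly $\langle D\mathcal{J}(u),w\rangle = \langle \alpha - X_0(\phi(T-\cdot)+\varrho) - \gamma u - \mathbf{A}(u) - (\mathbf{H}_{\phi,\varrho}+\mathbf{H}_{\phi,\varrho}^\ast)u,\;w\rangle$, where the quadratic inventory terms contribute the $\mathbf{H}_{\phi,\varrho}+\mathbf{H}_{\phi,\varrho}^\ast$ part (this is the content of Lemma~\ref{thm:conc_conditions}, which I may invoke). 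Hence $\hat u$ maximizes $\mathcal{J}$ if and only if $D\mathcal{J}(\hat u)=0$ in $\mathcal{L}^2$, which is precisely \eqref{eq:nonlinearfredholm}.

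\textbf{Step 2: Strong concavity.} I invoke Lemma~\ref{thm:conc_conditions}: the monotonicity condition \eqref{eq:definition_monotone} on $\mathbf{A}$ together with the positive semi-definiteness of $\mathbf{H}_{\phi,\varrho}$ and the strictly positive $\gamma>0$ gives that the map $u\mapsto -D\mathcal{J}(u)$ is strongly monotone: $\langle u-v,(-D\mathcal{J}(u))-(-D\mathcal{J}(v))\rangle \ge \gamma\|u-v\|^2$. Equivalently, $-\mathcal{J}$ is $\gamma$-strongly convex along lines, which immediately yields \emph{uniqueness} of any critical point and of any maximizer, and reduces the problem to proving existence of a solution to \eqref{eq:nonlinearfredholm}.

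\textbf{Step 3: Existence via monotone operator theory.} Define $\mathbf{T}:\mathcal{L}^2\to\mathcal{L}^2$ by $\mathbf{T}(u) := \gamma u + \mathbf{A}(u) + (\mathbf{H}_{\phi,\varrho}+\mathbf{H}_{\phi,\varrho}^\ast)u$; solving \eqref{eq:nonlinearfredholm} means solving $\mathbf{T}(\hat u) = \alpha - X_0(\phi(T-\cdot)+\varrho)=:F\in\mathcal{L}^2$. From Step 2, $\mathbf{T}$ is strongly monotone with constant $\gamma$. It is also Lipschitz continuous: $\gamma u$ is trivial, $\mathbf{H}_{\phi,\varrho}$ and its adjoint are bounded (admissible kernel \eqref{eq:kernel_H}, estimates \eqref{eq:estimate_on_admissible_G_operator}, \eqref{eq:estimate_on_adjoint}), and $\mathbf{A}$ is Lipschitz because $h,h'$ are bounded/Lipschitz-type — more precisely $\|h(Z^u)-h(Z^v)\|\le \|h'\|_\infty\|\mathbf{G}(u-v)\|\le \|h'\|_\infty\sqrt{TC_G}\|u-v\|$, and the term $\mathbf{G}^\ast(h'(Z^u)u - h'(Z^v)v)$ is handled by adding and subtracting $h'(Z^u)v$, using boundedness of $h'$ for one piece and — this is the delicate point — controlling $\|(h'(Z^u)-h'(Z^v))v\|$. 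For the latter I would additionally use that $h'$ is bounded to get an $L^2$ bound uniformly, or note that if $h'$ is assumed Lipschitz one gets a clean estimate; in fact boundedness of $h'$ alone suffices for hemicontinuity of $\mathbf{A}$, which combined with strong monotonicity and coercivity is enough. Then the Browder–Minty theorem (a monotone, coercive, hemicontinuous operator on a Hilbert space is surjective) gives existence of $\hat u$ with $\mathbf{T}(\hat u)=F$; alternatively, and more self-containedly, since $\mathbf{T}$ is strongly monotone and Lipschitz one can exhibit $\hat u$ as the unique fixed point of the contraction $u\mapsto u - \tau(\mathbf{T}(u)-F)$ for $\tau$ small (Banach fixed point), which also connects naturally to the iterative scheme of Theorem~\ref{pp:conv_scheme}.

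\textbf{Step 4: From the FOC to optimality.} Finally I confirm that the solution $\hat u$ of \eqref{eq:nonlinearfredholm} is indeed the global maximizer: since $-\mathcal{J}$ is convex (Step 2) and Gâteaux differentiable (Step 1) with vanishing derivative at $\hat u$, convexity gives $-\mathcal{J}(v)\ge -\mathcal{J}(\hat u) + \langle D(-\mathcal{J})(\hat u),v-\hat u\rangle = -\mathcal{J}(\hat u)$ for all $v\in\mathcal{L}^2$, so $\hat u$ attains the supremum in \eqref{eq:optimal_strategy}; admissibility of $\hat u$ (i.e.\ $\hat u\in\mathcal{L}^2$) is built into the construction.

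\textbf{Main obstacle.} The subtle part is the rigorous Gâteaux differentiation in Step 1 — exchanging limits with $\E$ and the time integral in the presence of the nonlinearity $h$ and the Volterra convolution — and, relatedly, establishing the continuity/hemicontinuity of $\mathbf{A}$ in Step 3, where the term involving $h'(Z^u)u$ inside $\mathbf{G}^\ast$ mixes the nonlinear coefficient with the control itself; handling the difference $h'(Z^u)u-h'(Z^v)v$ requires care and leans essentially on the boundedness of $h'$ and the a priori $\mathcal{L}^2$ bounds, together with \eqref{eq:estimate_on_adjoint}. Everything else is a routine application of strong monotonicity plus a surjectivity/fixed-point theorem.
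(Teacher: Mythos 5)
Your Steps 1, 2 and 4 coincide with the paper's argument: the Gâteaux derivative computation is exactly Lemma \ref{thm:FOC}, the $\gamma$-strong concavity from monotonicity of $\mathbf{A}$ plus positive semi-definiteness of $\mathbf{H}_{\phi,\varrho}$ is Lemma \ref{thm:conc_conditions}, and the equivalence between maximizers and solutions of the FOC is the content of Theorem \ref{thm:concave}. Where you genuinely diverge is the existence step. The paper never works at the level of the operator equation: it observes that a convex, everywhere-finite, Gâteaux differentiable functional is automatically lower semi-continuous (citing Bauschke--Combettes), that strong convexity gives coercivity for free, and then invokes the direct method / the standard result that a proper, l.s.c., strongly convex functional on a Hilbert space has a unique minimizer. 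This buys existence without ever needing any continuity of $\mathbf{A}$ beyond what the Gâteaux derivative computation already provides.

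Your operator-theoretic route has a concrete gap in the form you state it. The Banach fixed-point alternative requires $\mathbf{T}$, hence $\mathbf{A}$, to be Lipschitz on $\mathcal{L}^{2}$, and it is not under the hypotheses of the theorem: in the difference $\mathbf{G}^{\ast}\bigl(h'(Z^{u})u-h'(Z^{v})v\bigr)$ the cross term $(h'(Z^{u})-h'(Z^{v}))v$ is controlled (even when $h'$ is Lipschitz) only by the pointwise product $|\mathbf{G}(u-v)|\,|v|$, which is a product of two $\mathcal{L}^{2}$ processes and need not lie in $\mathcal{L}^{2}$, let alone be bounded by a constant times $\|u-v\|\,\|v\|$. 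This is precisely why Proposition \ref{pp:conv_scheme} has to impose the extra conditions \eqref{eq:definition_ess_sup}--\eqref{eq:common_ratio_geometric_sequence} and why Theorem \ref{T:existence_uniqueness_beyond_monotonicity} requires $\gamma$ large; no such restriction appears in Theorem \ref{T:mainnonlinear}, so the contraction argument cannot be the right mechanism here. The Browder--Minty alternative is viable, but your claim that ``boundedness of $h'$ alone suffices for hemicontinuity of $\mathbf{A}$'' is too quick: to pass to the limit in $t\mapsto\langle h'(Z^{u+tv})(u+tv),\mathbf{G}w\rangle$ you need $h'(Z^{u+tv}_{s})\to h'(Z^{u}_{s})$ pointwise, which requires continuity of $h'$ — Definition \ref{def:admissible_impact_function}(i) only asserts that $h$ is differentiable with bounded derivative, and a bounded derivative can be discontinuous on a set of positive measure. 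With $h'$ continuous the dominated convergence argument closes this and Browder--Minty does yield existence; without it, the paper's functional route is the one that goes through under the stated hypotheses. You should either add continuity of $h'$ as a (harmless in practice) assumption and complete the hemicontinuity estimate, or replace Step 3 by the direct method on $-\mathcal{J}$ as the paper does.
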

The proof of Theorem \ref{T:mainnonlinear} is given in Section~\ref{S:proofofoptimality}.

\begin{remark}
Assuming the monotonicity condition \eqref{eq:definition_monotone}, Theorem~\ref{T:mainnonlinear} shows that computing the optimal control reduces to solving the nonlinear stochastic Fredholm equation \eqref{eq:nonlinearfredholm}. In general, such equation does not admit explicit solutions, except for the following two limit cases.
\begin{enumerate}
    \item [\textbf{(i)}] The linear impact case $h(x)=x$. Here, \eqref{eq:nonlinearfredholm} reduces to a linear stochastic Fredholm equation which has recently been explicitly solved  in \cite{abijaber2023equilibrium} in terms of operator resolvents.
    \item [\textbf{(ii)}] The case where the slippage costs $\gamma \downarrow 0$, the coefficients $\phi=\varrho=0$, the process $g\equiv0$ and the kernel $G$ is the exponential (see \eqref{eq:expkern}). Then \eqref{eq:nonlinearfredholm} reduces to 
    \begin{equation*}
        h \left(Z_t \right) +  \mathbb E_t\bigg[\int_t^T  G(s,t) h'(Z_s) \dd X_s\bigg] = \alpha_t, \quad t \in [0,T]. 
    \end{equation*}
   We recall that from \eqref{eq:def_Z} we get $Z_t=\int_0^t G(t,s)\dd X_s$. Then from Proposition \ref{P:one_exponential_case} below it follows that $\mathbf{A}$ is monotone. 
    In this setting, one would not expect $X$ to be absolutely continuous with respect to the Lebesgue measure. In fact, for suitable increasing odd impact functions $h \in C^2$ that are concave in $\mathbb{R}_+$, explicit solutions for the volume impact $Z$ are derived in \cite[Corollary 4.4]{hey2023cost} with the presence of bulk trades at $t=0$ and $t=T$, corresponding to Dirac measure at the extremities of the trading horizon.
\end{enumerate}
\end{remark}

Given an admissible impact function $h$, an admissible kernel $G$ and a process $g \in \mathcal{L}^{2}$, it turns out that verifying that the operator $\mathbf{A}$ is indeed monotone in the sense of \eqref{eq:definition_monotone} can be very challenging. The following proposition, which will be proved in Section \ref{s:sufficient_conditions_A_monotone}, provides sufficient conditions on $h$ such that $\mathbf{A}$ is monotone.  
\begin{proposition} \label{P:one_exponential_case}
    Let $h$ be differentiable with bounded derivative, and consider the exponential Volterra kernel 
    \begin{equation} \label{eq:exponential_kernel_def}
        G(t,s) = \mathbbm{1}_{\{t\ge s\}}ae^{-b(t-s)}, \quad t,s\in [0,T], \quad a>0, \; b \ge 0.
    \end{equation}
  Assume that the following hold: 
    \begin{enumerate}
        \item[\textbf{(i)}] $h$ is nondecreasing on $\mathbb{R}$,
        \item[\textbf{(ii)}] $g \equiv 0$,
        \item[\textbf{(iii)}] $x \mapsto xh'(x)$ is nondecreasing on $\mathbb{R}$.
    \end{enumerate}
    Then $\mathbf{A}$ satisfies the monotonicity property \eqref{eq:definition_monotone}.
    
\end{proposition}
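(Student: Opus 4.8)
The plan is to exploit the ODE structure of the exponential kernel. With $g\equiv0$, the process $Z:=Z^u=\mathbf Gu$ is, $\mathbb P$-a.s., absolutely continuous on $[0,T]$ and solves the linear ODE $\dot Z_t=a u_t-bZ_t$ with $Z_0=0$; hence $a u_t=\dot Z_t+bZ_t$ for a.e.\ $t\in[0,T]$, $\mathbb P$-a.s. Write $W:=Z^v$. The first step is to expand the bilinear form: using the definition \eqref{eq:dual_operator_def} of the adjoint $\mathbf G^\ast$ together with $\mathbf G(u-v)=Z^u-Z^v=Z-W$ and the definition \eqref{eq:def_A_operator} of $\mathbf A$, one obtains
\begin{equation*}
  \langle u-v,\mathbf A(u)-\mathbf A(v)\rangle=\mathbb E\!\left[\int_0^T\Big((u_t-v_t)\big(h(Z_t)-h(W_t)\big)+(Z_t-W_t)\big(h'(Z_t)u_t-h'(W_t)v_t\big)\Big)\dd t\right].
\end{equation*}

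The second step is the algebra. Substituting $a u_t=\dot Z_t+bZ_t$ and $a v_t=\dot W_t+bW_t$, and writing $D_t:=Z_t-W_t$, $f_t:=h(Z_t)-h(W_t)$, the integrand multiplied by $a$ becomes
\begin{equation*}
  \dot D_t f_t+D_t\big(h'(Z_t)\dot Z_t-h'(W_t)\dot W_t\big)+bD_tf_t+bD_t\big(Z_th'(Z_t)-W_th'(W_t)\big).
\end{equation*}
By the chain rule for a Lipschitz function composed with an absolutely continuous one (valid a.e., since $h$ is everywhere differentiable with bounded derivative), $h'(Z_t)\dot Z_t=\tfrac{\dd}{\dd t}h(Z_t)$ and likewise for $W$, so the $b$-free part $\dot D_tf_t+D_t\big(h'(Z_t)\dot Z_t-h'(W_t)\dot W_t\big)$ equals the total derivative $\tfrac{\dd}{\dd t}(D_tf_t)$. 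Integrating pathwise over $[0,T]$ and using $D_0=Z_0-W_0=0$ gives the boundary term $D_Tf_T=(Z_T-W_T)\big(h(Z_T)-h(W_T)\big)$, which is $\ge0$ by assumption (i); the remaining integrand $b\big(D_tf_t+D_t(Z_th'(Z_t)-W_th'(W_t))\big)$ is nonnegative $(\dd t\otimes\mathbb P)$-a.e., since $D_tf_t\ge0$ by (i), $D_t(Z_th'(Z_t)-W_th'(W_t))\ge0$ by (iii), and $b\ge0$. As $a>0$, the monotonicity \eqref{eq:definition_monotone} follows. (For $b=0$ only the boundary term survives and (iii) is not needed.)

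The final step is the integrability bookkeeping that makes this rearrangement legitimate. I would record that $Z_T^u\in L^2$, since $|Z_T^u|^2\le a^2T\int_0^T u_s^2\dd s$ by Cauchy--Schwarz, whence $|D_Tf_T|\le\|h'\|_\infty D_T^2\in L^1$ and $\mathbb E[D_Tf_T]$ is finite; that $t\mapsto D_tf_t$ is a.s.\ absolutely continuous on $[0,T]$ as a product of bounded absolutely continuous functions, so the pathwise fundamental theorem of calculus applies; and that the left-hand side $\langle u-v,\mathbf A(u)-\mathbf A(v)\rangle$ is finite because $\mathbf A$ maps $\mathcal L^2$ into $\mathcal L^2$. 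Consequently $\mathbb E\big[b\int_0^T(D_tf_t+D_t(Z_th'(Z_t)-W_th'(W_t)))\dd t\big]$ equals $a\langle u-v,\mathbf A(u)-\mathbf A(v)\rangle-\mathbb E[D_Tf_T]$, hence is a finite nonnegative number, and the conclusion follows. I expect this bookkeeping, together with spotting the substitution $au=\dot Z+bZ$, to be the only genuine subtlety; the rest is routine.
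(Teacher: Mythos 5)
Your proof is correct and follows essentially the same route as the paper's: both exploit the exponential ODE structure $a u_t=\dot Z_t+bZ_t$ to integrate by parts, producing exactly the same three nonnegative pieces --- the boundary term $\mathbb{E}\big[(Z_T-W_T)(h(Z_T)-h(W_T))\big]$ handled by (i), and the $b$-weighted terms $\langle Z-W,\,h(Z)-h(W)\rangle$ and $\langle Z-W,\,Zh'(Z)-Wh'(W)\rangle$ handled by (i) and (iii), with $g\equiv 0$ killing the only problematic contribution. The only difference is organizational: the paper integrates by parts inside the adjoint term $\mathbf{G}^\ast(h'(Z^u)u)$ before pairing with $u-v$ (after first treating a general differentiable kernel), whereas you pair first and then recognize a pathwise total derivative; your integrability bookkeeping is also sound.
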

\begin{remark}
If $h$ is  twice differentiable and $h'>0$ in $\mathbb{R} \setminus A$, for some discrete subset $A\subset \mathbb{R}$, then condition (iii) is equivalent to 
    \begin{equation} \label{eq:arrow_pratt_ratio_condition}
      -x\frac{h''(x)}{h'(x)}\le 1,\quad x\in \mathbb{R}\setminus A. 
    \end{equation}
Note that \eqref{eq:arrow_pratt_ratio_condition} coincides with the sufficient condition on concavity given in Section 4.1 of \cite {hey2023trading}.
\end{remark} 

Since the monotonicity condition \eqref{eq:definition_monotone} is quite difficult to verify, we would like to bypass it and still derive the existence of an optimal strategy to the execution problem \eqref{eq:optimal_strategy}. In the following theorem, which is proved in Section \ref{s:proof_existence_beyond_monotonicity}, we establish such result  when considering a countable underlying probability space $\Omega$.

\begin{theorem} \label{T:existence_beyond_monotonicity}
    Assume that $(\Omega, 2^{\Omega},\mathbb{P})$ is a countable probability space, where $2^{\Omega}$ denotes the power set of $\Omega$. Assume that $h : \mathbb{R}\mapsto\mathbb{R}$ is a globally  Lipschitz continuous function of sublinear growth as in Definition \ref{def:admissible_impact_function}(ii).
    Then, there exists a global maximizer $\hat{u} \in \mathcal{L}^{2}$ for the performance functional $\mathcal{J}$ from \eqref{eq:gain_functional_alpha_impact_cost_trade_of}. If, in addition, $h$ satisfies Definition \ref{def:admissible_impact_function}(i), then $\hat{u}$ is a solution to the nonlinear stochastic Fredholm equation \eqref{eq:nonlinearfredholm}.
\end{theorem}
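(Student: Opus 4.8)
The plan is to apply the direct method of the calculus of variations to the performance functional written in the form \eqref{eq:gain_functional_alpha_impact_cost_trade_of}. Inserting \eqref{eq:def_impact_costs} and $Z^u = g + \mathbf{G}u$ from \eqref{eq:def_Z}, one has $\mathcal{J}(u) = \langle \alpha, u \rangle - \frac{\gamma}{2}\|u\|^2 - \langle h(Z^u), u\rangle - \frac{\phi}{2}\|X^u\|^2 - \frac{\varrho}{2}\mathbb{E}[(X^u_T)^2] + X_0\mathbb{E}[S_T]$. It then suffices to show two things: first, that $-\mathcal J$ is coercive, so that maximizing sequences are bounded in $\mathcal L^2$; and second, that $\mathcal J$ is sequentially weakly upper semi-continuous, so that a weak subsequential limit of a maximizing sequence realizes the supremum. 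For coercivity, since $S\in\mathcal L^2$ we have $\alpha\in\mathcal L^2$, so $\langle\alpha,u\rangle\le\frac{\gamma}{8}\|u\|^2+C$ by Young's inequality; the sublinear growth bound (Definition \ref{def:admissible_impact_function}(ii)), \eqref{eq:estimate_on_admissible_G_operator} and Jensen's inequality on $([0,T],\dd t)$ give $\|h(Z^u)\|\le c(1+\|g\|^\zeta+\|u\|^\zeta)$, whence $|\langle h(Z^u),u\rangle| = o(\|u\|^2)$ as $\|u\|\to\infty$ because $\zeta<1$; the inventory terms are nonpositive. Hence $\mathcal J(u)\le C'-\frac{\gamma}{4}\|u\|^2$ for large $\|u\|$, so $\sup_{\mathcal L^2}\mathcal J<\infty$ and any maximizing sequence $(u^n)_n$ is bounded.

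By reflexivity of $\mathcal L^2$, pass to a subsequence (not relabeled) with $u^n\rightharpoonup\hat u$ weakly; as $\mathcal L^2$ is a norm-closed, hence weakly closed, subspace of $L^2([0,T]\times\Omega,\dd t\otimes\mathbb P)$, the limit $\hat u$ lies in $\mathcal L^2$. The term $\langle\alpha,\cdot\rangle$ is weakly continuous; the maps $u\mapsto X^u$ and $u\mapsto X^u_T$ are bounded affine into $\mathcal L^2$ and $L^2(\Omega,\mathcal F_T,\mathbb P)$, hence weakly continuous, and squared norms are weakly lower semi-continuous. Therefore $\limsup_n$ of every term of $\mathcal J(u^n)$ except $-\langle h(Z^{u^n}),u^n\rangle$ is at most the corresponding quantity at $\hat u$.

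The crux is to show $\langle h(Z^{u^n}),u^n\rangle\to\langle h(Z^{\hat u}),\hat u\rangle$, and this is where both hypotheses of the theorem are used. Discarding the $\mathbb P$-null atoms, enumerate $\Omega=\{\omega_j\}_j$ with $p_j:=\mathbb P(\{\omega_j\})>0$; every $v\in\mathcal L^2$ fibers as $(v(\cdot,\omega_j))_j$ with $v(\cdot,\omega_j)\in L^2([0,T],\dd t)$, and testing weak convergence in $\mathcal L^2$ against $\mathbbm{1}_{\{\omega_j\}}\otimes\psi$ shows $u^n(\cdot,\omega_j)\rightharpoonup\hat u(\cdot,\omega_j)$ weakly in $L^2([0,T])$ for each $j$. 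By \eqref{eq:constant_norm_of_G} the Volterra operator $v\mapsto\int_0^\cdot G(\cdot,s)v(s)\dd s$ is Hilbert--Schmidt, hence compact, on $L^2([0,T])$, so $Z^{u^n}(\cdot,\omega_j)\to Z^{\hat u}(\cdot,\omega_j)$ strongly in $L^2([0,T])$ for each $j$; the global Lipschitz property of $h$ upgrades this to $h(Z^{u^n})(\cdot,\omega_j)\to h(Z^{\hat u})(\cdot,\omega_j)$ strongly in $L^2([0,T])$, and pairing the strong limit against the weakly convergent $u^n(\cdot,\omega_j)$ yields convergence of each fiberwise inner product. To sum over $j$, I estimate the tail uniformly in $n$: Cauchy--Schwarz in $t$, the sublinear bound $\|h(Z^{u^n})(\cdot,\omega_j)\|\le c(1+\|g(\cdot,\omega_j)\|^\zeta+\|u^n(\cdot,\omega_j)\|^\zeta)$, and H\"older's inequality in $j$ with exponents $2/(1+\zeta)$ and $2/(1-\zeta)$, together with $\sup_n\|u^n\|<\infty$, give $\sum_{j\notin F}p_j|\langle h(Z^{u^n})(\cdot,\omega_j),u^n(\cdot,\omega_j)\rangle_{L^2([0,T])}|\le C_0(\sum_{j\notin F}p_j)^{(1-\zeta)/2}$, which tends to $0$ as $F\uparrow\Omega$ uniformly in $n$ (the same bound holds for $\hat u$ since $\|\hat u\|\le\liminf_n\|u^n\|$); combined with termwise convergence of the finite partial sums this proves the claim. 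Note the role of the two assumptions: countability of $\Omega$ lets us pass to fibers on which $\mathbf G$ is compact, turning weak convergence of $u^n$ into strong convergence of $Z^{u^n}$, hence of $h(Z^{u^n})$, while strict sublinearity $\zeta<1$ makes the $\omega$-tails uniformly negligible. This is the main obstacle; the rest is routine. Putting the pieces together, $\sup_{\mathcal L^2}\mathcal J=\lim_n\mathcal J(u^n)=\limsup_n\mathcal J(u^n)\le\mathcal J(\hat u)$, so $\hat u$ is a global maximizer.

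For the last assertion, suppose in addition that $h$ satisfies Definition \ref{def:admissible_impact_function}(i). Then $\mathcal J$ is G\^ateaux differentiable on $\mathcal L^2$ with $\langle D\mathcal J(\hat u),v\rangle=\langle\alpha-X_0(\phi(T-\cdot)+\varrho)-\gamma\hat u-\mathbf A(\hat u)-(\mathbf H_{\phi,\varrho}+\mathbf H_{\phi,\varrho}^{*})\hat u,\,v\rangle$; this is exactly the computation carried out in the proof of Theorem \ref{T:mainnonlinear} in Section \ref{S:proofofoptimality}, and it does not invoke the monotonicity condition \eqref{eq:definition_monotone}, only the regularity of $h$ and the boundedness of $\mathbf G$, $\mathbf G^\ast$, $\mathbf H_{\phi,\varrho}$. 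Since $\hat u$ maximizes $\mathcal J$ over the whole space $\mathcal L^2$, for every $v\in\mathcal L^2$ the scalar map $t\mapsto\mathcal J(\hat u+tv)$ has an interior maximum at $t=0$, so its derivative there vanishes; as $v$ is arbitrary, the bracket is zero $\dd t\otimes\mathbb P$-a.e., which is precisely the nonlinear stochastic Fredholm equation \eqref{eq:nonlinearfredholm}.
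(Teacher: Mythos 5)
Your proposal is correct and follows essentially the same route as the paper: coercivity of $-\mathcal{J}$ from the strict sublinearity of $h$, the direct method with a weakly convergent maximizing sequence, countability of $\Omega$ used to obtain fiberwise weak convergence $u^n(\cdot,\omega)\rightharpoonup\hat u(\cdot,\omega)$ in $L^2([0,T])$, compactness of the Hilbert--Schmidt Volterra operator to upgrade $h(Z^{u^n})(\cdot,\omega)$ to strong convergence, and Fermat's rule for the first-order condition. The only (cosmetic) difference is how you pass from fiberwise to global convergence of $\langle h(Z^{u^n}),u^n\rangle$: you sum the atomwise inner products with a H\"older tail estimate, whereas the paper establishes uniform integrability and invokes Vitali's convergence theorem to get $\|h(Z^{u^n})-h(Z^{\hat u})\|\to 0$ in $\mathcal{L}^2$ before applying Cauchy--Schwarz.
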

 
Finally, in order to achieve uniqueness of the solution beyond the monotonicity condition \eqref{eq:definition_monotone}, we introduce the space 
\begin{equation*}
    \mathcal{L}^{\infty} \left( \Omega, L^{2}([0,T]) \right) := \left\{ f: [0,T] \times \Omega \to \R \; \text{prog.~measurable}, \; \esssup_{\omega \in \Omega} \norm{f(\omega)}_{L^{2}} < \infty \right\},
\end{equation*}
and state the following theorem, which is proved in Section \ref{s:proof_existence_uniqueness_beyond_monotonicity}.

\begin{theorem} \label{T:existence_uniqueness_beyond_monotonicity}
Suppose that the assumptions of Theorem \ref{T:existence_beyond_monotonicity} are satisfied, and that 
\begin{equation} \label{eq:alpha_infinity}
    \alpha,\,g\in\mathcal{L}^{\infty} ( \Omega, L^{2}([0,T])).
\end{equation}
Moreover, let $h\colon\mathbb{R}\mapsto\mathbb{R}$ be differentiable, with bounded and Lipschitz continuous derivative, and assume that $G$ is positive semi-definite and satisfies 
\[
\sup_{t\in[0,T]}\int_t^T|G(s,t)|^2ds<\infty.
\]
Then, if the slippage parameter $\gamma >0$ is sufficiently large, there exists a unique optimal trading strategy $\hat{u}\in\mathcal{L}^2$ for $\mathcal{J}$ satisfying \eqref{eq:optimal_strategy}.    In particular, $\hat{u}\in \mathcal{L}^{\infty} \left( \Omega, L^{2}([0,T]) \right)$ is the unique solution of the nonlinear stochastic Fredholm equation \eqref{eq:nonlinearfredholm} in $\mathcal{L}^2$.
\end{theorem}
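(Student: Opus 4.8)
The plan is to obtain existence from Theorem \ref{T:existence_beyond_monotonicity} and then establish uniqueness together with the $\mathcal{L}^\infty(\Omega, L^2)$-regularity by a contraction/monotonicity argument that is made to work by taking $\gamma$ large. First I would invoke Theorem \ref{T:existence_beyond_monotonicity} directly: since $h$ is differentiable with bounded, Lipschitz derivative it is globally Lipschitz and, being assumed of sublinear growth, it is admissible; hence there exists a global maximizer $\hat u\in\mathcal L^2$ of $\mathcal J$, and it solves the Fredholm equation \eqref{eq:nonlinearfredholm}. It then remains to prove (a) that \emph{any} $\mathcal L^2$-solution of \eqref{eq:nonlinearfredholm} lies in $\mathcal{L}^\infty(\Omega,L^2([0,T]))$, and (b) that the solution in $\mathcal L^2$ is unique.

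For (a), I would read off from \eqref{eq:nonlinearfredholm} a pathwise (fixed-$\omega$) estimate: write $\gamma\hat u = \alpha - X_0(\phi(T-\cdot)+\varrho) - \mathbf A(\hat u) - \mathbf H_{\phi,\varrho}\hat u - \mathbf H_{\phi,\varrho}^*\hat u$. Using \eqref{eq:estimate_on_admissible_G_operator}, \eqref{eq:estimate_on_adjoint} for both $\mathbf G$ (from $C_G$) and for the ``transpose'' kernel $G(s,t)$ (finite by the hypothesis $\sup_t\int_t^T|G(s,t)|^2ds<\infty$), together with $\|h(Z^{\hat u})\|$ controlled by the sublinear growth of $h$ and $\|g\|, \|\alpha\|$, and $h'$ bounded, one gets $\gamma\|\hat u\| \le c_1 + c_2\|\hat u\|$ with constants independent of $\gamma$ once $\gamma$ exceeds a threshold; more importantly, the same computation performed with the $L^2([0,T])$-norm \emph{inside} a fixed $\omega$ (the conditional expectations $\mathbf E_t$ only help, by conditional Jensen, as in \eqref{eq:estimate_on_adjoint}) shows $\gamma\|\hat u(\omega)\|_{L^2}\le C(1+\|\alpha(\omega)\|_{L^2}+\|g(\omega)\|_{L^2}) + C'\|\hat u(\omega)\|_{L^2}$, and absorbing $C'\|\hat u(\omega)\|_{L^2}$ for $\gamma> C'$ and taking $\esssup_\omega$ gives $\hat u\in\mathcal{L}^\infty(\Omega,L^2)$ using \eqref{eq:alpha_infinity}. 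This is where largeness of $\gamma$ is genuinely used for the a priori bound.

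For (b), suppose $u,v\in\mathcal L^2$ both solve \eqref{eq:nonlinearfredholm}. Subtracting, testing against $u-v$, and using that $\mathbf H_{\phi,\varrho}+\mathbf H_{\phi,\varrho}^*$ and $\mathbf G+\mathbf G^*$ are positive semi-definite (the latter by the hypothesis that $G$ is positive semi-definite, via \eqref{eq:sdp_operator_def}), I would show
\begin{equation*}
    \gamma\|u-v\|^2 + \langle u-v,\mathbf A(u)-\mathbf A(v)\rangle + \langle u-v,(\mathbf H_{\phi,\varrho}+\mathbf H_{\phi,\varrho}^*)(u-v)\rangle = 0,
\end{equation*}
so it suffices to bound $\langle u-v,\mathbf A(u)-\mathbf A(v)\rangle \ge -L\|u-v\|^2$ with $L$ depending only on $\|h'\|_\infty$, the Lipschitz constant of $h'$, $C_G$, $\sup_t\int_t^T|G(s,t)|^2ds$, and (through the already-established $\mathcal L^\infty(\Omega,L^2)$ bounds and \eqref{eq:alpha_infinity}) on $\|\alpha\|_{\mathcal L^\infty(\Omega,L^2)}$, $\|g\|_{\mathcal L^\infty(\Omega,L^2)}$. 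Concretely, $\mathbf A(u)-\mathbf A(v) = \big(h(Z^u)-h(Z^v)\big) + \mathbf G^*\big(h'(Z^u)u - h'(Z^v)v\big)$; the first term is Lipschitz in $u-v$ via $\|h'\|_\infty$ and \eqref{eq:estimate_on_admissible_G_operator}, and for the second I would write $h'(Z^u)u-h'(Z^v)v = h'(Z^u)(u-v) + (h'(Z^u)-h'(Z^v))v$, bounding the first piece by $\|h'\|_\infty\|u-v\|$ and the second by $\mathrm{Lip}(h')\,\|Z^u-Z^v\|\cdot\|v\|_{L^\infty_\omega L^2}\le \mathrm{Lip}(h')\sqrt{TC_G}\,\|u-v\|\cdot\|v\|_{\mathcal L^\infty(\Omega,L^2)}$ (this last Cauchy–Schwarz step is exactly why the $\mathcal L^\infty(\Omega,L^2)$ control from part (a) is needed, and why it must be applied to solutions, not just the maximizer — so uniqueness should first be proved within $\mathcal{L}^\infty(\Omega,L^2)$ and then upgraded by noting any $\mathcal L^2$-solution lands there by (a)). Then choosing $\gamma > L$ forces $u=v$.

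The main obstacle I anticipate is part (a): making the a priori $\mathcal{L}^\infty(\Omega,L^2([0,T]))$ bound rigorous, because the adjoint operator $\mathbf G^*$ in \eqref{eq:dual_operator_def} mixes the time integral with conditional expectations $\mathbf E_t[\cdot]$, and one must check that the pathwise $L^2([0,T])$-norm of $\mathbf G^*(h'(Z^{\hat u})\hat u)$ is genuinely controlled by $\esssup_\omega\|\hat u(\omega)\|_{L^2}$ — the conditional expectation is not an $\omega$-pathwise operation, so the estimate has to be routed through the $\esssup$ and conditional Jensen (as in the derivation of \eqref{eq:estimate_on_adjoint}) rather than done truly path-by-path. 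Once that bound is in place, the uniqueness argument in (b) is a fairly standard monotonicity-for-large-$\gamma$ computation.
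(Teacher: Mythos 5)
Your plan is correct in overall structure and reaches the same conclusion, but the uniqueness step takes a genuinely different route from the paper. The paper does not run a direct monotonicity estimate on the difference of two solutions; instead it proves an a priori bound (Lemma \ref{a_priori_lemma} and Corollary \ref{cor_Linfinity}) showing that for $\gamma$ large every $\mathcal{L}^2$-solution of \eqref{eq:nonlinearfredholm} lies in $\mathcal{L}^{\infty}(\Omega,L^2([0,T]))$ and satisfies the smallness condition \eqref{eq:common_ratio_geometric_sequence}, and then invokes Proposition \ref{pp:conv_scheme}: the \emph{same} iterative sequence $(u^{[n]})_n$ converges to any such solution, so all of them coincide. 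Your direct argument --- subtract the two equations, pair with $u-v$, split $h'(Z^u)u-h'(Z^v)v=h'(Z^u)(u-v)+(h'(Z^u)-h'(Z^v))v$, and control the second piece via the Lipschitz constant of $h'$, the Cauchy--Schwarz step of \eqref{eq:bound_product_term}, and the $\mathcal{L}^{\infty}(\Omega,L^2)$ bound on $v$ --- uses exactly the estimates appearing in the proof of Proposition \ref{pp:conv_scheme} and yields essentially the same threshold $\gamma>\widetilde{C}$, so it is a valid and arguably more self-contained alternative; what the paper's route buys is that uniqueness comes for free from an already-proved convergence result and simultaneously gives a constructive approximation of $\hat u$.

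The one place where your sketch is thinner than what is actually required is part (a). The pathwise inequality $\gamma\norm{\hat u(\omega)}_{L^2}\le C(1+\norm{\alpha(\omega)}_{L^2}+\norm{g(\omega)}_{L^2})+C'\norm{\hat u(\omega)}_{L^2}$ cannot be obtained directly, because the adjoint term $(\mathbf{G}^{\ast}(h'(Z^{\hat u})\hat u))_t(\omega)=\int_t^T G(s,t)\E_t[h'(Z^{\hat u}_s)\hat u_s](\omega)\,\dd s$ involves values of $\hat u$ on other atoms of $\mathcal F_t$, so it is not controlled by $\norm{\hat u(\omega)}_{L^2}$ at the same $\omega$; and bounding the conditional expectation by $\esssup_{\omega}\norm{\hat u(\omega)}_{L^2}$ is circular, since finiteness of that quantity is what one is trying to prove. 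The paper's Lemma \ref{a_priori_lemma} resolves this with a two-step bootstrap: it introduces $m^{(t)}_v=\mathbbm{1}_{\{v>t\}}\E_t[|\hat u_v|]$, splits the kernels at time $t$ into a past part (genuinely pathwise in $\hat u(\omega)$ on $[0,t]$) and a future part, derives a self-improving bound $\norm{m^{(t)}}_{L^2}\le(\gamma-2\sqrt{T}\widetilde{C}_{H,G})^{-1}\norm{Y^{(t)}}_{L^2}$ with $Y^{(t)}$ controlled pathwise by $\norm{\hat u(\omega)}_{L^2}$ and the conditional moments of the signals, and only then absorbs, using that $\norm{\hat u(\omega)}_{L^2}<\infty$ a.s.\ for an $\mathcal{L}^2$-solution. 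You correctly flag this as the main obstacle, but the resolution you propose would need to be replaced by (or expanded into) this kind of bootstrap to be rigorous.
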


\begin{remark} 
    The countability assumption on the underlying probability space in Theorem \ref{T:existence_beyond_monotonicity} essentially means that the existence of an optimal trading strategy continues to hold for real-world tactical discrete-time trading activities, where states of the world are well captured within such framework. The additional assumption in Theorem \ref{T:existence_uniqueness_beyond_monotonicity} of a sufficiently large slippage parameter with respect to the the trading horizon and other impact model parameters ensures the regularization of the Fredholm equation \eqref{eq:nonlinearfredholm} (see Section \ref{s:proof_existence_uniqueness_beyond_monotonicity} for additional details). 
\end{remark}

\subsection{Stability results for the optimal control}
 In this section, we establish stability results for the optimal trading strategy $\hat{u} \in \mathcal{L}^{2}$ satisfying \eqref{eq:optimal_strategy}, with respect to the transient impact kernel $G$ and the input signals $\alpha$ and $g$. Specifically, the stability properties of the control problem with respect to these input parameters, as stated in Propositions~\ref{pp:stability_with_respect_to_kernel}--\ref{pp:stability_with_respect_to_signals}, provide a theoretical foundation for our numerical implementation in Section~\ref{s:numerics}. There, the power-law decay is approximated by weighted sums of exponential decays, and stochastic signals are discretized on the probability space. The proofs of these results are deferred to Section~\ref{s:proofs_stability_results}.

\begin{proposition} \label{pp:stability_with_respect_to_kernel}
    Let $h$ be differentiable with bounded derivative. For every $n\in \mathbb{N}$, let $G_n$ be an admissible kernel as in Definition \ref{def:admissible_kernel} and define the operator $\mathbf{A}_n $ (resp.~the functional $\mathcal{J}_n$) as in \eqref{eq:def_A_operator} (resp.~\eqref{eq:gain_functional}), with $G_n$ instead of $G$. Suppose that $\mathbf{A}$ and $\big(\mathbf{A}_n\big)_{n\in\mathbb{N}}$ are monotone in the sense of \eqref{eq:definition_monotone}, and that 
    \begin{equation} \label{eq:condition_stability}
        \sup_{t\in [0,T]} \int_{0}^{t}\left|G_n(t,s)-G(t,s)\right|^2\dd s = \emph{o}(1), \quad \text{as } n \to \infty.
    \end{equation}
    Denote by  $\hat{u}_n$ (resp.~$\hat{u}$) the optimal strategies maximizing $\mathcal{J}_n$ (resp.~$\mathcal{J}$). Then the sequence $(\hat{u}_n)_{n \in \mathbb{N}}$ is bounded in $\mathcal{L}^{2}$. If, additionally, the derivative of $h$ is Lipschitz continuous and 
    \begin{equation} \label{eq:assumption_on_optimal_strategy_for_stability}
        \mathbb{E} \Bigg[ \bigg(\int_{0}^{T} \hat{u}_{t}^{2} \dd t \bigg)^{2} \Bigg] < \infty,
    \end{equation}
    then $\lim_{n\to\infty} \hat{u}_n=\hat{u}$ in $\mathcal{L}^{2}$.
\end{proposition}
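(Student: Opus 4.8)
The plan is to first establish the uniform $\mathcal{L}^2$-bound on $(\hat u_n)_{n}$ by exploiting the optimality of each $\hat u_n$ together with coercivity/strong-concavity estimates that are uniform in $n$. Since each $\mathbf A_n$ is monotone, Lemma \ref{thm:conc_conditions} (via the strong concavity of $\mathcal J_n$ coming from the $\tfrac{\gamma}{2}u^2$ term and the positive semi-definiteness of $\mathbf H_{\phi,\varrho}$) gives $\mathcal J_n(\hat u_n)\ge \mathcal J_n(0)$, and expanding $\mathcal J_n$ as in \eqref{eq:gain_functional_alpha_impact_cost_trade_of} yields an inequality of the form $\gamma\|\hat u_n\|^2 \le C\big(1+\|\hat u_n\| + \|\hat u_n\|^{1+\zeta}\big)$, where the constant $C$ depends on $\|\alpha\|$, $\|g\|$, $X_0$, $\phi$, $\varrho$, $T$ and $\sup_n C_{G_n}$ — the latter being finite because \eqref{eq:condition_stability} forces $C_{G_n}\to C_G$. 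Here one uses the sublinear growth \textbf{(ii)} of $h$ to control the $h(Z^{\hat u_n}_t)$ contribution by $\mathcal O(1+\|\hat u_n\|^\zeta)$ after Cauchy–Schwarz and \eqref{eq:estimate_on_admissible_G_operator}. Since $\zeta<1$ and $\gamma>0$, Young's inequality absorbs the superlinear-looking terms and delivers $\sup_n\|\hat u_n\|<\infty$.

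For the convergence $\hat u_n \to \hat u$, the key is the first-order characterization: by Theorem \ref{T:mainnonlinear}, $\hat u_n$ solves \eqref{eq:nonlinearfredholm} with $(\mathbf A_n,\mathbf G_n)$ and $\hat u$ solves it with $(\mathbf A,\mathbf G)$. Subtracting the two equations, testing against $\hat u_n-\hat u$, and using monotonicity of $\mathbf A$ (i.e.\ $\langle \hat u_n-\hat u,\mathbf A(\hat u_n)-\mathbf A(\hat u)\rangle\ge 0$) together with positive semi-definiteness of $\mathbf H_{\phi,\varrho}+\mathbf H_{\phi,\varrho}^*$ isolates
\begin{equation*}
    \gamma\|\hat u_n-\hat u\|^2 \le \big\langle \hat u_n-\hat u,\ \mathbf A(\hat u_n)-\mathbf A_n(\hat u_n)\big\rangle,
\end{equation*}
so it suffices to show the right-hand side is $o(1)$. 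By Cauchy–Schwarz and the uniform bound on $\|\hat u_n\|$, this reduces to proving $\|\mathbf A(\hat u_n)-\mathbf A_n(\hat u_n)\| = o(1)$. Expanding $\mathbf A$ and $\mathbf A_n$ from \eqref{eq:def_A_operator}, the difference splits into (a) $h(Z^{\hat u_n})-h(Z^{n,\hat u_n})$, controlled by $\|h'\|_\infty\|\mathbf G\hat u_n-\mathbf G_n\hat u_n\|\le \|h'\|_\infty\sqrt{T}\,\big(\sup_t\!\int_0^t|G_n-G|^2\big)^{1/2}\|\hat u_n\|$, which is $o(1)$ by \eqref{eq:condition_stability}; and (b) the adjoint term $\mathbf G^*(h'(Z^{\hat u_n})\hat u_n)-\mathbf G_n^*(h'(Z^{n,\hat u_n})\hat u_n)$, which one further splits as $(\mathbf G^*-\mathbf G_n^*)(h'(Z^{\hat u_n})\hat u_n) + \mathbf G_n^*\big((h'(Z^{\hat u_n})-h'(Z^{n,\hat u_n}))\hat u_n\big)$. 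The first piece is handled by the analogue of \eqref{eq:estimate_on_adjoint} for $G-G_n$ applied to $h'(Z^{\hat u_n})\hat u_n$, whose norm is $\le \|h'\|_\infty\|\hat u_n\|$; the second uses the Lipschitz continuity of $h'$ and \eqref{eq:estimate_on_adjoint} for $G_n$, giving a bound proportional to $\|(Z^{\hat u_n}-Z^{n,\hat u_n})\hat u_n\|$.

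The main obstacle is precisely term (b)'s second piece: $\|(Z^{\hat u_n}-Z^{n,\hat u_n})\hat u_n\|$ involves a \emph{product} of two $\mathcal{L}^2$ functions, which is not controlled by $\mathcal{L}^2$-norms alone — this is exactly why the extra hypothesis \eqref{eq:assumption_on_optimal_strategy_for_stability}, $\mathbb E[(\int_0^T\hat u_t^2\,\dd t)^2]<\infty$, is imposed. With it, one writes $\|(Z^{\hat u_n}-Z^{n,\hat u_n})\hat u_n\|^2 = \mathbb E\big[\int_0^T (Z^{\hat u_n}_t-Z^{n,\hat u_n}_t)^2\hat u_{n,t}^2\,\dd t\big]$ and, since $Z^{\hat u_n}_t-Z^{n,\hat u_n}_t=(\mathbf G-\mathbf G_n)\hat u_n)_t$ with $\sup_t|((\mathbf G-\mathbf G_n)\hat u_n)_t|^2\le \big(\sup_t\!\int_0^t|G_n-G|^2\big)\int_0^T\hat u_{n,s}^2\,\dd s$ by Cauchy–Schwarz, bounds the whole quantity by $\big(\sup_t\!\int_0^t|G_n-G|^2\big)\,\mathbb E\big[(\int_0^T\hat u_{n,s}^2\,\dd s)^2\big]$. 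One must therefore also verify that \eqref{eq:assumption_on_optimal_strategy_for_stability}-type fourth-moment bounds hold \emph{uniformly} in $n$ for $\hat u_n$, or argue that the stated hypothesis on $\hat u$ suffices after estimating $\hat u_n$ in terms of $\hat u$; a clean route is to first prove (a) and (b)-first-piece give $\hat u_n\to\hat u$ weakly, then bootstrap. Combining all pieces, $\gamma\|\hat u_n-\hat u\|^2 = o(1)$, and since $\gamma>0$ we conclude $\hat u_n\to\hat u$ in $\mathcal{L}^2$.
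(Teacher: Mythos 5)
Your overall strategy is the right one and matches the paper's: characterize $\hat u_n$ and $\hat u$ via the Fredholm equations, subtract, test against $\hat u_n-\hat u$, discard the $\mathbf H_{\phi,\varrho}+\mathbf H_{\phi,\varrho}^{*}$ term by positive semi-definiteness, use a monotonicity inequality to reduce to an operator-difference term, and control that term via \eqref{eq:condition_stability} together with the fourth-moment hypothesis for the product term. However, there is a genuine gap in how you deploy monotonicity. You split
$\mathbf A(\hat u)-\mathbf A_n(\hat u_n)=\bigl(\mathbf A(\hat u)-\mathbf A(\hat u_n)\bigr)+\bigl(\mathbf A(\hat u_n)-\mathbf A_n(\hat u_n)\bigr)$
and drop the first summand by monotonicity of $\mathbf A$, which leaves you with $\mathbf A(\hat u_n)-\mathbf A_n(\hat u_n)$, i.e.\ the two operators evaluated at the \emph{varying} argument $\hat u_n$. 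As you yourself observe, the resulting product term $\bigl((\mathbf G-\mathbf G_n)\hat u_n\bigr)\hat u_n$ then requires a fourth-moment bound on $\hat u_n$ \emph{uniformly in $n$}, which is not among the hypotheses (only $\hat u$ is assumed to satisfy \eqref{eq:assumption_on_optimal_strategy_for_stability}), and your proposed remedy ("prove weak convergence first, then bootstrap") is not carried out and is not obviously viable. The paper sidesteps this entirely by splitting the other way, $\mathbf A(\hat u)-\mathbf A_n(\hat u_n)=\bigl(\mathbf A(\hat u)-\mathbf A_n(\hat u)\bigr)+\bigl(\mathbf A_n(\hat u)-\mathbf A_n(\hat u_n)\bigr)$, and dropping the \emph{second} summand by monotonicity of $\mathbf A_n$; the surviving term $\mathbf A(\hat u)-\mathbf A_n(\hat u)$ is evaluated at the fixed $\hat u$, so the product term becomes $\bigl((\mathbf G-\mathbf G_n)\hat u\bigr)\hat u$ and \eqref{eq:assumption_on_optimal_strategy_for_stability} is exactly what is needed. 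This one-line change in the decomposition is the missing idea.

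A second, lesser issue concerns the boundedness of $(\hat u_n)_n$: your coercivity argument invokes the strict sublinear growth of $h$ (Definition \ref{def:admissible_impact_function}(ii)), which is not listed among the hypotheses of the proposition (only differentiability with bounded derivative is assumed for the boundedness claim); with merely linear growth of $h$ your inequality $\gamma\|\hat u_n\|^2\le C(1+\|\hat u_n\|+\|\hat u_n\|^{1+\zeta})$ degenerates to a quadratic right-hand side that cannot be absorbed without an additional largeness condition on $\gamma$. The paper instead obtains $\sup_n\|\hat u_n\|<\infty$ as a free byproduct of the same subtraction estimate, using only the rough bound $\|(h'(Z^{\hat u})-h'(Z^{n,\hat u}))\hat u\|\le\sqrt2\,\|h'\|_\infty\|\hat u\|$ in place of the Lipschitz refinement, so no growth condition on $h$ and no fourth moment are needed at that stage.
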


\begin{remark}
    Consider the space
    \begin{equation*}
        \mathcal{L}^{4} \left( \Omega, L^{2}([0,T]) \right) := \left\{ f: [0,T] \times \Omega \to \R \; \text{prog.~measurable}, \; \mathbb{E}\Big[\norm{f}^4_{L^{2}}\Big] < \infty \right\}.
    \end{equation*}
    Then, an application of the estimate shown in Lemma \ref{a_priori_lemma}, along with the tower property and conditional Jensen's inequality, yields that any solution $\hat{u}$ of \eqref{eq:nonlinearfredholm} belongs to $\mathcal{L}^{4} \left( \Omega, L^{2}([0,T]) \right)$, i.e., assumption \eqref{eq:assumption_on_optimal_strategy_for_stability} is satisfied, whenever $\alpha,\,g\in \mathcal{L}^{4} \left( \Omega, L^{2}([0,T]) \right)$.
\end{remark}

\begin{proposition} \label{pp:stability_with_respect_to_signals}
    Assume that $h$ is Lipschitz continuous and satisfies  Definition \ref{def:admissible_impact_function}(ii). For every $n\in\mathbb{N}$, consider an alpha-signal $\alpha_n\in\mathcal{L}^{2}$ as in \eqref{eq:def_alpha_signal} and a process  $g_n\in\mathcal{L}^2$ as in \eqref{eq:def_Z}. Define the corresponding functional $\mathcal{J}_n $ as in \eqref{eq:gain_functional} with $\alpha_n$ and $g_n$ instead of $\alpha$ and $g$. Suppose that 
    \begin{equation*}
        \|\alpha_n-\alpha\|=\emph{o}(1), \quad\text{and}\quad \|g_n-g\|=\emph{o}(1),\quad \text{as } n \to \infty.
    \end{equation*}
    Then it holds that
    \begin{equation*}
        \sup_{u\in\mathcal{L}^2} \mathcal{J}_n(u) \to \sup_{u\in\mathcal{L}^2} \mathcal{J}(u),\quad \text{as } n \to \infty.
    \end{equation*}
\end{proposition}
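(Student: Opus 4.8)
The plan is to bound the difference of the optimal values $\sup_{u}\mathcal{J}_n(u)$ and $\sup_{u}\mathcal{J}(u)$ by a uniform estimate on $|\mathcal{J}_n(u)-\mathcal{J}(u)|$ over a suitable bounded subset of $\mathcal{L}^2$ that is guaranteed to contain (near-)maximizers. First I would record the elementary inequality
\[
\Big|\sup_{u\in\mathcal{L}^2}\mathcal{J}_n(u)-\sup_{u\in\mathcal{L}^2}\mathcal{J}(u)\Big|\le \sup_{u\in K}\big|\mathcal{J}_n(u)-\mathcal{J}(u)\big|
\]
once we know that both suprema are attained (or approached) on a common bounded set $K=\{\|u\|\le R\}$. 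The key qualitative input here is \emph{coercivity}: using the sublinear growth of $h$ and the expansion \eqref{eq:gain_functional_alpha_impact_cost_trade_of}, one shows that $\mathcal{J}(u)\to-\infty$ (and likewise $\mathcal{J}_n(u)\to-\infty$) as $\|u\|\to\infty$, with a modulus that is \emph{uniform in $n$} because $\|\alpha_n\|,\|g_n\|$ are convergent hence bounded. This is essentially the coercivity estimate that underlies the proof of Theorem~\ref{T:existence_beyond_monotonicity}; I would cite/adapt Lemma~\ref{a_priori_lemma} to get an explicit $R$, depending only on $\gamma,\phi,\varrho,T,C_G,X_0$ and the uniform bound on $(\|\alpha_n\|,\|g_n\|)$, such that every point with $\|u\|>R$ satisfies $\mathcal{J}(u)<\mathcal{J}(0)$ and $\mathcal{J}_n(u)<\mathcal{J}_n(0)$ for all $n$; hence both suprema coincide with the suprema over $K=\{\|u\|\le R\}$.

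Next I would estimate $|\mathcal{J}_n(u)-\mathcal{J}(u)|$ for $u\in K$. Writing $Z^u=g+\mathbf{G}u$ and $Z^u_n=g_n+\mathbf{G}u$, the only terms in \eqref{eq:gain_functional_alpha_impact_cost_trade_of} that change with $n$ are the signal term $\mathbb{E}\!\int_0^T(\alpha_n-\alpha)u_t\,\dd t$ and the impact term $\mathbb{E}\!\int_0^T\big(h(Z^u_t)-h(Z^u_{n,t})\big)u_t\,\dd t$ (the constant $X_0\mathbb{E}[S_T]$ is unaffected up to replacing it by $X_0\mathbb{E}[S_{n,T}]$, which one must also track if $\alpha_n$ comes from a different $S_n$; alternatively one simply absorbs it since $\alpha_n\to\alpha$ controls the relevant part). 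By Cauchy–Schwarz the first term is at most $\|\alpha_n-\alpha\|\,\|u\|\le R\,\|\alpha_n-\alpha\|$. For the second, Lipschitz continuity of $h$ with constant $L_h$ gives $\|h(Z^u)-h(Z^u_n)\|\le L_h\|g-g_n\|$, so by Cauchy–Schwarz that term is at most $L_h\,\|g_n-g\|\,\|u\|\le R\,L_h\,\|g_n-g\|$. Hence
\[
\sup_{u\in K}\big|\mathcal{J}_n(u)-\mathcal{J}(u)\big|\le R\big(\|\alpha_n-\alpha\|+L_h\|g_n-g\|\big)+o(1)=o(1),
\]
which combined with the first display yields the claim.

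The main obstacle, and the point requiring the most care, is the \emph{uniformity in $n$} of the coercivity radius $R$: one must check that the a priori bound from Lemma~\ref{a_priori_lemma} (or the direct coercivity computation) depends on the signals only through $\|\alpha_n\|$ and $\|g_n\|$, which are bounded since they converge, so that a single $R$ works for all $n$ simultaneously. A secondary subtlety is bookkeeping the fundamental-price/terminal term: strictly, $\alpha_{n,t}=\mathbb{E}_t[S_{n,T}-S_{n,t}]$ and \eqref{eq:gain_functional_alpha_impact_cost_trade_of} carries an additive constant $X_0\mathbb{E}[S_{n,T}]$; since this constant is the same for every admissible $u$, it does not affect the argmax, and $\sup_u\mathcal{J}_n=X_0\mathbb{E}[S_{n,T}]+\sup_u\widetilde{\mathcal{J}}_n(u)$ where $\widetilde{\mathcal{J}}_n$ depends on $n$ only through $(\alpha_n,g_n)$; one then applies the estimate above to $\widetilde{\mathcal{J}}_n$ and notes $X_0\mathbb{E}[S_{n,T}]\to X_0\mathbb{E}[S_T]$ follows from $\|\alpha_n-\alpha\|\to0$ together with the stated convergence of the terminal values (or is simply assumed as part of the data). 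No compactness or lower-semicontinuity machinery is needed here — unlike in Theorem~\ref{T:existence_beyond_monotonicity} — because we only compare \emph{values}, not extract limiting maximizers.
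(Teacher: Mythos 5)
Your proposal is correct and takes essentially the same route as the paper: bound $|\mathcal{J}_n(u)-\mathcal{J}(u)|\le\left(\|\alpha_n-\alpha\|+L_h\|g_n-g\|\right)\|u\|$ via Cauchy--Schwarz and the Lipschitz property of $h$, then use coercivity --- uniform in $n$ because the convergent sequences $(\alpha_n)$, $(g_n)$ are bounded --- to restrict both suprema to a common ball before passing to the limit. The only small correction is that the uniform radius should come from the coercivity estimates \eqref{eq:estimate_on_J_for_coercivity} and \eqref{eq:estimate_on_h} in Lemma~\ref{lemma:coercive} rather than from Lemma~\ref{a_priori_lemma}, which is an a priori bound on solutions of the Fredholm equation and is not what is needed here.
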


\subsection{An iterative numerical scheme}

In this subsection, we propose an iterative scheme which leverages the explicit solutions of linear stochastic Fredholm equations in order to derive a solution to the non-linear Fredholm equation \eqref{eq:nonlinearfredholm}. Recall that by Theorem \ref{T:mainnonlinear}, when the $\mathcal{L}^2$-operator $\mathbf{A}$ is monotone in the sense of \eqref{eq:definition_monotone}, there exists a unique solution to \eqref{eq:nonlinearfredholm} which is the optimal trading strategy with respect to \eqref{eq:optimal_strategy}. To execute our scheme, we subtract from the operator $\mathbf A$ the linear price impact,  
\begin{equation} \label{eq:Atilde}
    \big(\tilde{\mathbf{A}}(u)\big)_{t} := \left(\mathbf{A}(u)\right)_{t}  - \left(\mathbf{G}u\right)_{t} - \left(\mathbf{G}^{*}u\right)_{t}, \quad t \in [0,T], \quad u \in \mathcal{L}^{2},   
\end{equation}
so that  \eqref{eq:nonlinearfredholm}
becomes for for all $t \in [0,T]$, 
\begin{equation} \label{eq:nonlinear_fredholm_for_scheme}
    \gamma u_t + \left(\left(\mathbf{G} + \mathbf{H}_{\phi,\varrho}\right)u\right)_{t} + \left(\left(\mathbf{G} + \mathbf{H}_{\phi,\varrho} \right)^{*}u\right)_{t} = \alpha_t - \big(\tilde{\mathbf A}(u)\big)_{t} - X_{0} \left(\phi(T-t)+\varrho\right).
\end{equation}
The iterative scheme is defined as follows:
\begin{enumerate}
   	\item [\textbf{(i)}] \textbf{Initialization:}
		\begin{equation}\label{scheme_initialization}
			u^{[0]}_t := 0, \quad t\in [0,T].
		\end{equation}
    \item [\textbf{(ii)}]  \textbf{Update:} for $n\geq 1$, having  $u^{[n-1]}$, we derive $u^{[n]}$ using \eqref{eq:nonlinear_fredholm_for_scheme} but fixing the nonlinear term to $\tilde{\mathbf A} (u^{[n-1]})$, that is we solve the linear Fredholm equation, 
    \begin{equation} \label{scheme_update}
        \gamma u^{[n]} + \left(\mathbf{G} + \mathbf{H}_{\phi,\varrho}\right)u^{[n]} + \left(\mathbf{G} + \mathbf{H}_{\phi,\varrho} \right)^{*}u^{[n]} = Y[n-1],
    \end{equation}
    where we define the source-term $Y$ by
    \begin{equation} \label{eq:source_term_Fredholm_scheme}
        Y_{t}[n-1] := \alpha_t - \tilde{\mathbf A}(u^{[n-1]})_t - X_{0} \left(\phi(T-t) + \varrho\right), \quad t \in [0,T].
    \end{equation}
\end{enumerate}
In the following proposition we establish that the iterative scheme \eqref{scheme_initialization}-\eqref{eq:source_term_Fredholm_scheme} converges to the solution of \eqref{eq:nonlinearfredholm} under suitable conditions. The proof is postponed to Section \ref{ss:proof_conv_scheme}.

\begin{proposition}\label{pp:conv_scheme}
    Let $h$ be differentiable with bounded and Lipschitz continuous derivative with a Lipschitz constant $L>0$, and let the operator $\mathbf{G}$ induced by the admissible kernel $G$ be positive semi-definite. 
    Denote by $(u^{[n]})_{n \geq 0}$ the iterations of the scheme \eqref{scheme_initialization}-\eqref{eq:source_term_Fredholm_scheme}. Suppose that there exists a solution $\hat{u}\in \mathcal{L}^{2}$ of \eqref{eq:nonlinearfredholm} such that 
    \begin{equation} \label{eq:definition_ess_sup}
        M_{\gamma}(\hat{u}) := \esssup_{\omega \in \Omega} \int_{0}^{T} |\hat{u}_{t}(\omega)|^{2} \dd t<\infty
    \end{equation}
    and 
    \begin{equation} \label{eq:common_ratio_geometric_sequence}
      \widetilde{C} := 2\sqrt{T C_{G}}\Big(1 + \norm{h'}_\infty + \frac{L}{2} \sqrt{C_{G} M_{\gamma}(\hat{u})} \Big) <\gamma,
    \end{equation} 
    where $C_{G}$ is the constant defined by \eqref{eq:constant_norm_of_G}. Then $$\lim_{n\to\infty} u^{[n]} = \hat{u}\quad \text{in $\mathcal{L}^2$.}$$  Moreover, the convergence rate is bounded by, 
    \begin{equation} \label{eq:convergence_rate}
      \norm{u^{[n]}-\hat{u}}  \le \bigg(\frac{\widetilde{C}}{\gamma}\bigg)^n\norm{\hat{u}}, \quad n \in \mathbb{N}.
    \end{equation}
\end{proposition}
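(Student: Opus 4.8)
The plan is to recast the scheme as a Picard iteration for a contraction on $\mathcal{L}^{2}$. Introduce the bounded linear operator
\[
\mathbf{L} := \gamma\,\mathrm{Id} + \big(\mathbf{G}+\mathbf{H}_{\phi,\varrho}\big) + \big(\mathbf{G}+\mathbf{H}_{\phi,\varrho}\big)^{\ast}
\]
on $\mathcal{L}^{2}$. Since $\mathbf{G}$ is positive semi-definite by assumption and $\mathbf{H}_{\phi,\varrho}$ is positive semi-definite by Lemma~\ref{thm:conc_conditions}, $\mathbf{L}$ is self-adjoint and satisfies $\langle w,\mathbf{L}w\rangle\ge\gamma\|w\|^{2}$ for every $w\in\mathcal{L}^{2}$; hence it is boundedly invertible with $\|\mathbf{L}^{-1}\|_{\mathrm{op}}\le 1/\gamma$. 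In particular the linear Fredholm equation \eqref{scheme_update} has a unique $\mathcal{L}^{2}$-solution at every step, the source term lying in $\mathcal{L}^{2}$ by an immediate induction from $u^{[0]}=0$, using $\alpha\in\mathcal{L}^{2}$, the boundedness of the deterministic inventory-penalty term, and the fact that $\tilde{\mathbf{A}}$ maps $\mathcal{L}^{2}$ into itself (which holds because $h$ has at most linear growth and $h'$ is bounded, together with \eqref{eq:estimate_on_admissible_G_operator}--\eqref{eq:estimate_on_adjoint}). Thus $u^{[n]}=\mathbf{L}^{-1}\big(\alpha-\tilde{\mathbf{A}}(u^{[n-1]})-X_{0}(\phi(T-\cdot)+\varrho)\big)$, while \eqref{eq:nonlinear_fredholm_for_scheme} says precisely that $\hat{u}=\mathbf{L}^{-1}\big(\alpha-\tilde{\mathbf{A}}(\hat{u})-X_{0}(\phi(T-\cdot)+\varrho)\big)$. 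Subtracting these identities and using $\|\mathbf{L}^{-1}\|_{\mathrm{op}}\le 1/\gamma$ gives $\|u^{[n]}-\hat{u}\|\le\tfrac{1}{\gamma}\,\|\tilde{\mathbf{A}}(u^{[n-1]})-\tilde{\mathbf{A}}(\hat{u})\|$.

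The heart of the argument is the one-sided Lipschitz estimate $\|\tilde{\mathbf{A}}(u)-\tilde{\mathbf{A}}(\hat{u})\|\le\widetilde{C}\,\|u-\hat{u}\|$ for arbitrary $u\in\mathcal{L}^{2}$. Using $(\mathbf{G}u)_{t}=Z^{u}_{t}-g_{t}$ and setting $f(x):=h(x)-x$ — so that $f'=h'-1$, $\|f'\|_{\infty}\le 1+\|h'\|_{\infty}$, and $f'$ is $L$-Lipschitz since $h'$ is — one rewrites $\tilde{\mathbf{A}}(u)_{t}=f(Z^{u}_{t})+g_{t}+\big(\mathbf{G}^{\ast}(f'(Z^{u})u)\big)_{t}$. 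For the first term, Lipschitzness of $f$ and \eqref{eq:estimate_on_admissible_G_operator} give $\|f(Z^{u})-f(Z^{\hat{u}})\|\le(1+\|h'\|_{\infty})\sqrt{TC_{G}}\,\|u-\hat{u}\|$. For the second term, split
\[
f'(Z^{u})u-f'(Z^{\hat{u}})\hat{u}=f'(Z^{u})(u-\hat{u})+\big(f'(Z^{u})-f'(Z^{\hat{u}})\big)\hat{u},
\]
apply \eqref{eq:estimate_on_adjoint}, and bound the first piece by $(1+\|h'\|_{\infty})\|u-\hat{u}\|$ and the second by $L\sqrt{C_{G}\,M_{\gamma}(\hat{u})}\,\|u-\hat{u}\|$ (see below). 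Adding up the contributions reproduces exactly $\widetilde{C}=2\sqrt{TC_{G}}\big(1+\|h'\|_{\infty}+\tfrac{L}{2}\sqrt{C_{G}M_{\gamma}(\hat{u})}\big)$ from \eqref{eq:common_ratio_geometric_sequence}. Combining with the previous display gives $\|u^{[n]}-\hat{u}\|\le(\widetilde{C}/\gamma)\,\|u^{[n-1]}-\hat{u}\|$, so an induction using $u^{[0]}=0$ yields the rate \eqref{eq:convergence_rate}, and the hypothesis $\widetilde{C}<\gamma$ forces $(\widetilde{C}/\gamma)^{n}\to 0$, i.e.\ $u^{[n]}\to\hat{u}$ in $\mathcal{L}^{2}$.

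I expect the only genuinely non-routine step to be the bound on $\big\|\big(f'(Z^{u})-f'(Z^{\hat{u}})\big)\hat{u}\big\|$, where one cannot merely use $\hat{u}\in\mathcal{L}^{2}$ but must exploit the pathwise bound $\int_{0}^{T}|\hat{u}_{t}(\omega)|^{2}\,\dd t\le M_{\gamma}(\hat{u})$ from \eqref{eq:definition_ess_sup}. Concretely, $|f'(Z^{u}_{t})-f'(Z^{\hat{u}}_{t})|\le L\,|(\mathbf{G}(u-\hat{u}))_{t}|$ and Cauchy--Schwarz applied to the Volterra integral give $|(\mathbf{G}(u-\hat{u}))_{t}|^{2}\le C_{G}\int_{0}^{T}|u_{s}-\hat{u}_{s}|^{2}\,\dd s$ pointwise in $t$; integrating this in $t$ against $|\hat{u}_{t}|^{2}$, pulling the factor $\int_{0}^{T}|\hat{u}_{t}|^{2}\,\dd t\le M_{\gamma}(\hat{u})$ out \emph{before} taking the expectation, and only then taking expectations, delivers $\|(f'(Z^{u})-f'(Z^{\hat{u}}))\hat{u}\|\le L\sqrt{C_{G}M_{\gamma}(\hat{u})}\,\|u-\hat{u}\|$. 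This is exactly why the estimate must be taken asymmetrically — comparing against $\hat{u}$ rather than a generic pair — and why assumption \eqref{eq:definition_ess_sup} is indispensable.
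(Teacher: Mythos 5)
Your proposal is correct and lands on exactly the constant $\widetilde{C}$ of \eqref{eq:common_ratio_geometric_sequence}; the substantive estimates coincide with the paper's, but the linear part is packaged differently. The paper subtracts \eqref{eq:nonlinearfredholm} from the equation satisfied by $u^{[n]}$ and tests the difference against $u^{[n]}-\hat{u}$, using the positive semi-definiteness of $\mathbf{G}$ and of $\mathbf{H}_{\phi,\varrho}$ (Lemma~\ref{thm:conc_conditions}) to discard the quadratic terms, and Cauchy--Schwarz to extract the factor $2\sqrt{TC_{G}}$ from the explicit correction term $(\mathbf{G}+\mathbf{G}^{*})(u^{[n]}-u^{[n-1]})$. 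You instead invert the coercive self-adjoint operator $\mathbf{L}$ with $\norm{\mathbf{L}^{-1}}_{\mathrm{op}}\le 1/\gamma$ and absorb that same $2\sqrt{TC_{G}}$ into the Lipschitz constant of $\tilde{\mathbf{A}}$ at $\hat{u}$ via $f=h-\mathrm{Id}$ and $\norm{f'}_{\infty}\le 1+\norm{h'}_{\infty}$; term by term the two bookkeepings are algebraically equivalent, which is why the contraction factor comes out identical. Your framing buys two things: each iteration becomes an honest Picard step for a self-map of $\mathcal{L}^{2}$, so existence and uniqueness of $u^{[n]}$ at every stage (which the paper delegates to the explicit resolvent formula for linear stochastic Fredholm equations) follows from Lax--Milgram; and it isolates the single place where \eqref{eq:definition_ess_sup} enters, namely the product term $(h'(Z^{u})-h'(Z^{\hat{u}}))\hat{u}$, which you bound exactly as the paper does, by pulling $\int_{0}^{T}|\hat{u}_{t}|^{2}\,\dd t\le M_{\gamma}(\hat{u})$ out of the integrand pathwise before taking expectations. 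The only step worth spelling out in a final write-up is the bound $\norm{\mathbf{L}^{-1}}_{\mathrm{op}}\le 1/\gamma$ itself: it requires coercivity of both $\mathbf{L}$ and $\mathbf{L}^{*}$ (or, as you correctly invoke, self-adjointness plus $\langle w,\mathbf{L}w\rangle\ge\gamma\norm{w}^{2}$), which indeed holds here since $\mathbf{L}$ is $\gamma\,\mathrm{Id}$ plus an operator plus its adjoint.
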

 As a by-product of Proposition \ref{pp:conv_scheme}, there exists at most one solution of \eqref{eq:nonlinearfredholm} that satisfies \eqref{eq:definition_ess_sup} and \eqref{eq:common_ratio_geometric_sequence}. Indeed, if two such solutions exist, then the same sequence $(u^{[n]})_{n\geq 0}$ converges to both solutions, and by uniqueness of the limit, they must be equal.
In the following section, we illustrate the performance of our numerical scheme for various propagators and market scenarios. 

\section{Numerical illustrations} \label{s:numerics}

In this section, we first explain how to implement the numerical scheme \eqref{scheme_initialization}--\eqref{scheme_update} in practice, leveraging the Nyström approximation of linear operators and Least-Squares Monte Carlo (LSMC) to solve at each step the resulting linear stochastic Fredholm equation in \eqref{scheme_update} and how to quantify the numerical error of the scheme. Then, we specify an impact function that satisfies our assumptions, along with signal processes, to illustrate our theoretical findings, including:
\begin{itemize}
    \item[(i)] the convergence of the numerical scheme to an explicit asymptotic solution when the impact function is concave and when $\gamma$ goes to zero;
    
    \item[(ii)] the stability result from Proposition~\ref{pp:stability_with_respect_to_kernel} when approximating the fractional kernel by a sum of exponential decays;
    
    \item[(iii)] the impact of concavity and the comparison between exponential and power-law decay on optimal trading in presence of a ``buy'' signal.
\end{itemize}

\subsection{The iterative scheme in practice} 

Let $n \in \mathbb{N}^{*}$ and consider the scheme \eqref{scheme_initialization}--\eqref{scheme_update}. Given $u^{[n-1]}$, $u^{[n]}$ solves the linear stochastic Fredholm equation \eqref{scheme_update} and is obtained by using the explicit operator formula in \cite[Proposition 5.1]{abijaber2023equilibrium} in terms of $Y[n-1]$ from \eqref{eq:source_term_Fredholm_scheme} and its conditional expectations $\left(\E_{t}Y_{s}[n-1]\right)_{0 \le t < s \le T}$. The only step left is estimating numerically such quantities at each step of the scheme. By the linearity of the conditional expectation, this amounts to estimating for $0 < i < j \leq N-1$
\begin{align} \label{eq:conditional_expectations_to_estimate}
    E_{i,j}[n-1] := & \; \mathbb{E}_{t_{i}} \left[ \widetilde{Y}_{t_{j}}[n-1] \right],\\ \notag
    \; \widetilde{Y}_{t_{j}}[n-1] = & \; \alpha_{t_{j}} - h\left( g_{t_{j}} + \left(\mathbf{G}u^{[n-1]}\right)_{t_{j}} \right) - \mathbf{G}^{\ast} \left( h' \left( g + \mathbf{G}  u^{[n-1]}  \right) u^{[n-1]} \right)_{t_{j}} \\ \notag
    & + \left(\mathbf{G}u^{[n-1]}\right)_{t_{j}} + \left(\mathbf{G}^{\ast}u^{[n-1]}\right)_{t_{j}},
\end{align}
over a discrete time grid $\left\{ 0 = t_{0} < \cdots < t_{N} = T \right\}, \; N \in \mathbb{N}^{*}$, where all the operators are discretized with the Nyström scheme in the same way as in \cite[Section 3.2]{abi2024trading}.

\textbf{Least Square Monte Carlo (LSMC).} For this, we use LSMC techniques in Markovian settings (e.g.~with an Ornstein-Uhlenbeck drift and $G$ given by an exponential, or sum of two exponentials) in the same spirit as \cite[Section 3.3]{abi2024trading}.

Indeed, assume we have $P \in \mathbb{N}^{*}$ (Markovian) regression variables observed at date $t_{i}, \; i \in \{ 1, \cdots, N \}$
$$
\mathbb{X}_{t_{i}} := \left( X_{t_{i}}^{p} \right)_{p \in \{1, \cdots, P\}}
$$
to expand into an orthonormal polynomial basis of maximum degree $d \in \mathbb{N}^{*}$ of the form
\begin{equation} \label{eq:basis_expansion_definition}
    \mathcal{B}^d(\mathbb{X}_{t_{i}}) := \left\{ \Pi_{p=1}^{P} L_{l_{p}} \left( X_{t_{i}}^{p} \right) \; \Big| \; \left( l_{p} \right)_{p \in \{1, \cdots, P\}} \in \mathbb{N}^{P}, \; \sum_{p=1}^{P} l_{p} \leq d \right\},
\end{equation}
where $L_{l_{p}}, \; p \in \mathbb{N}$ denotes the polynomial of degree $l_{p}$ of the basis. Consequently, there are in total
$$
\binom{P+d}{d}
$$
features in the expanded basis \eqref{eq:basis_expansion_definition}, given that we distribute at most 
$d$
degrees among the 
$P$ features, including the possibility of assigning none (see the stars and bars theorem). Finally, the conditional expectations \eqref{eq:conditional_expectations_to_estimate} are estimated by
\begin{equation} \label{eq:estimated_conditional_expectations}
    E_{i,j}[n-1] \approx \langle l_{i,j}^{d}[n-1], \mathcal{B}^d(\mathbb{X}_{t_{i}}) \rangle, 
\end{equation}
where $\left( l_{i,j}^{d}[n-1] \right)_{0 < i < j \leq N-1}$ are obtained by minimizing the Ridge-regularized least squares of the dependent variables $\widetilde{Y}_{t_{j}}[n-1]$ against the explanatory variables $\mathcal{B}^d(\mathbb{X}_{t_{i}})$ over $M$ sample trajectories of the regression variables. 

The impact of the quality of estimation of the conditional expectations in \eqref{eq:estimated_conditional_expectations} on the PnL functional \eqref{eq:gain_functional} as well as the overall convergence of the scheme will be discussed numerically in Section \ref{S:explicitbenchmar} and in Appendix \ref{s:pnl_error_scheme}.

\textbf{Profit and Loss.} From \eqref{eq:gain_functional_alpha_impact_cost_trade_of} we extract  the Profit and Loss (PnL) defined by
\begin{equation*}
    PnL(u) := \mathbb{E}\left[ \int_{0}^{T} \left( \alpha_{t} - I_{t}^{u} \right) u_t \dd t \right], \quad u \in \mathcal{L}^{2}.
\end{equation*}
Numerically, PnL is approximated  over $M \in \mathbb{N}^{*}$ discrete sample trajectories on the uniform time grid with step $\Delta := \frac{T}{N-1} > 0$ by
\begin{equation} \label{eq:empirical_pnl_functional_alpha_impact_cost_trade_of}
    PnL^{N,M}(u) := \frac{\Delta}{M} \sum_{m = 1}^{M} \sum_{i=0}^{N-1} \left( \alpha_{t_{i}}(\omega_{m}) - I_{t_{i}}^{u}(\omega_{m}) \right) u_{t_{i}}(\omega_{m}), \quad u \in \mathcal{L}^{2},
\end{equation}
where $I_{t_{i}}^{u}$ is approximated again by the Nyström scheme.

\textbf{Error metric when solving the nonlinear Fredholm equation \eqref{eq:nonlinearfredholm}.} After $n \in \mathbb{N}^{*}$ iterations of the scheme \eqref{scheme_initialization}--\eqref{scheme_update}, we define the following error metric applied to $u^{[n]}$:
\begin{align*}
    E \left( u^{[n]} \right) := \E \bigg[\int_0^T \bigg| & \gamma u^{[n]}_t(\omega) + \left(\mathbf A \left(u^{[n]}\right)\right)_{t}(\omega) + \left(\mathbf{H}_{\phi,\varrho}u^{[n]}\right)_{t}(\omega) +  \left(\mathbf{H}_{\phi,\varrho}^{\ast}u^{[n]}\right)_{t}(\omega) \\
    & + X_{0}(\phi(T-t)+\varrho) - \alpha_t(\omega) \bigg|^2 \dd t\bigg].
\end{align*}
As in \eqref{eq:empirical_pnl_functional_alpha_impact_cost_trade_of}, $E(u^{[n]})$ is approximated numerically by
\begin{equation} \label{eq:empirical_error_metric}
    E^{N,M} \left( u^{[n]} \right) := \frac{\Delta}{M} \sum_{m = 1}^{M} E^{N} \left( u^{[n]} \left( \omega_m \right) \right),
\end{equation}
with
\begin{align} \notag
    E^{N} \left( u^{[n]} \left( \omega_m \right) \right) := \sum_{i=0}^{N-1} \bigg| & \gamma u^{[n]}_{t_{i}}(\omega_m) + h \left( g_{t_{i}}(\omega_m) + \left(\mathbf{G} u^{[n]} \right)_{t_{i}}(\omega_m) \right) \\ \notag
    & + \left[ \left( \mathbf{H}_{\phi,\varrho}^{\ast}u^{[n]} \right)_{t_{i}} + \mathbf{G}^{\ast} \left( h' \left( g + \mathbf{G}  u^{[n]}  \right) u^{[n]} \right)_{t_{i}} \right] (\omega_m) \\ \label{eq:empirical_error_metric_per_omega}
    & + \left(\mathbf{H}_{\phi,\varrho}u^{[n]}\right)_{t_{i}}(\omega_m) - \alpha_{t_{i}}(\omega_m) + X_{0}(\phi(T-t_{i})+\varrho) \bigg|^2.
\end{align}
Here,  all the operators are estimated by the Nÿstrom scheme and the conditional expectations by Least-Square Monte Carlo (LSMC) with a Ridge regularization, similarly to \eqref{eq:estimated_conditional_expectations}.

\subsection{Impact function and signal specification} \label{ss:impact_function_signal_specification}

\textbf{Impact function definition.} Let $c \in (0,1]$ and $x_{0} > 0$, and consider the impact function for $x \in \mathbb{R}$ such that
\begin{equation} \label{eq:impact_function_specification}
h_{x_{0},c}(x) := \begin{cases}
    x & \text{ if } |x| \leq x_{0} \\
    \text{sign}(x) \left(\frac{1}{c}|x|x_{0}^{1/c-1}-\left(\frac{1}{c}-1\right)x_{0}^{1/c} \right)^{c} & \text{ if } |x| > x_{0}
\end{cases},
\end{equation}
so that the derivative of the impact function is given explicitly by
\begin{equation} \label{eq:impact_function_specification_derivative}
    h_{x_{0},c}^{'}(x) = \mathbbm{1}_{[-x_{0}, x_{0}]}(x) + x_{0}^{1/c-1} \left(\frac{1}{c}|x|x_{0}^{1/c-1}-\left(\frac{1}{c}-1\right)x_{0}^{1/c} \right)^{c-1} \mathbbm{1}_{(-\infty, -x_{0}) \cup (x_{0}, \infty)}(x).
\end{equation}
Also, its second derivative is given for $x\in \mathbb{R}\setminus \{-x_0,x_0\}$ by
\begin{equation} \label{eq:impact_function_specification_second_derivative}
    h_{x_{0},c}^{''}(x) = \frac{c-1}{c} x_{0}^{2 \left( 1/c-1 \right)} \left(\frac{1}{c}|x|x_{0}^{1/c-1}-\left(\frac{1}{c}-1\right)x_{0}^{1/c} \right)^{c-2} \left( \mathbbm{1}_{(x_{0}, \infty)}(x) - \mathbbm{1}_{(-\infty, -x_{0})} (x) \right).
\end{equation}
Notice that $x_{0}$ controls the neighbourhood width $[-x_{0}, x_{0}]$ around the origin where the impact function coincides with the linear case, while $c$ controls the concavity of the function. Figure \ref{F:impact_function_and_derivative} illustrates the effect of $c$ on the concavity of $h_{x_{0},c}$ and its derivative. From the explicit expressions \eqref{eq:impact_function_specification_derivative} and \eqref{eq:impact_function_specification_second_derivative} above, we readily obtain the following result.

\begin{proposition} \label{P:arrow_pratt_ratio_for_f_x_0_c}
    $h_{x_{0},c}$ given by \eqref{eq:impact_function_specification} is odd, concave in $\mathbb{R}_{+}$ and with bounded derivative. Furthermore, condition \eqref{eq:arrow_pratt_ratio_condition} on the Arrow-Pratt relative measure of risk ratio is satisfied in $\mathbb{R} \setminus \{-x_0,x_0\}$ as soon as
    \begin{equation} \label{eq:arrow_pratt_ratio_condition_satisfied}
        c \geq \frac{1}{2}.
    \end{equation}
\end{proposition}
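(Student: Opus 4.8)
The plan is to verify each claimed property directly from the piecewise formulas \eqref{eq:impact_function_specification}, \eqref{eq:impact_function_specification_derivative} and \eqref{eq:impact_function_specification_second_derivative}, treating the positive half-line $(x_0,\infty)$ and transferring everything to $(-\infty,-x_0)$ by oddness. First, oddness itself: on $\{|x|\le x_0\}$ we have $h_{x_0,c}(x)=x$, which is odd, and on $\{|x|>x_0\}$ the formula is $\mathrm{sign}(x)$ times an expression depending on $x$ only through $|x|$, so $h_{x_0,c}(-x)=-h_{x_0,c}(x)$ on all of $\mathbb{R}$. I would also record the $C^1$ matching at $\pm x_0$: substituting $x=x_0$ into the second branch of \eqref{eq:impact_function_specification} gives $\big(\tfrac1c x_0^{1/c}-(\tfrac1c-1)x_0^{1/c}\big)^c=(x_0^{1/c})^c=x_0$, and \eqref{eq:impact_function_specification_derivative} gives $x_0^{1/c-1}(x_0^{1/c})^{c-1}=1$, so $h_{x_0,c}$ and $h_{x_0,c}'$ are continuous across $x_0$, hence (by oddness) across $-x_0$.

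Next, boundedness of $h_{x_0,c}'$ and concavity on $\mathbb{R}_{+}$. Write $P(x):=\tfrac1c|x|x_0^{1/c-1}-(\tfrac1c-1)x_0^{1/c}$, so that on $\{|x|>x_0\}$ formula \eqref{eq:impact_function_specification_derivative} reads $h_{x_0,c}'(x)=x_0^{1/c-1}P(x)^{c-1}$. Since $P$ is strictly increasing in $|x|$ with $P(x_0)=x_0^{1/c}>0$, we get $P(x)\ge x_0^{1/c}$ for $|x|\ge x_0$; as $c-1\le 0$ this gives $0<h_{x_0,c}'(x)\le x_0^{1/c-1}(x_0^{1/c})^{c-1}=1$, which together with $h_{x_0,c}'\equiv 1$ on $[-x_0,x_0]$ shows $0<h_{x_0,c}'\le 1$ on $\mathbb{R}$, so $h_{x_0,c}'$ is bounded. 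For concavity on $[0,\infty)$ it then suffices to note that $h_{x_0,c}'$ is nonincreasing there: it is constant equal to $1$ on $[0,x_0]$, and on $(x_0,\infty)$ formula \eqref{eq:impact_function_specification_second_derivative} gives $h_{x_0,c}''(x)=\tfrac{c-1}{c}x_0^{2(1/c-1)}P(x)^{c-2}\le 0$ because $\tfrac{c-1}{c}\le 0$ and $P(x)^{c-2}>0$; combined with the continuity of $h_{x_0,c}'$ at $x_0$ established above, $h_{x_0,c}'$ is a nonincreasing function on $[0,\infty)$, hence $h_{x_0,c}$ is concave on $\mathbb{R}_{+}$.

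Finally, the Arrow--Pratt bound \eqref{eq:arrow_pratt_ratio_condition}. For $|x|<x_0$ it is immediate since $h_{x_0,c}''=0$ there, and since $h_{x_0,c}''$ is odd and $h_{x_0,c}'$ is even, the map $x\mapsto -x\,h_{x_0,c}''(x)/h_{x_0,c}'(x)$ is even, so it is enough to treat $x>x_0$. Dividing \eqref{eq:impact_function_specification_second_derivative} by \eqref{eq:impact_function_specification_derivative} and cancelling the powers of $x_0$ and of $P(x)$, I expect to obtain
\[
-x\frac{h_{x_0,c}''(x)}{h_{x_0,c}'(x)}=\frac{(1-c)\,x\,x_0^{1/c-1}}{x\,x_0^{1/c-1}-(1-c)\,x_0^{1/c}}.
\]
Writing $r:=x/x_0>1$ and dividing numerator and denominator by $x_0^{1/c}$ turns the right-hand side into $\tfrac{(1-c)r}{r-(1-c)}$, where the denominator is $\ge c\,x_0^{1/c}/x_0^{1/c}>0$ since $r>1>1-c$; hence $\tfrac{(1-c)r}{r-(1-c)}\le 1$ is equivalent to $(1-c)\le cr$, i.e.\ $r\ge\tfrac{1-c}{c}$. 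As $r>1$ for all $x>x_0$, this holds provided $\tfrac{1-c}{c}\le 1$, that is $c\ge\tfrac12$, which is \eqref{eq:arrow_pratt_ratio_condition_satisfied}. The only genuinely delicate points are the exponent bookkeeping in the last simplification (cancelling $x_0^{2(1/c-1)}/x_0^{1/c-1}=x_0^{1/c-1}$ and $P(x)^{c-2}/P(x)^{c-1}=P(x)^{-1}$) and confirming strict positivity of the denominator $x\,x_0^{1/c-1}-(1-c)x_0^{1/c}$ for $x>x_0$ so that all divisions are legitimate; the rest is elementary algebra.
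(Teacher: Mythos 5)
Your proposal is correct and follows essentially the same route as the paper: after checking oddness, concavity and boundedness of the derivative (which the paper dismisses as immediate but you verify carefully, including the $C^1$ matching at $\pm x_0$), you compute the same ratio $-x\,h_{x_0,c}''(x)/h_{x_0,c}'(x)=\tfrac{(1-c)|x|}{|x|-(1-c)x_0}$ on $\{|x|>x_0\}$ and observe that its supremum, attained in the limit $|x|\to x_0$, equals $\tfrac{1-c}{c}$, which is at most $1$ exactly when $c\ge\tfrac12$. The only difference is cosmetic (your substitution $r=x/x_0$ versus the paper's direct monotonicity-of-$R$ argument), and your exponent bookkeeping and positivity of the denominator both check out.
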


\begin{proof}
    First, the facts that $h_{x_{0},c}$ is odd and concave in $\mathbb{R}_{+}$ are immediate. Second, given that $0 \le 1-c < 1$, then it is also immediate to note that $h_{x_{0},c}'$ is bounded above by $1$ from \eqref{eq:impact_function_specification_derivative}. Finally, observe that $h_{x_{0},c}' > 0$ and the function
    \begin{equation*}
        R(x) := - x \frac{h_{x_{0},c}''(x)}{h_{x_{0},c}'(x)} = \frac{(1-c)|x|}{|x|-(1-c)x_{0}}\mathbbm{1}_{(-\infty, -x_{0}) \cup (x_{0}, \infty)}(x),\quad x\in\mathbb{R}\setminus \{-x_0,x_0\},
    \end{equation*}
    reaches its supremum at $x= \pm x_{0}$, which yields \eqref{eq:arrow_pratt_ratio_condition_satisfied}.
\end{proof}

\begin{figure}[H]
\begin{center}
\includegraphics[width=6 in,angle=0]{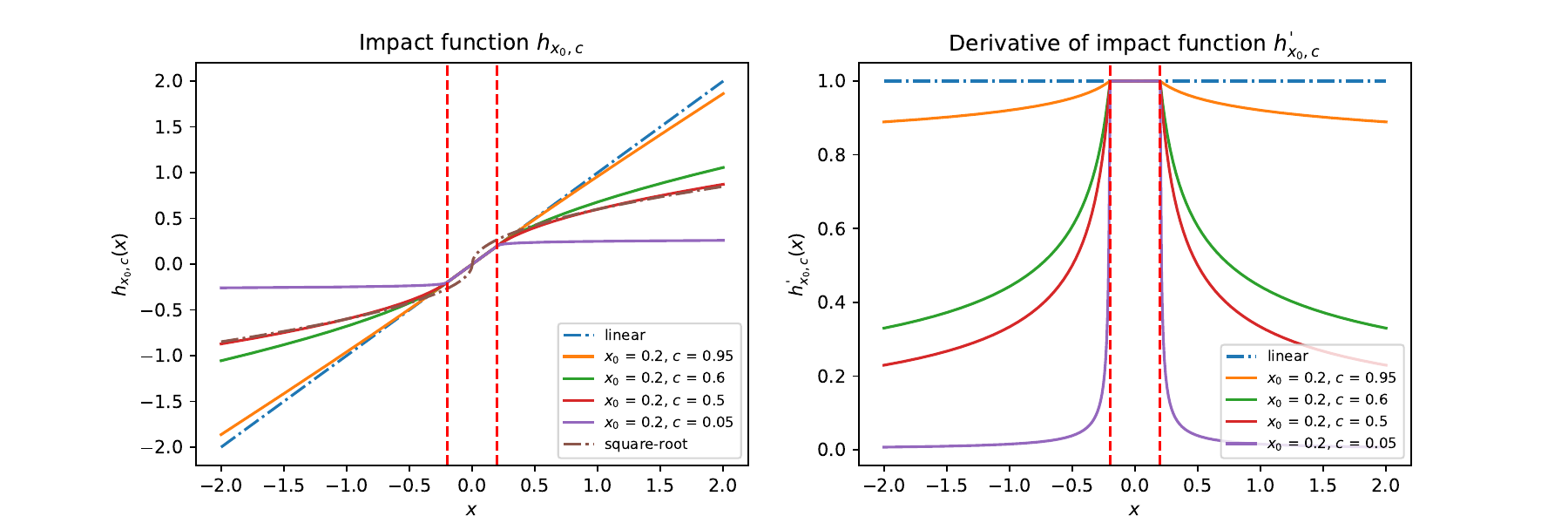}
\caption{Impact function specification \eqref{eq:impact_function_specification} compared to the linear $x \mapsto x$ and the square-root impact functions $x \mapsto \xi \text{sign}(x)\sqrt{|x|}, \; \xi = 0.6$, and its derivative \eqref{eq:impact_function_specification_derivative} for various values of $c$. Note that we don't display the derivative of the square-root impact function $x \in \mathbb{R}^{*} \mapsto \frac{\xi}{\sqrt{2|x|}}$ which explodes when approaching the origin.}
\label{F:impact_function_and_derivative}
\end{center}
\end{figure}

\begin{figure}[H]
\vspace*{-0.4in}
\begin{center}
\includegraphics[width=3 in,angle=0]{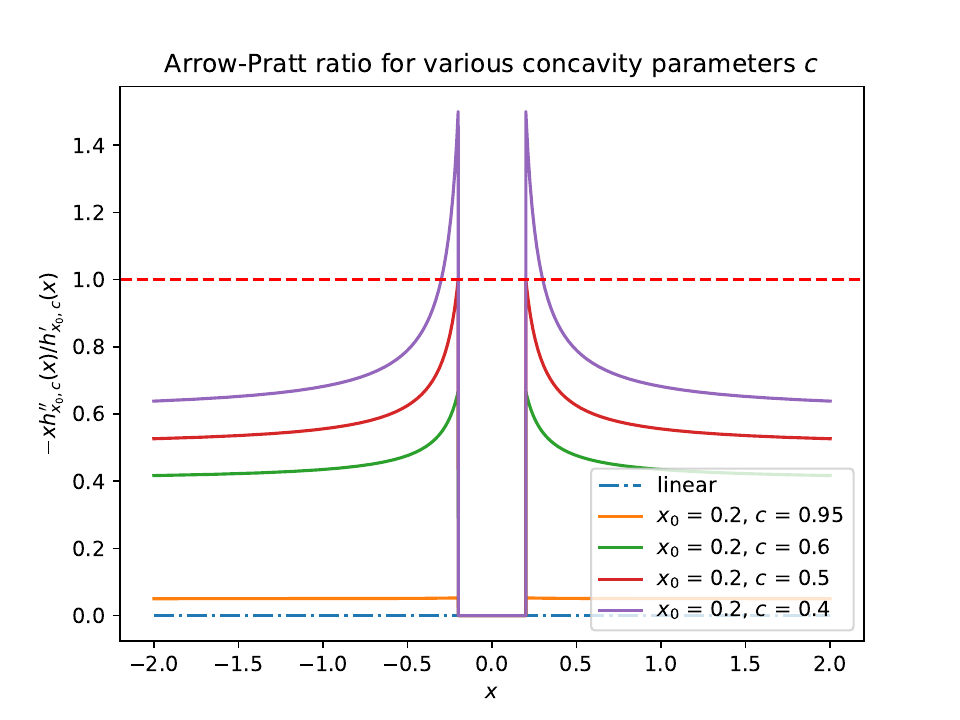}
\caption{Arrow-Pratt relative measure of risk ratio for various values of $c$, see Proposition \ref{P:arrow_pratt_ratio_for_f_x_0_c}.}
\label{F:arrow_pratt_ratio}
\end{center}
\end{figure}

\begin{remark}[Some limit behaviors of the impact function $h_{x_{0},c}$] \label{R:limit_behaviors_impact_function}
    Note in particular that for $c = 1$, the impact function becomes linear while for $c = 0.5$, we get back the impact function from \cite[Example 2.3]{hey2023trading}:
    \begin{equation*}
        h_{x_{0}, 0.5}(x) = \begin{cases}
        x, & \text{ if } |x| \leq x_{0}, \\
        \emph{sign}(x) \sqrt{2|x|x_{0}-x_{0}^{2}}, & \text{ if } |x| > x_{0},
    \end{cases}
    \end{equation*}
    so that the derivative of the impact function is given explicitly by
    \begin{equation*}
        h_{x_{0}, 0.5}^{'}(x) = \mathbbm{1}_{[-x_{0}, x_{0}]}(x) + \frac{x_{0}}{\sqrt{2|x|x_{0}-x_{0}^{2}}} \mathbbm{1}_{(-\infty, -x_{0}) \cup (x_{0}, \infty)}(x).
    \end{equation*}

    Finally, note that $c \leq 1$ ensures that $h_{x_{0}, 0.5}^{'}$ is bounded, while \eqref{eq:arrow_pratt_ratio_condition_satisfied} is satisfied only if the concavity of the transient impact is at most a square root, meaning that Theorem \ref{T:mainnonlinear} can be applied for $\frac{1}{2} \le c \le 1$ and $x_{0} > 0$.
\end{remark}

\textbf{Signals specification.} We specify an Ornstein-Uhlenbeck drift-like price  signal  $\tilde{I}$ such that
\begin{equation} \label{eq:drift_signal_specification}
\dd \tilde I_t = (\tilde \theta - \tilde \kappa \tilde I_t) \dd t + \tilde \xi \dd W_t, \quad \tilde I_{0} \in \mathbb{R}.
\end{equation}
The fundamental alpha-signal $\alpha$ that we consider is then given by
\begin{equation} \label{eq:signal_specification}
\alpha_{t} := \mathbb{E}_{t} \left[ \int_{t}^{T} \tilde I_r \dd r \right] = \left( \tilde I_{t} - \frac{\tilde \theta}{\tilde \kappa} \right) \frac{ 1 - e^{- \tilde \kappa (T-t)}}{\tilde \kappa} + \frac{\tilde \theta}{\tilde \kappa} (T-t), \quad t \in [0,T].
\end{equation}

\subsection{Numerical convergence of the scheme} \label{S:explicitbenchmar}

\textbf{Deterministic signal.} In this case, there is no need to estimate any conditional expectations and the convergence of the scheme is illustrated in Figure \ref{F:convergence_scheme_deterministic_case}. Note that the convergence rate is indeed exponential until machine precision for the various input kernels, which is consistent with Proposition \ref{pp:conv_scheme}, in particular \eqref{eq:convergence_rate}.

\begin{figure}[H]
\begin{center}
\hspace*{-0.3in}
\includegraphics[width=3.5 in,angle=0]{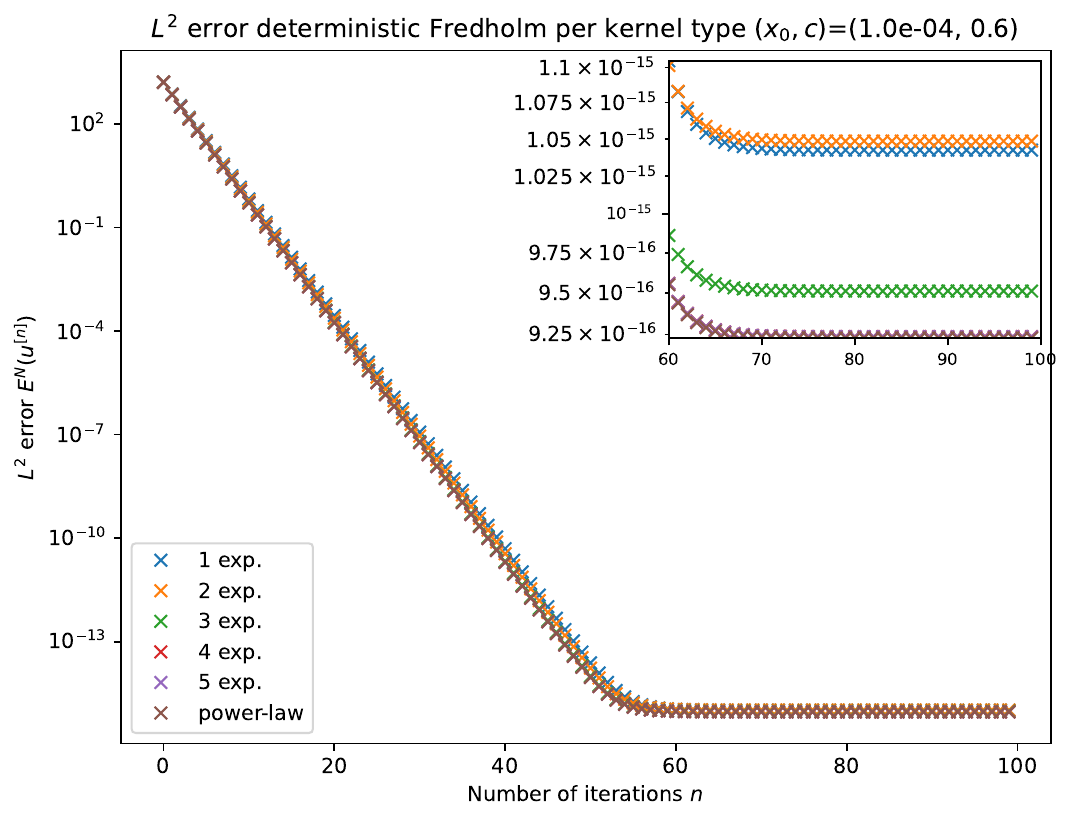}
\caption{Illustration of convergence at exponential rate of the numerical scheme \eqref{scheme_initialization}--\eqref{scheme_update} for various kernels in the deterministic case, with $\gamma = 1$, impact function $h_{x_{0}, c}$ specified in \eqref{eq:impact_function_specification} and deterministic signal given by \eqref{eq:drift_signal_specification}--\eqref{eq:signal_specification} with $\tilde \theta=-40$, $\tilde \kappa=1$, $\tilde \xi = 0$, $\tilde I_0=20$.}
\label{F:convergence_scheme_deterministic_case}
\end{center}
\end{figure}

\textbf{Asymptotic explicit benchmark for one exponential.} Extending the formulation of \cite[Corollary 4.4]{hey2023trading}, with exponential impact time decay $\tau > 0$ and ``Kyle's $\lambda$'' specified as $\lambda = 1 = e^{0}$, to the impact function $h: x \mapsto \text{sign}(x) |x|^{c}, \; x \geq 0$, $c \in [1/2, 1]$ to allow for negative values of the input $x$, we explicitly get  the optimal impact function
\begin{equation*}
    I_{t}^{*} = \frac{1}{1+c} \left( \alpha_{t} + \tau \tilde I_t \right),
\end{equation*}
where recall $\tilde I$ is the stochastic drift price signal. Since $
I_{t}^{*} := \text{sign}(J_{t}^{*}) \left|J_{t}^{*}\right|^{c}$, 
we readily deduce that 
\begin{equation*}
    J_{t}^{*} =  \text{sign}(I_{t}^{*}) \left| I_{t}^{*} \right|^{1/c},
\end{equation*}
which leads to
\begin{equation*}
    J_{t}^{*} = \text{sign} \left( \alpha_{t} + \tau \tilde I_t \right) \left( \frac{1}{1+c} \left| \alpha_{t} + \tau \tilde I_t \right| \right)^{1/c}, \quad t \in (0,T).
\end{equation*}
Now, also considering that $\alpha_{T} = 0$,  the boundary conditions on the impact function $I^{*}$ read $    I_{0}^{*} = 0$, 
   and $I_{T}^{*} = \alpha_T=0$
which translate for $J^{*}$  into $J_{0}^{*} = 0$  and 
   $J_{T}^{*} = 0.$
Thus, the optimal traded volume, which satisfies 
\begin{equation*}
Q_{t}^{*} = \int_{0}^{t} \dd J_{s}^{*} + \int_{0}^{t} \frac{J_{s}^{*}}{\tau} \dd s, \quad t \in (0,T], \quad Q_0=0,
\end{equation*}
is explicitly given by
\begin{equation} \label{eq:explicit_benchmark_smooth}
Q_{t}^{*} = J^{*}_t + \frac{1}{\tau}\int_{0}^{t}  \text{sign} \left( \alpha_{s} + \tau \tilde I_s \right) \bigg( \frac{| \alpha_{s} + \tau \tilde I_s |}{1+c} \bigg)^{1/c} \dd s, \quad t \in (0,T],
\end{equation}
with the bulk trades 
\begin{equation} 
\begin{aligned}\label{eq:explicit_benchmark_bulk_trades}
&\Delta Q_{0} = Q_{0+}-Q_0 = \text{sign} \left( \alpha_{0} + \tau \tilde I_0 \right) \left( \frac{1}{1+c} \left| \alpha_{0} + \tau \tilde I_{0} \right| \right)^{1/c}, \\ &
\Delta Q_{T} = Q_T-Q_{T-}=J_T-J_{T-}= - \text{sign} \left( \tilde I_T \right) \left| \frac{\tau \tilde I_{T} }{1+c} \right|^{1/c}.
\end{aligned}
\end{equation}
Such bulk trades ensure that the boundary conditions are satisfied.

\begin{figure}[H]
\begin{center}
\includegraphics[width=5.8 in,angle=0]{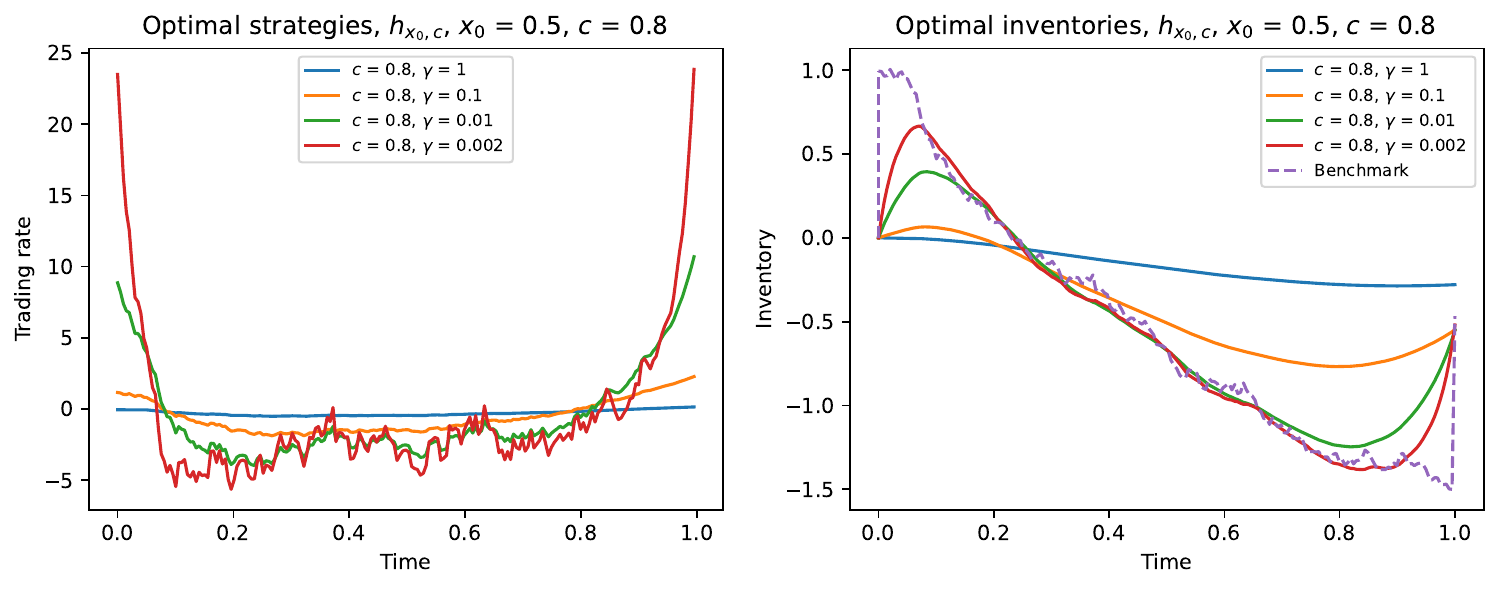}
\caption{Convergence of the numerical solution obtained by the scheme \eqref{scheme_initialization}-\eqref{scheme_update} in the stochastic case to the explicit benchmark \eqref{eq:explicit_benchmark_smooth}-\eqref{eq:explicit_benchmark_bulk_trades} when decreasing $\gamma$ in the concave case, i.e., $\left(x_{0}, c\right) = \left(0.5,0.8\right)$ in the impact function specification \eqref{eq:impact_function_specification}. The stochastic signal is parametrized as the integral of an Ornstein-Uhlenbeck as in \eqref{eq:drift_signal_specification} with volatility $\tilde \xi = 0.5$, mean level $\tilde \theta = -4$, mean-reversion speed $\tilde \kappa=1$ and initial value $\tilde I_0 = 2$. We run $n=30$ iterations for each case, with $N = 200$ time-steps and $M=10000$ sample trajectories. Both the scaling parameter and the mean-reversion rate $\tau$ of the exponential decay are fixed to $1$. The regression basis is set as $\left( u^{[n-1]}, \int_{0}^{\cdot} u^{[n-1]}_{s} \dd s, \int_{0}^{\cdot} e^{-\tilde\kappa(\cdot-s)}u^{[n-1]}_{s} \dd s \right)$ at each iteration and Laguerre polynomials up to degree $d=4$ are selected for the expansion basis. The respective numerical error metrics $E^{N,M}$ from \eqref{eq:empirical_error_metric} are $6e-5$, $1e-3$, $5e-2$ and $1e-1$ for $\gamma = 1, \; 0.1, \; 0.01, \; 0.002$.}
\label{F:concave_stochastic_explicit_inventory}
\end{center}
\end{figure}

\subsection{Impact of power-law decay approximation on trading}

\textbf{Multi-factor approximation of the fractional kernel.} Define the following $\epsilon$-shifted fractional kernel
\begin{equation} \label{eq:def_shifted_frac_kernel}
    G_{\nu, \epsilon}(t) := \xi \left(t+\epsilon\right)^{\nu - 1} \quad \xi>0, \quad \epsilon \geq 0, \quad \nu \in \left( \frac{1}{2}, 1 \right),
\end{equation}
and, for $p \in \mathbb{N}^{*}$, denote the sum of exponential time scales decay kernel by
\begin{equation} \label{eq:def_sum_of_exponentials}
    G_p(t) := \sum_{i=1}^p \xi_i e^{-x_i t},\quad \xi_i,x_i>0, \quad i \in \{ 1, \cdots, p \}.
\end{equation}
Then, given a trading time horizon $T>0$, we aim to approximate the power-law impact decay captured in \eqref{eq:def_shifted_frac_kernel} by a sum of exponential decays in a kernel of the form \eqref{eq:def_sum_of_exponentials}. We minimize the quantity 
\begin{align} \notag
    ||G_{\nu, \epsilon} - G_p||^2_{L^{2}} & := \int_{0}^{T} \left( G_{\nu, \epsilon}(t)-G_p(t) \right)^2 dt \\ \notag
    & = \sum_{i,j=1}^p \frac{\xi_i \xi_j}{x_i + x_j} \left( 1-e^{-(x_i + x_j)T_{l}} \right) + \xi^2 \frac{(T_{l}+\epsilon)^{2\nu - 1} - \epsilon^{2\nu - 1}}{2\nu - 1} \\ \label{eq:loss_approximate_fractional_kernel}
    & \quad - 2\xi\sum_{i=1}^p \frac{\xi_i e^{x_{i} \epsilon}}{x_i^{\nu}} \Gamma \left( \nu \right) \left( P \left( \nu, x_i (T_{l}+\epsilon) \right) - P \left( \nu, x_i \epsilon \right) \right),
\end{align}
with $\Gamma$ and $P$ denoting, respectively, the Gamma function and the regularized lower incomplete Gamma  function, that is,
\begin{equation*}
    \Gamma(a) := \int_0^{\infty} u^{a-1} e^{-u} du, \quad \text{and} \quad P(a,t) := \frac{1}{\Gamma(a)} \int_0^t u^{a-1} e^{-u} du, \quad a >0, \quad t \geq 0.
\end{equation*}
Figure \ref{F:approximated_epsilon_shifted_power_law_kernel_by_exponentials} shows the resulting approximating sum of exponentials kernels \eqref{eq:def_sum_of_exponentials} to fit the $\epsilon$-shifted power-law kernel \eqref{eq:def_shifted_frac_kernel}. The parameters $\left( \xi_{i}, x_{i} \right)_{i \in \{ 1, \cdots, p\}}$ are calibrated incrementally from $1$ to $p$ exponential time scales by minimizing the loss \eqref{eq:loss_approximate_fractional_kernel} on the trading horizon $T=1$ using a standard nonlinear solver. The results of the calibration are displayed in Table~\ref{tab:optimal_weights_mean_reversions_approx_shifted_fractional_kernel}. Notice that fewer exponentials are required to approximate well the power-law when the shift is indeed strictly positive.
\begin{figure}[H]
\begin{center}
\hspace*{-0.3in}
\includegraphics[width=6 in,angle=0]{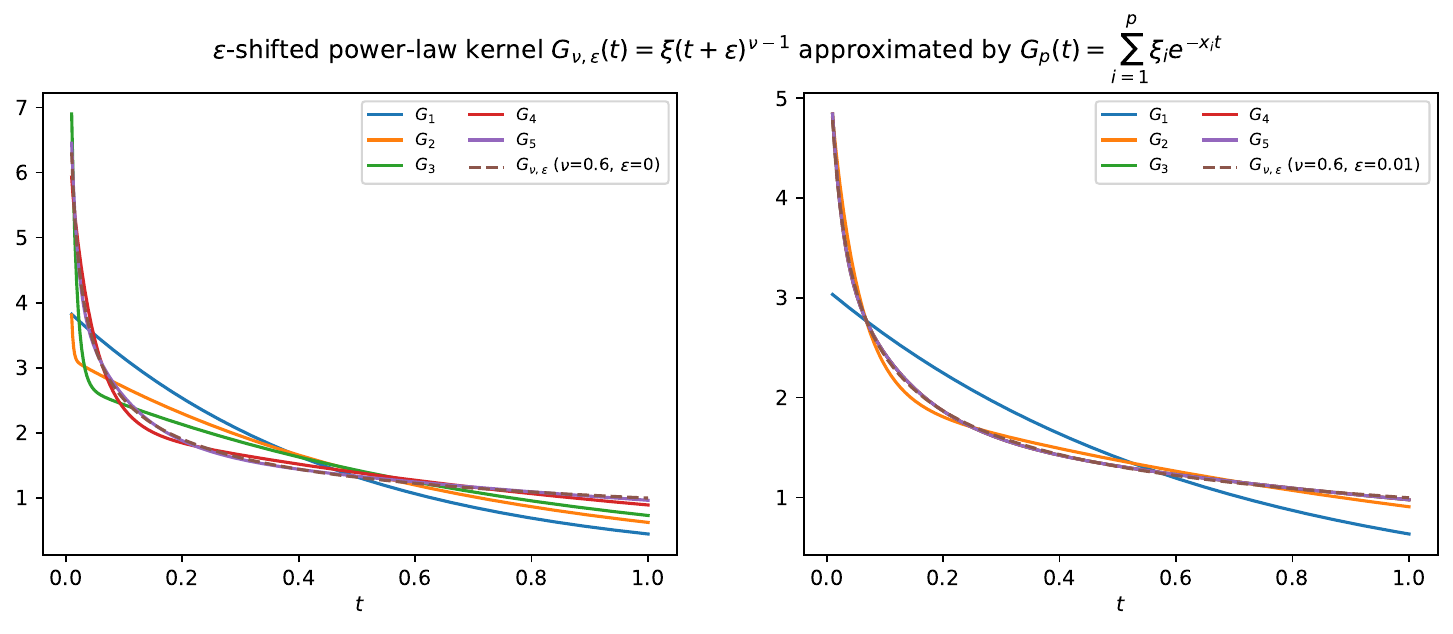}
\caption{Power-law decay approximation by multiple exponential decays: without shift ($\epsilon=0$) on the left, and with shift ($\epsilon>0$) on the right.}
\label{F:approximated_epsilon_shifted_power_law_kernel_by_exponentials}
\end{center}
\end{figure}

\begin{table}[H]
    \centering
    \small 
    \renewcommand{\arraystretch}{1.2} 
    \setlength{\tabcolsep}{3pt} 

    \begin{tabular}{|c!{\vrule width 0.7pt}c|c|c!{\vrule width 1.2pt}c|c|c|}
    \hline
    \multirow{2}{*}{$p$} & \multicolumn{3}{c!{\vrule width 1.2pt}}{$G_{\nu,\epsilon}$, $(\nu, \epsilon) = (0.6, 0)$} & \multicolumn{3}{c|}{$G_{\nu,\epsilon}$, $(\nu, \epsilon) = (0.6, 0.01)$} \\
    \cline{2-7}
    & $\xi_{i}$ & $x_{i}$ & $\|G_{\nu, \epsilon} - G_p\|^2_{L^{2}}$ & $\xi_{i}$ & $x_{i}$ & $\|G_{\nu, \epsilon} - G_p\|^2_{L^{2}}$ \\
    \hline
    1 & $3.907e+00$ & $2.165e+00$ & $1.522e+00$ & $3.082e+00$ & $1.583e+00$ & $1.450e-01$ \\
    \hline
    2 & $\begin{array}{c} 3.180e+00 \\ 2.504e+01 \end{array}$ & $\begin{array}{c} 1.625e+00 \\ 3.600e+02 \end{array}$ & $6.984e-01$ & $\begin{array}{c} 2.074e+00 \\ 3.394e+00 \end{array}$ & $\begin{array}{c} 8.281e-01 \\ 2.114e+01 \end{array}$ & $5.537e-03$ \\
    \hline
    3 & $\begin{array}{c} 2.780e+00 \\ 1.196e+02 \\ 1.267e+01 \end{array}$ & $\begin{array}{c} 1.334e+00 \\ 1.808e+04 \\ 1.115e+02 \end{array}$ & $3.615e-01$ & $\begin{array}{c} 1.704e+00 \\ 2.438e+00 \\ 1.976e+00 \end{array}$ & $\begin{array}{c} 5.578e-01 \\ 5.994e+01 \\ 8.786e+00 \end{array}$ & $2.000e-04$ \\
    \hline
    4 & $\begin{array}{c} 2.166e+00 \\ 1.037e+02 \\ 1.689e+01 \\ 4.709e+00 \end{array}$ & $\begin{array}{c} 8.850e-01 \\ 1.808e+04 \\ 4.993e+02 \\ 2.495e+01 \end{array}$ & $2.681e-01$ & $\begin{array}{c} 1.699e+00 \\ 2.437e+00 \\ 1.574e+00 \\ 4.072e-01 \end{array}$ & $\begin{array}{c} 5.547e-01 \\ 5.999e+01 \\ 8.482e+00 \\ 9.947e+00 \end{array}$ & $1.977e-04$ \\
    \hline
    5 & $\begin{array}{c} 1.818e+00 \\ 9.872e+01 \\ 1.621e+01 \\ 6.018e+00 \\ 2.583e+00 \end{array}$ & $\begin{array}{c} 6.327e-01 \\ 1.812e+04 \\ 8.918e+02 \\ 9.488e+01 \\ 1.113e+01 \end{array}$ & $2.581e-01$ & $\begin{array}{c} 1.697e+00 \\ 2.435e+00 \\ 1.362e+00 \\ 1.748e-02 \\ 6.062e-01 \end{array}$ & $\begin{array}{c} 5.532e-01 \\ 6.002e+01 \\ 8.266e+00 \\ 9.944e+00 \\ 9.955e+00 \end{array}$ & $1.960e-04$ \\
    \hline
    \end{tabular}

    \caption{Comparison of approximations of power-law decay $G_{\nu,\epsilon}$ from \eqref{eq:def_shifted_frac_kernel} with $\nu = 0.6$ and $\epsilon = 0$ or $\epsilon = 0.01$, using sums of exponential decays $G_{p}$ from \eqref{eq:def_sum_of_exponentials}. Optimal weights $\left(\xi_{i}\right)_{i \in \{1,\cdots,p\}}$, mean-reversion rates $\left(x_{i}\right)_{i \in \{1,\cdots,p\}}$, and losses $\|G_{\nu, \epsilon} - G_p\|^2_{L^{2}}$ are shown for $p$ exponential decays in the sum.}
    \label{tab:optimal_weights_mean_reversions_approx_shifted_fractional_kernel}
\end{table}

In Figure \ref{fig:stability_illustration_fractional}, we illustrate the convergence of the empirical optimal PnLs \eqref{eq:empirical_pnl_functional_alpha_impact_cost_trade_of} 
 obtained by solving numerically the nonlinear Fredholm equation \eqref{eq:nonlinearfredholm} by the scheme \eqref{scheme_initialization}--\eqref{scheme_update} with an increasing number of exponential time scales in the approximating kernel \eqref{eq:def_sum_of_exponentials}, 
 to the optimal PnL of the corresponding  asymptotic (shifted-)fractional kernel.
Theoretically, this stability result is investigated in Proposition \ref{pp:stability_with_respect_to_kernel}.  
Note in particular that, in this setting, the ``two exponential time scales'' rule of thumb when approximating the power-law decay seems reasonable. However, five exponential time scales  are preferred to properly mimic the same numerical performances as the power-law, although the empirical optimal PnLs are very close to one another. By contrast, three exponential time scales allow to reproduce the performance of the shifted-power-law kernel.

\begin{figure}[H]
    \centering
    \begin{subfigure}[t]{0.48\textwidth}
        \centering
        \includegraphics[width=\textwidth, trim=0cm 0cm 0cm 0cm, clip]{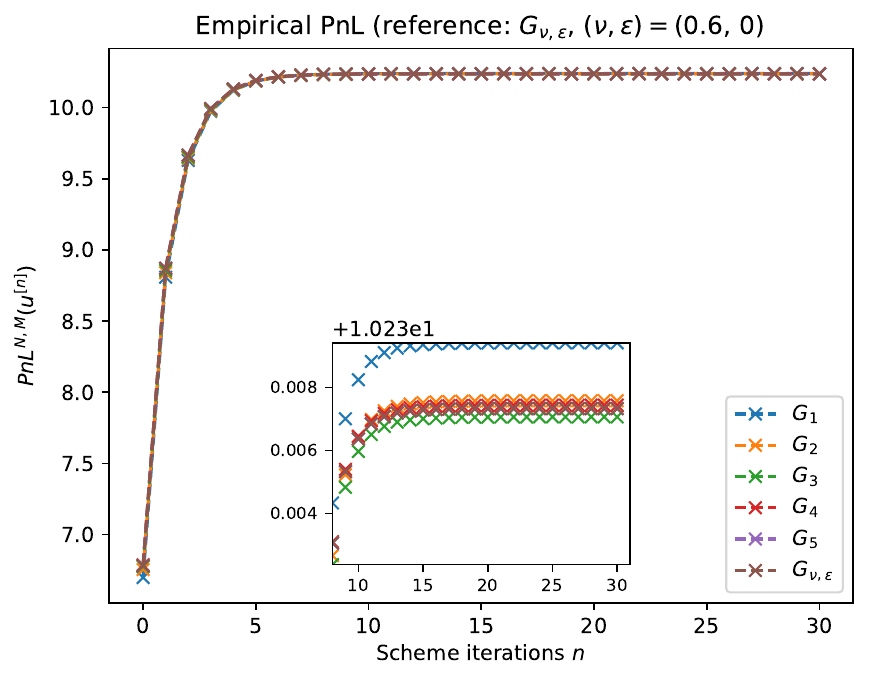} 
    \end{subfigure}
    \hfill 
    \begin{subfigure}[t]{0.48\textwidth}
        \centering
        \includegraphics[width=\textwidth, trim=0cm 0cm 0cm 0cm, clip]{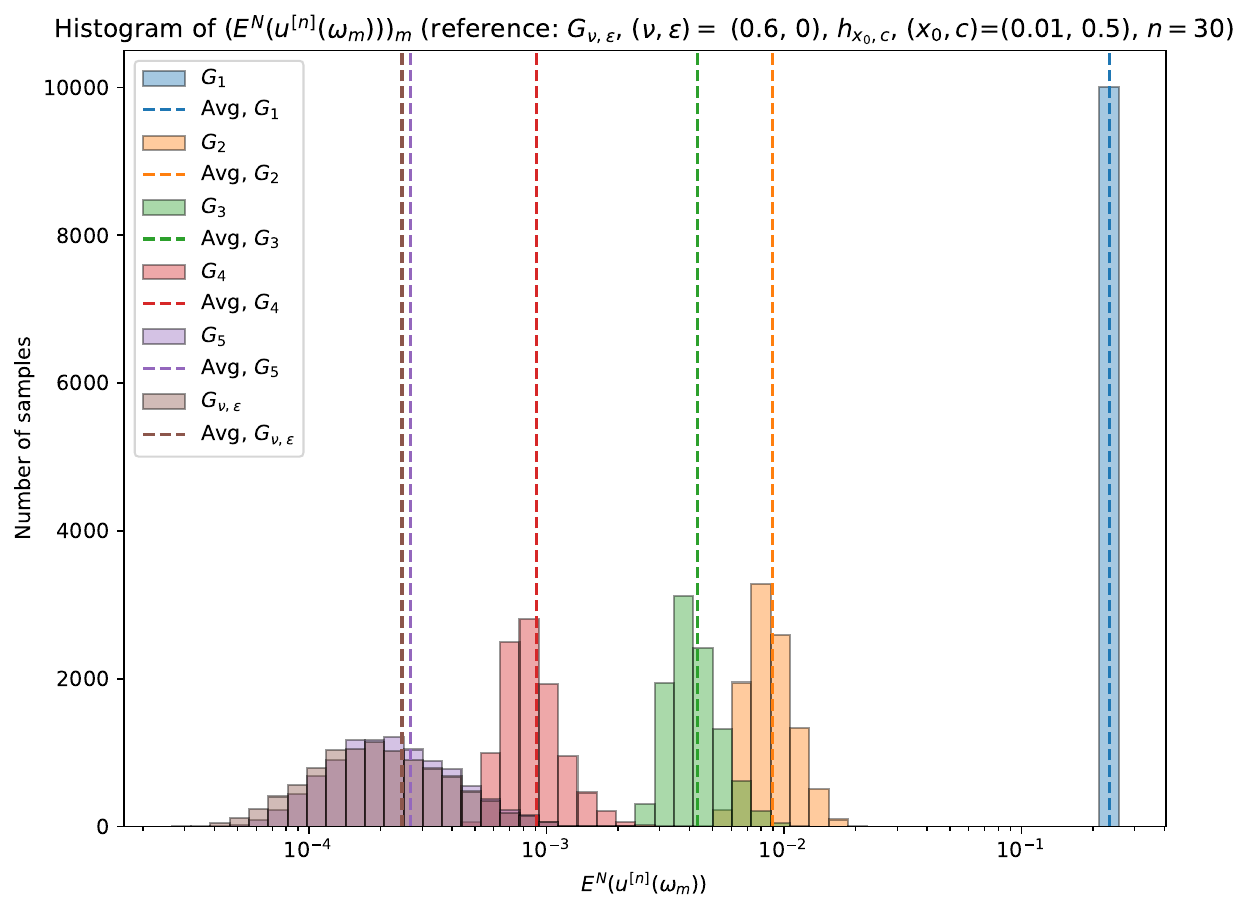} 
    \end{subfigure}
    
    \begin{subfigure}[t]{0.48\textwidth}
        \centering
        \includegraphics[width=\textwidth, trim=0cm 0cm 0cm 0cm, clip]{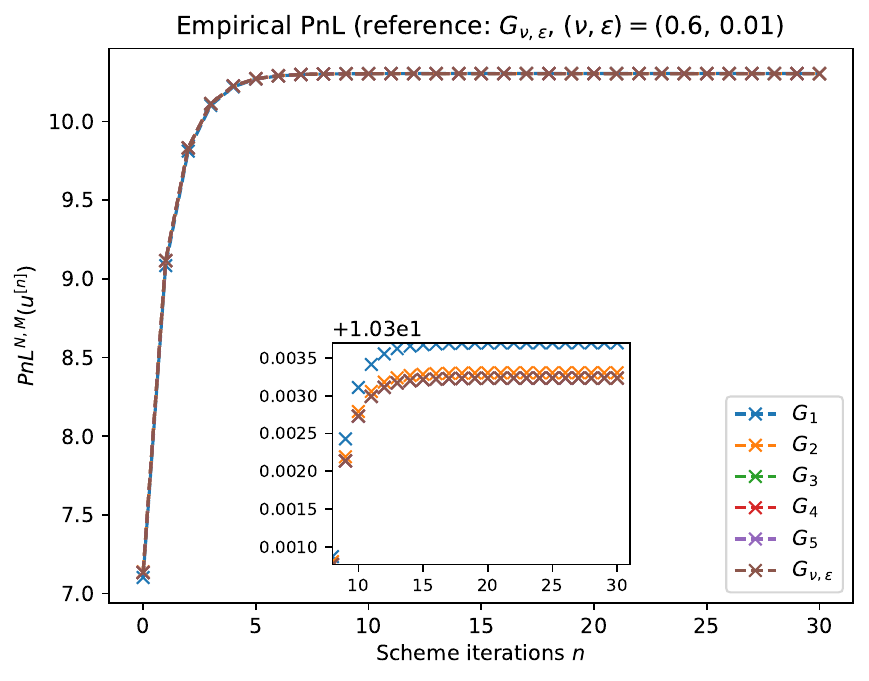} 
    \end{subfigure}
    \hfill 
    \begin{subfigure}[t]{0.48\textwidth}
        \centering
        \includegraphics[width=\textwidth, trim=0cm 0cm 0cm 0cm, clip]{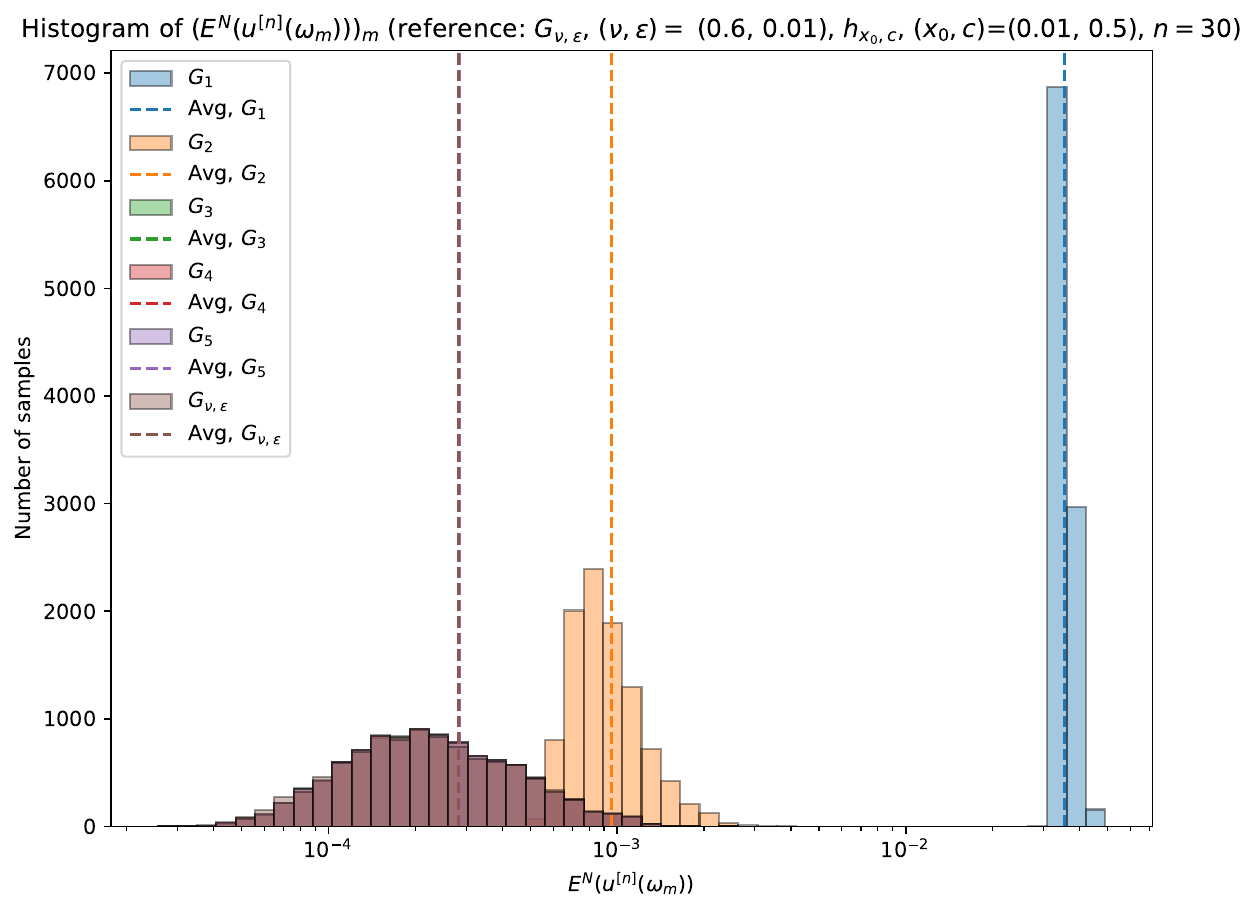} 
    \end{subfigure}
    
    \caption{Impact of the approximation of the (shifted-)fractional decay kernel by one to five exponential time scales, with the parameters  in Table~\ref{tab:optimal_weights_mean_reversions_approx_shifted_fractional_kernel}, on the empirical $PnL^{N,M}$ from \eqref{eq:empirical_pnl_functional_alpha_impact_cost_trade_of} through the scheme iterations (top and bottom left) and the histograms of the numerical errors $\left( E^{N}(u^{[n]}(\omega_{m}) \right)_{m}$ from \eqref{eq:empirical_error_metric_per_omega} (the respective empirical averages $E^{N,M}(u^{[n]})$ from \eqref{eq:empirical_error_metric} are displayed in dashed line) after $n=30$ iterations (top and bottom right). We fix the impact function as an approximating square-root by taking $(x_{0},c)=(0.01,0.5)$ for $h_{x_{0},c}$ in \eqref{eq:impact_function_specification}. In each case, the stochastic drift-signal's parameters are fixed to $\tilde \theta = 40, \; \tilde \kappa = 5, \; \tilde \xi = 5$ and $\tilde{I}_{0} = 10$ in \eqref{eq:drift_signal_specification}. We take $N=100$ time steps, $M=10000$ trajectories, and we consider the regression variables $\left( \alpha, \int_{0}^{.} \alpha_{s} \dd s, \int_{0}^{t} e^{-\tilde \kappa (t-s)} \alpha_{s} \dd s \right)$, Laguerre polynomials up to degree $2$ for the basis expansion for the LSMC with a Ridge regularization set to $1e-6$.}
    \label{fig:stability_illustration_fractional}
\end{figure}

\subsection{Impact of concavity and comparison of exponential versus power-law decay on trading}

The first row of Figure \ref{F:impact_concavity_parameter} displays, in the deterministic case, the influence of the concavity parameter $c \in [0.5,1]$ in the impact function $h_{x_{0},c}$ from \eqref{eq:impact_function_specification} on the optimal trading inventories (upper-left) and the resulting price distortions (upper-right) with a power-law decay kernel when trading a ``buy''-signal, starting and ending at zero inventory by setting $\varrho \gg 1$ into the gain functional \eqref{eq:gain_functional_alpha_impact_cost_trade_of}. The more concave the impact function is, the more aggressive the trades are while the amplitude of the price distortion decreases.

Moreover, the second row of Figure \ref{F:impact_concavity_parameter} compares the optimal trading inventories (bottom-left) and the resulting price distortions (bottom-right) for power-law and best-power-law-approximating exponential decays in both the linear and square-root impact function cases, i.e., taking $c=1$ and $c=0.5$ respectively in \eqref{eq:impact_function_specification}. Note that the trades are more aggressive in presence of the best-power-law-approximating exponential decay than for the power-law decay since accounting for a single exponential time-scale in the decay leads to an underestimation of the resulting price distortion actually caused by the power-law decay.

\begin{figure}[H]
    \begin{center}
    \includegraphics[width=5.5 in,angle=0]{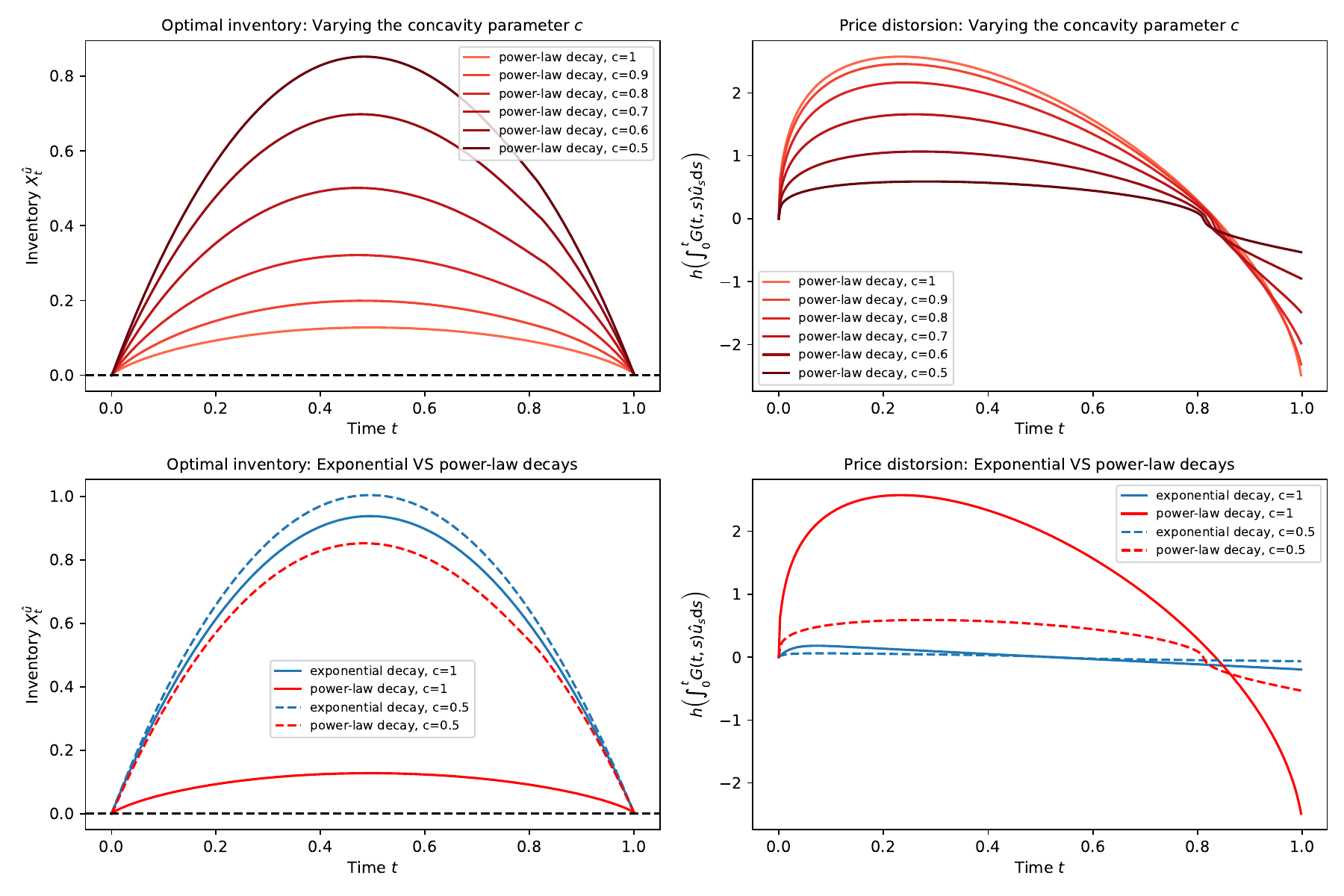}
    \caption{Impact of concavity with power-law versus one exponential decay on the optimal trading inventory and resulting price distortion in presence of a ``buy'' signal. For the impact function $h_{x_{0},c}$, we fix $x_{0} = 1e-2$ in \eqref{eq:impact_function_specification}; for the fractional kernel $G_{\nu, \epsilon}$, we fix $\xi = 10$, $\epsilon=0$ in \eqref{eq:def_shifted_frac_kernel}; for the one exponential time-scale decay, we fix $\xi_{1} = 39.07, x_{1} = 2.165$ in \eqref{eq:def_sum_of_exponentials} using Table~\ref{tab:optimal_weights_mean_reversions_approx_shifted_fractional_kernel}. We set $\varrho = 5e2$ in \eqref{eq:gain_functional_alpha_impact_cost_trade_of} to enforce full liquidation. We take $N = 400$ time steps and $n=100$ iterations for the scheme such that the numerical error $E^{N} \left( u^{[n]} \right)$ from \eqref{eq:empirical_error_metric} is less than the order of $1e-9$ in the square-root and linear cases for both power-law and exponential decays. The deterministic drift-signal's parameters are fixed to $\tilde \theta = 40, \; \tilde \kappa = 5, \; \tilde \xi = 0$ and $\tilde{I}_{0} = 10$ in \eqref{eq:drift_signal_specification}.}
    \label{F:impact_concavity_parameter}
    \end{center}
\end{figure}

\section{Proof of Theorem~\ref{T:mainnonlinear} } \label{S:proofofoptimality}

The main idea of the proof is to use general results of convex optimization in infinite dimensional spaces. We recall that, given a positive constant $\beta>0$, a map $\mathcal{T}\colon \mathcal{L}^{2}\to \mathbb{R}$ is said to be $\beta-$\emph{strongly concave} if the following inequality is satisfied for every $u,\,v\in \mathcal{L}^{2}$:
\begin{equation}\label{def:strongly_concave}
\mathcal{T}(\theta u+(1-\theta)v) \ge \theta\mathcal{T}(u)+(1-\theta)\mathcal{T}(v)
+\frac{\beta \theta(1-\theta)}{2}\norm{u-v}^2 ,\quad \theta\in(0,1).
\end{equation}
Supposing that $\mathcal{J}$ in \eqref{eq:gain_functional} is concave and G\^ateaux differentiable in $\mathcal{L}^{2}$,  we can study the existence of optimal strategies $\hat{u} \in \mathcal{L}^{2}$ in the sense of \eqref{eq:optimal_strategy} by solving the equation 
\begin{equation}\label{eq:FOC}
	\nabla \mathcal{J}(u)=0,\quad u\in \mathcal{L}^{2},
\end{equation} 
where $\nabla \mathcal{J}(u)\in \mathcal{L}^{2}$ is the G\^ateaux differential of $\mathcal{J}$, defined by
\begin{equation*}
    \langle \nabla \mathcal{J}(u), h \rangle = \lim_{\epsilon\to 0} \frac{\mathcal{J}(u+\epsilon h)-\mathcal{J}(u)}{\epsilon}, \quad h \in \mathcal{L}^{2}.
\end{equation*}
More precisely, the following theorem holds.

\begin{theorem}\label{thm:concave}
	Suppose that the performance functional $\mathcal{J}$ in \eqref{eq:gain_functional} is well-defined and G\^ateaux differentiable in $\mathcal{L}^{2}$. 
	\begin{enumerate}[label={(\roman*)}]
		\item\label{point:concave}If $\mathcal{J}$ is concave in $\mathcal{L}^{2}$, then the set of optimal strategies $\hat{u}\in \mathcal{L}^{2}$ satisfying \eqref{eq:optimal_strategy} coincide with the set of solutions of \eqref{eq:FOC}. In addition, if the functional $-\mathcal{J}$ is coercive, i.e., $\lim_{\norm{u}\to \infty}\mathcal{J}(u)=-\infty$, then there exists at least one optimal strategy that solves \eqref{eq:optimal_strategy}. If additionally $\mathcal{J}$ is strictly concave, then such an optimal strategy is unique.
		\item\label{point:strictly_concave} In particular, if $\mathcal{J}$ is strongly concave in the sense of \eqref{def:strongly_concave}, then there exists a unique optimal strategy $\hat{u}\in \mathcal{L}^{2}$ satisfying \eqref{eq:optimal_strategy}, which is also the unique solution of \eqref{eq:FOC}.
	\end{enumerate}
\end{theorem}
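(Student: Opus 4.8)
The plan is to regard Theorem~\ref{thm:concave} as a special case of the classical theory of convex minimization on a Hilbert space, applied to $-\mathcal{J}$ on $\mathcal{L}^{2}$, and to isolate the only genuinely infinite-dimensional point. First I would prove the first-order characterization in \ref{point:concave}. If $\hat{u}$ attains $\sup_{u}\mathcal{J}$, then for every $h\in\mathcal{L}^{2}$ the scalar map $\varepsilon\mapsto\mathcal{J}(\hat{u}+\varepsilon h)$ has a global maximum at $\varepsilon=0$ and admits a derivative there equal to $\langle\nabla\mathcal{J}(\hat{u}),h\rangle$ by G\^ateaux differentiability; hence this derivative vanishes, and since $h$ is arbitrary, $\nabla\mathcal{J}(\hat{u})=0$. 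Conversely, concavity of $\mathcal{J}$ yields the standard first-order inequality
\[
\mathcal{J}(v)\le\mathcal{J}(u)+\langle\nabla\mathcal{J}(u),v-u\rangle,\qquad u,v\in\mathcal{L}^{2},
\]
proved by letting $\lambda\downarrow0$ in the monotone difference quotients of the concave function $\lambda\mapsto\mathcal{J}(u+\lambda(v-u))$ on $[0,1]$; taking $u=\hat{u}$ with $\nabla\mathcal{J}(\hat{u})=0$ gives $\mathcal{J}(v)\le\mathcal{J}(\hat{u})$ for all $v$, so $\hat{u}$ is optimal. Thus the two sets coincide.

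For existence under coercivity of $-\mathcal{J}$, I would run the direct method. Choose a maximizing sequence $(u_{n})$ with $\mathcal{J}(u_{n})\uparrow\sup_{u}\mathcal{J}$; coercivity forces $(u_{n})$ to be bounded, and since $\mathcal{L}^{2}$ is a Hilbert space a subsequence converges weakly, $u_{n_{k}}\rightharpoonup\hat{u}$. Now $-\mathcal{J}$ is convex and lower semicontinuous — indeed continuous, as one checks from the explicit form \eqref{eq:gain_functional_alpha_impact_cost_trade_of}, whose quadratic and linear parts are continuous, the only non-obvious term being the nonlinear $u\mapsto\mathbb{E}\int_{0}^{T}h(Z^{u}_{t})u_{t}\,\dd t$ — hence it is sequentially weakly lower semicontinuous, so $\mathcal{J}(\hat{u})\ge\limsup_{k}\mathcal{J}(u_{n_{k}})=\sup_{u}\mathcal{J}$ and $\hat{u}$ is optimal. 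If in addition $\mathcal{J}$ is strictly concave and $\hat{u}_{1}\neq\hat{u}_{2}$ were both optimal with common value $S$, then $\mathcal{J}(\tfrac12\hat{u}_{1}+\tfrac12\hat{u}_{2})>S$, a contradiction; so the maximizer is unique.

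For part \ref{point:strictly_concave}, it suffices to derive its hypotheses from strong concavity and invoke \ref{point:concave}. Strong concavity \eqref{def:strongly_concave} is in particular strict concavity, since the extra term $\tfrac{\beta\theta(1-\theta)}{2}\|u-v\|^{2}$ is strictly positive when $u\neq v$. Moreover, rearranging \eqref{def:strongly_concave} and letting $\lambda\downarrow0$ exactly as above gives the strengthened inequality
\[
\mathcal{J}(v)\le\mathcal{J}(u)+\langle\nabla\mathcal{J}(u),v-u\rangle-\tfrac{\beta}{2}\|v-u\|^{2},\qquad u,v\in\mathcal{L}^{2}.
\]
Taking $u=0$ yields $\mathcal{J}(v)\le\mathcal{J}(0)+\|\nabla\mathcal{J}(0)\|\,\|v\|-\tfrac{\beta}{2}\|v\|^{2}\to-\infty$ as $\|v\|\to\infty$, i.e.\ $-\mathcal{J}$ is coercive, so \ref{point:concave} applies and produces a unique optimal $\hat{u}$, which is also the unique solution of \eqref{eq:FOC}; uniqueness of the critical point also follows directly from the displayed inequality applied to two critical points with their roles exchanged and the two estimates added.

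The soft parts are the first-order characterization and the reduction of \ref{point:strictly_concave} to \ref{point:concave}. The delicate step is the existence argument: in infinite dimensions, coercivity and reflexivity deliver only a weak subsequential limit, so one must establish sequential weak lower semicontinuity of $-\mathcal{J}$, equivalently enough regularity of $\mathcal{J}$ beyond bare G\^ateaux differentiability. For the functional at hand this boils down to continuity of $u\mapsto\mathbb{E}\int_{0}^{T}h(Z^{u}_{t})u_{t}\,\dd t$ on $\mathcal{L}^{2}$, which is precisely where the growth and regularity hypotheses on $h$ in Definition~\ref{def:admissible_impact_function} enter.
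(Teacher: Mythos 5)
Your proof is correct, and it reaches the same conclusions by the same underlying convex-analysis mechanism; the difference is that the paper outsources every step to Bauschke--Combettes (Proposition 17.4 for Fermat's rule, Proposition 17.39 for lower semicontinuity, Proposition 11.14 for existence under coercivity, Corollaries 11.8 and 11.16 for uniqueness and the strongly convex case), whereas you prove each ingredient from scratch via the direct method. What your version buys is self-containedness, and in part \ref{point:strictly_concave} a clean derivation of coercivity from strong concavity via the strengthened subgradient inequality, rather than a black-box appeal to Corollary 11.16. One remark on the only delicate step: to get sequential weak lower semicontinuity of $-\mathcal{J}$ you invoke continuity ``as one checks from the explicit form'' of the functional, which steps outside the abstract hypotheses of the theorem (well-definedness, G\^ateaux differentiability, concavity) and quietly re-imports the growth assumptions on $h$. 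This detour is unnecessary: the first-order inequality $\mathcal{J}(v)\le\mathcal{J}(u)+\langle\nabla\mathcal{J}(u),v-u\rangle$ that you already establish exhibits $-\mathcal{J}$ as a pointwise supremum of weakly continuous affine functionals, hence weakly lower semicontinuous with no further input --- which is precisely what the paper extracts from Proposition 17.39. With that substitution your argument is complete under exactly the stated hypotheses.
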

\begin{proof}
    The proof of the theorem statements are  consequences of well-known results of convex analysis in Hilbert spaces that can be found in \cite{bauschke2017} (see also \cite{ekeland1999convex}). We first notice that $\hat{u} \in \mathcal{L}^{2}$ satisfies \eqref{eq:optimal_strategy} if and only if 
    \begin{equation}\label{eq:minimization}
        -\mathcal{J}(\hat{u})=\inf_{u\in\mathcal{L}^{2}}(-\mathcal{J}(u)),
    \end{equation}
	and that $-\mathcal{J}$ is proper because, by the hypothesis of well-posedness of $\mathcal{J}$, $\mathcal{J}(\mathcal{L}^{2})\subset \mathbb{R}$. Since $\mathcal{J}$ is supposed to be G\^ateaux differentiable and concave in $\mathcal{L}^{2}$,  the fact that the set of optimal strategies $\hat{u} \in \mathcal{L}^{2}$ according to \eqref{eq:minimization} (or \eqref{eq:optimal_strategy}) coincides with the set of solutions to \eqref{eq:FOC} is ensured by \cite[Proposition 17.4]{bauschke2017}.	Moreover, by \cite[Proposition 17.39]{bauschke2017}, the opposite performance functional $-\mathcal{J}$ is lower semi-continuous in $\mathcal{L}^{2}$. Consequently, if we further require $-\mathcal{J}$ to be  coercive, an application of \cite[Proposition 11.14]{bauschke2017} yields the existence of at least one optimal strategy $\hat{u} \in \mathcal{L}^{2}$ satisfying \eqref{eq:minimization}, hence \eqref{eq:optimal_strategy}, too. Such an optimal strategy $\hat{u} \in \mathcal{L}^{2}$ is unique when $-\mathcal{J}$ is  strictly convex by \cite[Corollary 11.8]{bauschke2017}. This completes the proof of \ref{point:concave}.\\
	As for the assertion in  \ref{point:strictly_concave}, it follows by the same arguments as before combined with \cite[Corollary 11.16]{bauschke2017}. The theorem is now fully proved.
\end{proof}

Given an admissible kernel $G$ and an admissible impact function $h$, we can readily observe that assuming either assumption (i) or assumption (ii) ensures that $\mathcal{J}$ is well-defined in $\mathcal{L}^{2}$.

Therefore, it remains to prove that $\mathcal{J}$ is
\begin{enumerate}
    \item strongly concave,
    \item Gâteaux differentiable, and to compute its derivative.
\end{enumerate}
This is the objective of the next two subsections.

\subsection{\texorpdfstring{$\mathcal{J}$ is Gâteaux differentiable}{J is Gateaux differentiable}} \label{ss:gateau_diff}

	The next lemma shows that, under suitable requirements on the impact function $h$, recall also Remark \ref{rem_conditions_h}, the performance functional $\mathcal{J}$ is G\^ateaux differentiable in $\mathcal{L}^{2}$. In the sequel, we denote by $\mathbf{1}$ the linear integral operator in $\mathcal{L}^{2}$ defined by  
\begin{equation*}
	(\mathbf{1}v)_{t} := \int_{0}^{T} \mathbbm{1}_{\{s\le t\}} v_{s} \dd s = \int_{0}^{t} v_{s} \dd s, \quad t\in [0,T], \quad v\in \mathcal{L}^{2}.
\end{equation*}
\begin{lemma}\label{thm:FOC}
    Suppose that the impact function $h\colon \mathbb{R}\to \mathbb{R}$ satisfies assumption (i) from Definition \ref{def:admissible_impact_function}. Then the functional $\mathcal{J}$ in \eqref{eq:gain_functional_alpha_impact_cost_trade_of} is well-defined and G\^ateaux differentiable in $\mathcal{L}^{2}$, with G\^ateaux differential given by 
    \begin{equation}\label{Gat_diff_J}
         \nabla \mathcal{J}(u)=
         \alpha - X_{0} \left( \phi (T-\cdot) + \varrho \right) - \gamma u - \mathbf{A}(u) - \left( \mathbf{H}_{\phi,\varrho}(u) \right)_{\cdot}
         - \left( \mathbf{H}_{\phi,\varrho}^{*}(u) \right)_{\cdot},\quad u\in \mathcal{L}^{2},
    \end{equation}
    where $\mathbf{A}$ and $H_{\phi,\varrho}$ are respectively defined in \eqref{eq:def_A_operator} and \eqref{eq:kernel_H}.
\end{lemma}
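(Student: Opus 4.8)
The plan is to first check that $\mathcal J$ is real-valued on $\mathcal L^2$, then differentiate $\epsilon\mapsto\mathcal J(u+\epsilon v)$ at $\epsilon=0$ along an arbitrary direction $v\in\mathcal L^2$, and finally recognize the resulting bounded linear functional of $v$ as $\langle\nabla\mathcal J(u),v\rangle$ with $\nabla\mathcal J(u)$ given by \eqref{Gat_diff_J}. For well-posedness, assumption (i) and the mean value theorem give both the linear bound $|h(x)|\le|h(0)|+\norm{h'}_\infty|x|$ and the Lipschitz bound $|h(x)-h(y)|\le\norm{h'}_\infty|x-y|$. Since $g\in\mathcal L^2$ and $\mathbf G$ is bounded on $\mathcal L^2$ by \eqref{eq:estimate_on_admissible_G_operator}, we get $Z^u\in\mathcal L^2$ and hence $h(Z^u)\in\mathcal L^2$; moreover $X^u=X_0+\mathbf 1u\in\mathcal L^2$ with $\sup_{t\in[0,T]}\E[(X^u_t)^2]<\infty$ as already observed below \eqref{eq:running_inventory}. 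Consequently each of the five terms in \eqref{eq:gain_functional_alpha_impact_cost_trade_of} is absolutely integrable by the Cauchy--Schwarz inequality, so $\mathcal J(\mathcal L^2)\subset\mathbb R$.

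For Gâteaux differentiability, fix $u,v\in\mathcal L^2$ and use $X^{u+\epsilon v}=X^u+\epsilon\mathbf 1v$ and $Z^{u+\epsilon v}=Z^u+\epsilon\mathbf Gv$. The contributions of $\alpha_tu_t$, $\tfrac\gamma2u_t^2$, $\tfrac\phi2(X^u_t)^2$ and $\tfrac\varrho2(X^u_T)^2$ to $\mathcal J(u+\epsilon v)$ are quadratic polynomials in $\epsilon$, whose derivatives at $\epsilon=0$ are $\langle\alpha,v\rangle$, $-\gamma\langle u,v\rangle$, $-\phi\,\E[\int_0^TX^u_t(\mathbf 1v)_t\,\dd t]$ and $-\varrho\,\E[X^u_T(\mathbf 1v)_T]$, respectively. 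The only delicate term is $-\E[\int_0^Th(Z^u_t+\epsilon(\mathbf Gv)_t)(u_t+\epsilon v_t)\,\dd t]$: its difference quotient (with respect to the value $h(Z^u_t)u_t$ at $\epsilon=0$) equals
\begin{equation*}
\frac{h(Z^u_t+\epsilon(\mathbf Gv)_t)-h(Z^u_t)}{\epsilon}\,(u_t+\epsilon v_t)+h(Z^u_t)v_t,
\end{equation*}
which converges pointwise $(\dd t\otimes\mathbb P)$-a.e.\ to $h'(Z^u_t)(\mathbf Gv)_tu_t+h(Z^u_t)v_t$ and is dominated, for $|\epsilon|\le1$ and by the mean value theorem, by the $\epsilon$-independent $\mathcal L^1$ function $\norm{h'}_\infty|(\mathbf Gv)_t|\big(|u_t|+|v_t|\big)+|h(Z^u_t)||v_t|$. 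Dominated convergence then yields
\begin{equation*}
\frac{\dd}{\dd\epsilon}\Big|_{\epsilon=0}\E\Big[\int_0^Th(Z^{u+\epsilon v}_t)(u_t+\epsilon v_t)\,\dd t\Big]=\big\langle h'(Z^u)u,\mathbf Gv\big\rangle+\big\langle h(Z^u),v\big\rangle=\big\langle\mathbf A(u),v\big\rangle,
\end{equation*}
where the last step uses the adjoint formula \eqref{eq:dual_operator_def} and the definition \eqref{eq:def_A_operator} of $\mathbf A$.

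It remains to rewrite the two inventory contributions. Writing $X^u=X_0+\mathbf 1u$ and using Fubini's theorem, the tower property and the adjoint identity \eqref{eq:dual_operator_def} (applied to the kernels $\mathbbm 1_{\{s<t\}}$ and $(T-t)\mathbbm 1_{\{s<t\}}$), together with the elementary identity $\int_0^T(T-t)\big[(\mathbf 1u)_tv_t+u_t(\mathbf 1v)_t\big]\dd t=\int_0^T(\mathbf 1u)_t(\mathbf 1v)_t\,\dd t$ (integration by parts, using $(\mathbf 1u)_0=(\mathbf 1v)_0=0$), one obtains, after separating the $X_0$-dependent affine part and recalling \eqref{eq:kernel_H},
\begin{equation*}
-\phi\,\E\Big[\int_0^TX^u_t(\mathbf 1v)_t\,\dd t\Big]-\varrho\,\E\big[X^u_T(\mathbf 1v)_T\big]=-\big\langle X_0(\phi(T-\cdot)+\varrho)+\mathbf H_{\phi,\varrho}u+\mathbf H_{\phi,\varrho}^{*}u,\,v\big\rangle.
\end{equation*}
Summing the four contributions shows that $\epsilon\mapsto\mathcal J(u+\epsilon v)$ is differentiable at $0$ with derivative $\big\langle\alpha-X_0(\phi(T-\cdot)+\varrho)-\gamma u-\mathbf A(u)-\mathbf H_{\phi,\varrho}u-\mathbf H_{\phi,\varrho}^{*}u,\,v\big\rangle$ for every $v\in\mathcal L^2$; since this is a bounded linear functional of $v$, it equals $\langle\nabla\mathcal J(u),v\rangle$, which proves \eqref{Gat_diff_J}. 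The one genuinely technical point is the dominated-convergence argument for the nonlinear term; the rest is bookkeeping with Fubini's theorem and the adjoint operators of Definition \ref{def:admissible_kernel}.
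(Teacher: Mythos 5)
Your proposal is correct and follows essentially the same route as the paper: term-by-term Gâteaux differentiation, with a dominated-convergence argument (using the mean value theorem and boundedness of $h'$) for the nonlinear term $\langle h(Z^u),u\rangle$, and Fubini/tower-property manipulations to recast the inventory penalties as $X_0(\phi(T-\cdot)+\varrho)+\mathbf H_{\phi,\varrho}u+\mathbf H_{\phi,\varrho}^{*}u$. The only differences are cosmetic (you dominate the full difference quotient at once rather than splitting it into two pieces, and you use an integration-by-parts identity where the paper expands via Fubini directly), so no further comment is needed.
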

  
\begin{proof}
Note that, for $u\in \mathcal{L}^{2}$, the definition of $\mathcal{J}$ in  \eqref{eq:gain_functional_alpha_impact_cost_trade_of} can be rewritten as follows
\begin{equation} \label{eq:J_def_rewrite}
	\mathcal{J}(u) = \langle \alpha, u \rangle
	- \frac{\gamma}{2} \norm{u}^2
    - \langle h(Z^u), u\rangle
    - \frac{\phi}{2}\norm{X^u}^2
    - \frac{\varrho}{2} \mathbb{E}\left[|X^u_T|^2\right]
    + X_{0}\E S_{T} =: \sum_{i=1}^{6}\mathcal{J}_i(u).
\end{equation}
We now proceed to G\^ateaux differentiate all the addends in \eqref{eq:J_def_rewrite}. Evidently, $\langle \nabla \mathcal{J}_1(u), f\rangle=\langle \alpha, f\rangle$, $\langle \nabla \mathcal{J}_2(u), f\rangle= \langle -\gamma u, f\rangle$ and $\langle \nabla \mathcal{J}_6(u), f\rangle=0$ for every $f\in \mathcal{L}^{2}.$ As for $\mathcal{J}_3$, we let $\epsilon\in (-1,1)\setminus\{0\}$ and compute 
the incremental ratio in a direction $f\in \mathcal{L}^{2}$, namely
\begin{align}\label{eq:J_3_diff1}
	\notag&\frac{\mathcal{J}_3(u+\epsilon f)-\mathcal{J}_3(u)}{\epsilon}=-\mathbb{E}\left[\int_{0}^{T}\frac{h(Z^{u+\epsilon f}_t)-h(Z^u_t)}{\epsilon}(u_t+\epsilon f_t)\,\dd t \right] - \mathbb{E}\left[\int_{0}^{T}h(Z^u_t) f_t \dd t\right]\\
	&\qquad =\notag - \mathbb{E}\left[\int_{0}^{T}\frac{h(Z^{u+\epsilon f}_t)-h(Z^u_t)}{\epsilon}u_t \dd t \right] - \mathbb{E}\left[\int_{0}^{T}{\left(h(Z^{u+\epsilon f}_t) - h(Z^u_t)\right)}f_t \dd t \right] - \langle h(Z^u), f\rangle \\
	& \qquad =:
	\mathbf{\upperRomannumeral{1}}^{u,f}(\epsilon) +
	\mathbf{\upperRomannumeral{2}}^{u,f}(\epsilon) - \langle h(Z^u), f\rangle.
\end{align}
We then study the limits of $\mathbf{\upperRomannumeral{1}}^{u,f}(\epsilon)$ and $\mathbf{\upperRomannumeral{2}}^{u,f}(\epsilon)$ as $\epsilon \to 0$. Notice that, from the definition in \eqref{eq:def_Z}, $Z^{u+\epsilon f}_t=Z^{u}_t + \epsilon \left( \mathbf{G} f\right)_{t},\, t\in [0,T]$, so that, by Lipschitz continuity of $h$ with constant $\norm{h'}_\infty$, we immediately get
\[
\left|h(Z^{u+\epsilon f}_t)-h(Z^u_t)\right||f_t|\le \epsilon \norm{h'}_\infty |f_t| \left| \left( \mathbf{G} f\right)_{t} \right| \underset{\epsilon \to 0}{\longrightarrow} 0, \quad \dd t \otimes \mathbb{P}-\text{a.e}.
\] 
Since $f, \; \mathbf{G} f \in \mathcal{L}^{2} \subset \mathcal{L}^{1}$, the dominated convergence theorem entails that $\lim_{\epsilon \to 0}\mathbf{\upperRomannumeral{2}}^{u,f}(\epsilon)=0$. Regarding $\mathbf{\upperRomannumeral{1}}^{u,f}(\epsilon)$, by the tower property of the conditional expectation and Fubini's theorem, 
\begin{align*}
    \notag \mathbf{\upperRomannumeral{1}}^{u,f}(\epsilon) &
    = - \mathbb{E}\left[\int_{0}^{T}\frac{h(Z^{u+\epsilon f}_t)-h(Z^u_t)}{Z^{u+\epsilon f}_t - Z^{u}_t} \left( \mathbf{G} f\right)_{t} \mathbbm{1}_{\left\{\left( \mathbf{G} f\right)_{t} \neq 0\right\}}u_t\,\dd t \right]
    \\&\underset{\epsilon\to 0}{\longrightarrow} \,
    -\mathbb{E}\left[\int_{0}^{T}h'(Z^u_t) \left( \mathbf{G} f\right)_{t} u_t \dd t\right]
    = - \mathbb{E} \left[ \int_{0}^{T} \left(\mathbf{G}^{\ast}\left(h'(Z^u)u\right)\right)_{t} f_t \dd t \right] \\
    & = - \mathbb{E}\left[ \int_{0}^{T} \left(\mathbf{G}^{\ast}\left(h'(Z^u)u\right)\right)_{t} f_t \dd t \right] = -
    \langle \mathbf{G}^{\ast}\left(h'(Z^u)u\right), f\rangle.
    \end{align*}
The passage to the limit under the integral sign in the previous step is again justified by the dominated convergence theorem, which can be applied since $\left(\dd t \otimes \mathbb{P}\right)-${a.e.},
\begin{equation*}
    \frac{\left|h(Z^{u+\epsilon f}_t)-h(Z^u_t)\right|}{\epsilon} \mathbbm{1}_{\left\{\left( \mathbf{G} f\right)_{t} \neq 0\right\}} |u_t|\le \norm{h'}_\infty\left|\left( \mathbf{G} f\right)_{t} \right||u_t|\in \mathcal{L}^1.
\end{equation*}
Therefore, thanks to the previous computations, we take the $\lim_{\epsilon \to 0}$ in \eqref{eq:J_3_diff1} to infer that
 \[
 	\big\langle \nabla \mathcal{J}_3(u), f \rangle = - \langle \mathbf{G}^\ast (h'(Z^u)u) + h(Z^u), f\big\rangle.
 \] 
Recall the definition of $X^u$ in \eqref{eq:running_inventory}, which in particular gives for any $f \in \mathcal{L}^{2}$,  $X_t^{u+\epsilon f}=X_t^u-\epsilon \int_{0}^{t}f_s\dd s, \; t \in [0,T], \; \mathbb{P}-$a.s. Then, expanding the squares, taking the limit $\epsilon \to 0$, and applying Fubini readily yields
\begin{equation*}
    \big\langle \nabla \left(\mathcal{J}_4(u) + \mathcal{J}_5(u) \right), f \rangle = - \langle \phi \mathbf{1}^{\ast} X^{u} + \varrho \E_{\cdot} \left[ X_{T}^{u} \right], f\big\rangle.
\end{equation*}
Notice that, by Fubini's theorem, for any $t \in [0,T]$,
\begin{align*}
    \phi \left(\mathbf{1}^{\ast} X^{u}\right)_{t} + \varrho \E_{t} \left[ X_{T}^{u} \right] & = \phi X_{0} (T-t) + \E_{t} \int_{0}^{T} \left( \int_{0}^{s} \mathbbm{1}_{\{s>t\}} u_{r} \dd r \right) \dd s + \varrho X_{0} + \E_{t} \int_{0}^{T} \varrho u_{r} \dd r \\
    & = X_{0} \left( \phi (T-t) + \varrho \right) + \E_{t} \int_{0}^{T} \left( \int_{r}^{T} \mathbbm{1}_{\{s>t\}} \dd s \right) u_{r} \dd r + \E_{t} \int_{0}^{T} \varrho u_{r} \dd r \\
    & = X_{0} \left( \phi (T-t) + \varrho \right) + \int_{0}^{t} \left( \int_{r}^{T} \mathbbm{1}_{\{s>t\}} \dd s \right) u_{r} \dd r \\
    & \quad + \int_{t}^{T} \left( \int_{r}^{T} \mathbbm{1}_{\{s>t\}} \dd s \right) \E_{t} u_{r} \dd r + \int_{0}^{t} \varrho u_{r} \dd r + \int_{0}^{t} \varrho \E_{t} u_{r} \dd r,
\end{align*}
and 

\begin{equation*}
    \begin{cases}
        \int_{0}^{t} \left( \int_{r}^{T} \mathbbm{1}_{\{s>t\}} \dd s \right) u_{r} \dd r = \int_{0}^{t} \left( \int_{t}^{T} \dd s \right) u_{r} \dd r = \int_{0}^{t} \left( T-t \right) u_{r} \dd r \\
        \int_{t}^{T} \left( \int_{r}^{T} \mathbbm{1}_{\{s>t\}} \dd s \right) \E_{t} u_{r} \dd r = \int_{t}^{T} \left( \int_{r}^{T} \dd s \right) \E_{t} u_{r} \dd r = \int_{t}^{T} \left( T-r \right) \E_{t} u_{r} \dd r
    \end{cases},
\end{equation*}
so that
\begin{align} \notag
    \phi \left(\mathbf{1}^{\ast} X^{u}\right)_{t} + \varrho \E_{t} \left[ X_{T}^{u} \right] & = X_{0} \left( \phi (T-t) + \varrho \right) + \int_{0}^{t} \left( \phi (T-t) + \varrho \right) u_{r} \dd r \\ \notag
    & \quad + \int_{t}^{T} \left( \phi (T-r) + \varrho \right) \E_{t} u_{r} \dd r \\ \label{eq:expression_H_phi_rho}
    & = X_{0} \left( \phi (T-t) + \varrho \right) + \left( \mathbf{H}_{\phi,\varrho}(u) \right)_{t} + \left( \mathbf{H}_{\phi,\varrho}^{*}(u) \right)_{t}.
\end{align}
Consequently,
\begin{equation*}
    \big\langle \nabla \left(\mathcal{J}_4(u) + \mathcal{J}_5(u) \right), f \rangle = - \langle X_{0} \left( \phi (T-t) + \varrho \right) + \left( \mathbf{H}_{\phi,\varrho}(u) \right)_{t}
    + \E_{t} \left( \mathbf{H}_{\phi,\varrho}^{*}(u) \right)_{t}, f\big\rangle,
\end{equation*}
where the admissible kernel $H_{\phi,\varrho}$ is defined in \eqref{eq:kernel_H}. Combining the previous computations, we obtain the G\^ateaux derivatives of all the terms in \eqref{eq:J_def_rewrite}, which  yields \eqref{Gat_diff_J} and completes the proof.  
\end{proof}

\subsection{\texorpdfstring{If $\mathbf{A}$ is monotone, then $\mathcal{J}$ is strongly concave}{If A is monotone, then J is strongly concave}}

Motivated by Theorem \ref{thm:concave}, we aim to determine suitable  conditions on the impact function $h$, the endogenous signal $g$ and the kernel $G$  to ensure the (strong) concavity of the performance functional $\mathcal{J}$.  This is done in the next lemma, where we focus on the operator $\mathbf{A}$ defined in \eqref{eq:def_A_operator}, which is indeed fully determined by $h, \; g$ and $G$.

\begin{lemma}\label{thm:conc_conditions}
	Suppose that the map $h \colon \mathbb{R} \to  \mathbb{R}$ is differentiable, with bounded derivative $h'$. The operator $\mathbf{H}_{\phi,\varrho}$ is positive semi-definite. As a consequence, if the operator  $\mathbf{A} \colon \mathcal{L}^{2} \to \mathcal{L}^{2}$ in \eqref{eq:def_A_operator} is monotone in the sense of \eqref{eq:definition_monotone}, then the functional $\mathcal{J}$ in \eqref{eq:gain_functional} is $\gamma-$strongly concave.
\end{lemma}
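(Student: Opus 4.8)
The plan is to exploit the decomposition \eqref{eq:J_def_rewrite} of $\mathcal{J}$ and the Gâteaux derivative formula \eqref{Gat_diff_J} obtained in Lemma \ref{thm:FOC}. First I would establish the positive semi-definiteness of $\mathbf{H}_{\phi,\varrho}$: since $H_{\phi,\varrho}(t,s) = (\phi(T-t)+\varrho)\mathbbm{1}_{\{t>s\}}$, a direct computation via Fubini's theorem gives, for $u\in\mathcal{L}^2$,
\[
\langle u, \mathbf{H}_{\phi,\varrho}u\rangle = \mathbb{E}\left[\int_0^T (\phi(T-t)+\varrho)\Big(\int_0^t u_s\,\dd s\Big) u_t\,\dd t\right],
\]
and one recognizes (after an integration by parts in $t$, using $\frac{\dd}{\dd t}(\int_0^t u_s\dd s)^2 = 2u_t\int_0^t u_s\dd s$) that this equals a nonnegative combination of $\mathbb{E}[(X^u_T - X_0)^2]$-type terms plus $\phi\,\mathbb{E}[\int_0^T(\int_0^t u_s\dd s)^2\dd t]$; more cleanly, it coincides with $\phi\|\mathbf{1}u\|^2/2$-type and $\varrho\,\mathbb{E}[|(\mathbf{1}u)_T|^2]/2$ contributions after symmetrization, which are manifestly $\ge 0$. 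Equivalently, one invokes the identity \eqref{eq:expression_H_phi_rho} already derived in the proof of Lemma \ref{thm:FOC}, which identifies $\mathbf{H}_{\phi,\varrho}(u)+\mathbf{H}^*_{\phi,\varrho}(u)$ with $\phi\mathbf{1}^*X^u + \varrho\,\mathbb{E}_\cdot[X^u_T]$ (modulo the $X_0$ term), so that $\langle u,(\mathbf{H}_{\phi,\varrho}+\mathbf{H}^*_{\phi,\varrho})u\rangle = \phi\|\mathbf{1}u\|^2 + \varrho\,\mathbb{E}[|(\mathbf{1}u)_T|^2]\ge 0$, and $\langle u,\mathbf{H}_{\phi,\varrho}u\rangle = \langle u,\mathbf{H}^*_{\phi,\varrho}u\rangle$ by the adjoint property.

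Next I would prove strong concavity. The standard route is to use the characterization: a Gâteaux differentiable functional $\mathcal{T}$ on a Hilbert space is $\beta$-strongly concave in the sense of \eqref{def:strongly_concave} if and only if $-\nabla\mathcal{T}$ is $\beta$-strongly monotone, i.e. $\langle u-v, \nabla\mathcal{T}(v)-\nabla\mathcal{T}(u)\rangle \ge \beta\|u-v\|^2$ for all $u,v$. Using \eqref{Gat_diff_J}, the terms $\alpha$ and $-X_0(\phi(T-\cdot)+\varrho)$ are constant in $u$ and cancel in the difference; the term $-\gamma u$ contributes exactly $\gamma\|u-v\|^2$; the operator $\mathbf{A}$ contributes $\langle u-v, \mathbf{A}(u)-\mathbf{A}(v)\rangle \ge 0$ by the monotonicity hypothesis \eqref{eq:definition_monotone}; and the operator $\mathbf{H}_{\phi,\varrho}+\mathbf{H}^*_{\phi,\varrho}$ contributes $\langle u-v,(\mathbf{H}_{\phi,\varrho}+\mathbf{H}^*_{\phi,\varrho})(u-v)\rangle \ge 0$ by the positive semi-definiteness just established (this operator being linear, the difference is simply its action on $u-v$). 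Summing,
\[
\langle u-v, \nabla\mathcal{J}(v)-\nabla\mathcal{J}(u)\rangle = \gamma\|u-v\|^2 + \langle u-v,\mathbf{A}(u)-\mathbf{A}(v)\rangle + \langle u-v,(\mathbf{H}_{\phi,\varrho}+\mathbf{H}^*_{\phi,\varrho})(u-v)\rangle \ge \gamma\|u-v\|^2,
\]
which is $\gamma$-strong monotonicity of $-\nabla\mathcal{J}$, hence $\gamma$-strong concavity of $\mathcal{J}$.

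To make the ``if and only if'' rigorous I would integrate along the segment: for fixed $u,v$ set $\varphi(\theta) := \mathcal{J}(\theta u + (1-\theta)v)$, note $\varphi'(\theta) = \langle \nabla\mathcal{J}(\theta u+(1-\theta)v), u-v\rangle$ by the definition of the Gâteaux derivative and the chain rule (valid here because $\mathcal{J}$ is a sum of a quadratic form in $u$ and a term involving $h(Z^u)$ with $h$ $C^1$, so $\varphi$ is $C^1$), and then the strong monotonicity estimate applied to the two points $\theta u+(1-\theta)v$ and $v$ gives $\varphi'(\theta) - \varphi'(0) \le -\gamma\theta\|u-v\|^2$; integrating twice over $[0,1]$ against the appropriate kernel yields \eqref{def:strongly_concave} with $\beta=\gamma$. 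The main obstacle is the regularity needed to justify $\varphi\in C^1$ and the exchange of derivative and the segment parametrization — but this is precisely what Lemma \ref{thm:FOC} and assumption (i) on $h$ (differentiability with bounded $h'$) provide, since the only nonlinear term $\langle h(Z^u),u\rangle$ has already been shown there to be Gâteaux differentiable with the stated derivative; so no genuinely new difficulty arises beyond bookkeeping. I would close by remarking that, combined with Theorem \ref{thm:concave}\ref{point:strictly_concave}, this immediately gives existence and uniqueness of the optimal strategy and reduces \eqref{eq:optimal_strategy} to the first-order condition \eqref{eq:FOC}, i.e. to the Fredholm equation \eqref{eq:nonlinearfredholm}.
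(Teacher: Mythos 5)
Your proposal is correct and follows essentially the same route as the paper: both establish positive semi-definiteness of $\mathbf{H}_{\phi,\varrho}$ via the identity \eqref{eq:expression_H_phi_rho}, and both reduce $\gamma$-strong concavity of $\mathcal{J}$ to the monotonicity of the gradient of $-\mathcal{J}-\tfrac{\gamma}{2}\|\cdot\|^2$, which by \eqref{Gat_diff_J} is exactly $\langle u-v,\mathbf{A}(u)-\mathbf{A}(v)\rangle+\langle u-v,(\mathbf{H}_{\phi,\varrho}+\mathbf{H}_{\phi,\varrho}^{\ast})(u-v)\rangle\ge 0$. The only cosmetic difference is that the paper cites \cite[Propositions 10.8 and 17.10]{bauschke2017} for the equivalence between strong monotonicity of the negative gradient and strong concavity, whereas you rederive it by integrating along the segment.
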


\begin{proof}
	We define the functional $\widetilde{\mathcal{J}}$ in $\mathcal{L}^{2}$  by
    \begin{equation*}
        \widetilde{\mathcal{J}}(u):= - \mathcal{J}(u) - \frac{\gamma}{2} \norm{u}^2,\quad u \in \mathcal{L}^{2}.
    \end{equation*}
	The aim of the proof consists in showing that $\widetilde{\mathcal{J}}$ is convex because, by \cite[Proposition 10.8]{bauschke2017}, this fact is equivalent to the $\gamma-$strong convexity of $-\mathcal{J}$.
	By Lemma~\ref{thm:FOC}, $\widetilde{\mathcal{J}}$ is G\^ateaux differentiable. Hence, \cite[Proposition 17.10]{bauschke2017} gives the equivalence between the convexity of $\widetilde{\mathcal{J}}$ and the monotonicity of the G\^ateaux gradient $\nabla\widetilde{\mathcal{J}}$, i.e.,
	\begin{equation*}
	\langle u-v, \nabla \widetilde{\mathcal{J}}(u)-\nabla \widetilde{\mathcal{J}}(v)\rangle\ge0,\quad u,\, v\in \mathcal{L}^{2}.
	\end{equation*}
    By \eqref{Gat_diff_J} while using the definition of $\mathbf A$ from \eqref{eq:def_A_operator}, we readily get for $u, \; v \in \mathcal{L}^{2}$
    \begin{equation*}
        \left\langle \nabla \widetilde{\mathcal{J}}(u)-\nabla \widetilde{\mathcal{J}}(v), u-v \right\rangle = \left \langle \mathbf{A}(u)-\mathbf{A}(v), u-v \right \rangle + \left \langle \left(\mathbf{H}_{\phi,\varrho} + \mathbf{H}_{\phi,\varrho}^{\ast} \right)(u-v), u-v \right \rangle.
    \end{equation*}
    It is then sufficient to prove that $\mathbf{H}_{\phi,\varrho}$ is positive semi-definite in the sense of \eqref{eq:sdp_operator_def}. Using \eqref{eq:expression_H_phi_rho}, we obtain
    \begin{align*}
        \left \langle \left(\mathbf{H}_{\phi,\varrho} + \mathbf{H}_{\phi,\varrho}^{\ast} \right)(u-v), u-v \right \rangle & = \left \langle \phi \mathbf{1}^{\ast} X^{u-v} + \varrho X_{T}^{u-v}, u-v \right \rangle \\
        & = \phi \left \langle X^{u-v}, \mathbf{1} \left(u-v\right) \right \rangle + \varrho \left \langle X_{T}^{u-v}, u-v \right \rangle \\
        & = \mathbb{E}\left[ \phi \int_{0}^{T}\left(\int_{0}^{t}(u_s-v_s)\dd s\right)^2\dd t  + \varrho \bigg(\int_{0}^{T}(u_t-v_t)\dd t\bigg)^2 \right] \ge 0.
    \end{align*}
    The proof is now complete.
\end{proof}

\subsection{Putting everything together}

We can now complete the proof of Theorem~\ref{T:mainnonlinear}.

\begin{proof}[Proof of Theorem~\ref{T:mainnonlinear}]
The assumptions of Theorem \ref{T:mainnonlinear} enable us to apply Lemmas \ref{thm:FOC}-\ref{thm:conc_conditions} to deduce that the performance functional $\mathcal{J}$ in \eqref{eq:gain_functional} is well-defined, G\^ateaux differentiable and $\gamma-$strongly concave in $\mathcal{L}^{2}$. Consequently, by Theorem \ref{thm:concave}\ref{point:strictly_concave}, there exists a unique optimal strategy $\hat{u} \in \mathcal{L}^{2}$ that satisfies \eqref{eq:optimal_strategy}, which is also the unique solution to \eqref{eq:FOC}. By the G\^ateaux derivative expression \eqref{Gat_diff_J} of $\mathcal{J}$, this is equivalent to solving the nonlinear stochastic Fredholm equation \eqref{eq:nonlinearfredholm}, completing the proof.    
\end{proof}

\section{Proof of Proposition \ref{P:one_exponential_case}} \label{s:sufficient_conditions_A_monotone}

Let  the admissible kernel $G$ be  such that $k: t \in [0,T] \mapsto G(t,t) > 0$ is a well-defined positive function\footnote{This condition excludes all kernels with a singularity at the origin such as the fractional kernel.}. Suppose that $G$ is differentiable with respect to its first argument, and denote by $\partial_{x}G(t,s), \; t,s \in [0,T]$ the resulting kernel. The associated operator will be denoted by $\mathbf{\partial_{x}G}$  and  assumed to be admissible\footnote{Any combination of exponential kernels with positive weights satisfies such requirements.}.

For $u \in \mathcal{L}^{2}$, the dynamics of $Z^u$ from \eqref{eq:def_Z} are given by 
\begin{equation} \label{eq:dynamics_Z_u}
    \dd Z^u_t= \dd g_{t} + \left( k(t) u_t +(\mathbf{\partial_{x}G}u)_{t} \right) \dd t, \quad Z^u_0 = g(0).
\end{equation}
Using \eqref{eq:dynamics_Z_u}, and then integrating by parts the second addend in the definition \eqref{eq:def_A_operator} of $\mathbf{A}$, we obtain
\begin{align*}
    \left(\mathbf{A}(u)\right)_{t} & = h(Z_t^u) +
    \mathbb E_t \left[ \int_t^T \frac{G(s,t)}{k(s)} h'\left(Z^u_s\right) \dd Z^u_s\right]
    - \mathbb E_t \left[ \int_t^T \frac{G(s,t)}{k(s)} h'\left(Z^u_s\right) \dd g_{s} \right] \\
    & \quad -  \mathbb E_t \left[ \int_t^T \frac{G(s,t)}{k(s)} h'\left(Z^u_s\right) (\mathbf{\partial_{x}G}u)_{s} \dd s \right] \\
    & = \mathbb{E}_t\left[\frac{G(T,t)}{k(T)}h(Z^u_T)\right] -\mathbb{E}_t\left[ \int_{t}^{T}\frac{\partial_{x}G(s,t)k(s)-G(s,t)k'(s)}{k(s)^2}h(Z^u_s) \dd s \right] \\
    & - \mathbb E_t\bigg[\int_t^T  \frac{G(s,t)}{k(s)} h'\left(Z^u_s\right) \dd g_{s} \bigg]
    -\mathbb E_t\bigg[\int_t^T  \frac{G(s,t)}{k(s)} h'\left(Z^u_s\right) \left(\mathbf{\partial_x G}u\right)_{s} \,\dd s\bigg]
    \\&=:(\left( \mathbf{\upperRomannumeral{1}+\upperRomannumeral{2}+\upperRomannumeral{3}}+\mathbf{\upperRomannumeral{4}})(u)\right)_{t}, \quad t \in [0,T].
\end{align*}
Observe that $\mathbf{\upperRomannumeral{1}} \; \colon \; \mathcal{L}^{2}\to\mathcal{L}^{2}$ is a monotone operator as soon as $h$ is a nondecreasing map, since
\begin{align*}
	\left\langle
	\mathbf{\upperRomannumeral{1}}(u)-\mathbf{\upperRomannumeral{1}}(v),u-v
	\right\rangle
&	=\frac{1}{k(T)}
	\mathbb{E}\bigg[\left(h(Z^u_T)-h(Z^v_T)\right)\int_{0}^{T}G(T-t)(u_t-v_t) \dd t\bigg]
	\\&=
	\frac{1}{k(T)}
	\mathbb{E}\Big[\left(h(Z^u_T)-h(Z^v_T)\right)(Z^u_T-Z^v_T)\Big]\ge 0,\quad u,v\in \mathcal{L}^{2}.
\end{align*}
Then,  applying Fubini's Theorem we get
\begin{align}
    \left\langle
    \mathbf{\upperRomannumeral{3}}(u)-\mathbf{\upperRomannumeral{3}}(v),u-v
    \right\rangle
    =-\mathbb{E}\bigg[\int_{0}^{T}(h'(Z^u_t)-h'(Z^v_t))(Z^u_t-Z^v_t) \frac{\dd g_{t}}{k(t)}\bigg], \quad u,v\in\mathcal{L}^{2},
\end{align}
so that the monotonicity of the operator $\mathbf{\upperRomannumeral{3}}$ may not true in general and depends on the dynamics of the endogenous signal $g$. The particular case where $\dd g_{t} = g'(t) dt$ with $g' \ge 0$ (i.e., $g$ is a nondecreasing input curve) and $h'$ is nonincreasing (i.e., the impact function $h$ is concave on the real line) would yield the monotonicity of $\mathbf{\upperRomannumeral{3}}$, but such assumptions may be too restrictive. 

Furthermore, verifying the monotonicity property of the  operators $\mathbf{\upperRomannumeral{2}}$ and $\mathbf{\upperRomannumeral{4}}$ in general is not obvious and depends on the form of the kernel $G$ and its first-argument derivative $\partial_x G$. 

\paragraph*{Case of one exponential for the impact decay.} For this reason, in what follows we assume (i) $h$ to be nondecreasing and we restrict our attention to the exponential Volterra kernel, i.e.,
\[
    G(t,s) = \mathbbm{1}_{\{t\ge s\}}ae^{-b(t-s)}, \quad t,s\in [0,T], \quad a > 0, \; b \ge 0,
\]
to study the monotonicity property of the operator $\mathbf{\upperRomannumeral{5}} := \mathbf{\upperRomannumeral{2}} + \mathbf{\upperRomannumeral{4}}$. In this way, we conveniently use the relation $\partial_xG(t,s)=-bG(t,s)$, for $0\le s <t\le T,$ and the fact that $k \equiv a$. Specifically, by Fubini's Theorem and straightforward calculus, we get for any $u,\; v \in \mathcal{L}^{2}$
\begin{align*}
    \left\langle \mathbf{\upperRomannumeral{5}}(u)-\mathbf{\upperRomannumeral{5}}(v),u-v \right\rangle & = \frac{b}{a}\left\langle h(Z^u)-h(Z^v),\,Z^u-Z^v\right\rangle \\
    & + \quad \frac{b}{a} \left\langle Z^uh'(Z^u)-Z^vh'(Z^v),Z^u-Z^v \right\rangle \\
    & - \frac{b}{a} \mathbb{E}\bigg[ \int_{0}^{T} g_{t} (h'(Z^u_t)-h'(Z^v_t))(Z^u_t-Z^v_t)\dd t \bigg].
\end{align*}
Concluding on the nonnegativity of the above quantity in general depends on the endogenous signal $g$ as was the case for $\mathbf{\upperRomannumeral{3}}$. But if we additionally assume that (ii) $g\equiv 0$ and (iii) $x\mapsto xh'(x)$ is nondecreasing, then clearly $\mathbf{\upperRomannumeral{5}}$ is monotone, which proves Proposition \ref{P:one_exponential_case}.
  
\paragraph*{Case of a sum of exponential time scales for the impact decay.} More generally, one may wonder whether we could derive in a similar way sufficient conditions to get the monotonicity property of the operator $\mathbf{\upperRomannumeral{5}}\colon\mathcal{L}^{2}\to\mathcal{L}^{2}$ in the case of a Volterra kernel given by a finite sum of exponentials, i.e., 
\[
G(t,s)=\mathbbm{1}_{\{t\ge s\}}\sum_{i=1}^{n}a_ie^{-b_i(t-s)}, \quad t,s\in [0,T], \quad 
a_i > 0, \;  b_i \ge 0, \quad i \in \{ 1, \cdots, n \}, \quad n\in\mathbb{N}.
\]
The answer turns out to be negative. Indeed, assume without loss of generality that the mean--reversion speeds $(b_i)_i$ differ from one another. Then, in this case, $k \equiv \sum_{i=1}^{N}a_i=: A$ and 
\[
	\partial_{x}G(t,s)=-\sum_{i=1}^n a_ib_i\exp\{-b_i(t-s)\}, \quad 0 \le s < t \le T.
\]
For any $i=1,\dots, n$, we also define the Volterra kernels $G_i(t,s)=\mathbbm{1}_{\{t\ge s\}}a_ie^{-b_i(t-s)}$, so that $G(t,s)=\sum_{i=1}^n G_i(t,s), \; s,t \in [0,T]$, and $\partial_xG(t,s)=-\sum_{i=1}^n b_iG_i(t,s),\,0<s<t<T$. Considering the initial input curve $g\equiv 0$, we can write 
$$
    Z^u = \sum_{i=1}^n \mathbf{G_i}u =: \sum_{i=1}^n Z^{i,u}, \quad u \in \mathcal{L}^{2}.
$$ 
Direct computations based on the definitions of $\mathbf{\upperRomannumeral{2}}$ and $\mathbf{\upperRomannumeral{4}}$ yield, for every $u\in\mathcal{L}^{2}$, 
\begin{equation*}
	\left(\mathbf{\upperRomannumeral{6}}(u)\right)_{t} = \frac{1}{A}\sum_{i=1}^{N} \bigg(\mathbf{G_i}^\ast\bigg(b_{i}h(Z^u) + h'(Z^u)\bigg(\sum_{j=1}^n b_j Z^{j,u}\bigg)\bigg)\bigg)_{t}, \quad t \in [0,T].
\end{equation*}
It follows by Fubini's Theorem and putting the finite sum inside the integrals that
\begin{align*}
    \left\langle \mathbf{\upperRomannumeral{6}}(u)-\mathbf{\upperRomannumeral{6}}(v),u-v \right\rangle & = \frac{1}{A}\sum_{i=1}^n \mathbb{E}\bigg[ \int_{0}^{T} \left(\mathbf{G_i}(u-v)\right)_{t} \bigg(b_i(h(Z^u_t)-h(Z^v_t)) \\
    &\quad + \bigg(h'(Z^u_t)\bigg(\sum_{j=1}^n b_j Z_{t}^{j,u} \bigg) - h'(Z^v_t)\bigg(\sum_{j=1}^n b_j Z_{t}^{j,v} \bigg)\bigg)\bigg) \dd t \bigg] \\
    & = \frac{1}{A} \mathbb{E}\bigg[\int_{0}^{T}\bigg((h(Z^u_t)-h(Z^v_t))\sum_{i=1}^n b_i(Z^{i,u}_t-Z^{i,v}_t) \\
    & \quad + (Z^u_t-Z^v_t) \bigg(h'(Z^u_t)\bigg(\sum_{j=1}^n b_jZ^{j,u}_t \bigg) - h'(Z^v_t)\bigg(\sum_{j=1}^n b_jZ^{j,v}_t \bigg)\bigg)\bigg)\dd t
    \bigg].
\end{align*}
Therefore, a sufficient condition on $h$ that guarantees the monotonicity of the operator $\mathbf{\upperRomannumeral{6}}$ is 
\begin{multline} \label{suff_condition_f}
\bigg(h\bigg(\sum_{i=1}^n x_i\bigg)-h\bigg(\sum_{i=1}^{n} y_i\bigg)\bigg)\bigg(\sum_{i=1}^n b_i(x_i-y_i)\bigg) \\
+ \bigg(\sum_{i=1}^n (x_i-y_i)\bigg) \bigg( h'\bigg(\sum_{i=1}^n x_i\bigg)\bigg(\sum_{i=1}^n b_i x_i \bigg) - h'\bigg(\sum_{i=1}^n y_i\bigg)\bigg(\sum_{i=1}^n b_i y_i \bigg) \bigg)\ge 0,
\end{multline}
for every $(x_1,\dots,x_n),\,(y_1,\dots, y_n)\in \mathbb{R}^n$. 

\begin{remark}
    Observe that condition \eqref{suff_condition_f} is too strong, as it is not even satisfied for $h(x)=x$. Indeed, in this case, choosing $n=2$, \eqref{suff_condition_f} reads
\[
	\big(x_1+x_2-y_1-y_2\big)(b_1(x_1-y_1)+b_2(x_2-y_2))\ge 0,\quad x_i,\,y_i\in\mathbb{R},
\]
which can be rewritten as 
\[
	b_1(x_1-y_1)^2+b_2(x_2-y_2)^2\ge -(b_1+b_2)(x_1-y_1)(x_2-y_2).
\]
However, this inequality is not satisfied on the entire plane $\mathbb{R}^2.$
\end{remark}

\section{Proof of Theorem \ref{T:existence_beyond_monotonicity} } 
\label{s:proof_existence_beyond_monotonicity}
\subsection{\texorpdfstring{$-\mathcal{J}$ is coercive}{-J is coercive}} \label{ss:coercive}

\begin{lemma} \label{lemma:coercive}
    Let $G$ be an admissible kernel, and let $h \colon \mathbb{R} \to \mathbb{R}$ satisfy Definition \ref{def:admissible_impact_function}(i). 
    Additionally, suppose that one of the following conditions holds.
    \begin{enumerate}
        \item[a)] $h:x \mapsto x$ and $G$ is positive semi-definite;
        \item[b)] Definition \ref{def:admissible_impact_function}(ii) holds;
        \item[c)] the inequality
        \begin{equation*}
            \frac{\gamma}{2} > \norm{h'}_{\infty} \sqrt{T C_{G}}
        \end{equation*}
     is satisfied.
    \end{enumerate}
    Then $- \mathcal{J} \colon \mathcal{L}^2\to \mathbb{R}$ is coercive, i.e., $\lim_{\norm{u}\to \infty } \mathcal{J}(u) = -\infty$. 
\end{lemma}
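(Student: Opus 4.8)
The plan is to bound $\mathcal{J}$ from above and isolate a term $-\tfrac{\gamma}{2}\norm{u}^2$ that dominates all the others. Recalling the decomposition \eqref{eq:J_def_rewrite} of the performance functional,
\[
\mathcal{J}(u) = \langle \alpha, u\rangle - \frac{\gamma}{2}\norm{u}^2 - \langle h(Z^u), u\rangle - \frac{\phi}{2}\norm{X^u}^2 - \frac{\varrho}{2}\,\mathbb{E}\big[|X^u_T|^2\big] + X_0\,\mathbb{E}[S_T], \quad u\in\mathcal{L}^2,
\]
I would first note that, since $\phi,\varrho\ge0$, the two inventory–penalisation terms are nonpositive and may be discarded, the last term is a finite constant, and $\langle\alpha,u\rangle\le\norm{\alpha}\,\norm{u}$ by Cauchy--Schwarz, which is linear in $\norm{u}$. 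Thus coercivity of $-\mathcal{J}$ reduces entirely to an upper bound on $-\langle h(Z^u),u\rangle$, where $Z^u=g+\mathbf{G}u$ obeys $\norm{Z^u}\le\norm{g}+\sqrt{TC_G}\,\norm{u}$ by \eqref{eq:estimate_on_admissible_G_operator}. These same estimates will also confirm that $h(Z^u)\in\mathcal{L}^2$ in each case, so that $\mathcal{J}$ is well-defined on $\mathcal{L}^2$.

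In case~a), $h=\mathrm{Id}$ gives $\langle h(Z^u),u\rangle=\langle g,u\rangle+\langle\mathbf{G}u,u\rangle\ge\langle g,u\rangle$, since $\mathbf{G}$ is positive semi-definite in the sense of \eqref{eq:sdp_operator_def}; hence $-\langle h(Z^u),u\rangle\le\norm{g}\,\norm{u}$ and $\mathcal{J}(u)\le(\norm{\alpha}+\norm{g})\,\norm{u}-\tfrac{\gamma}{2}\norm{u}^2+\mathrm{const}\to-\infty$ as $\norm{u}\to\infty$. In case~c), Definition~\ref{def:admissible_impact_function}(i) yields the linear bound $|h(x)|\le|h(0)|+\norm{h'}_\infty|x|$, whence $\norm{h(Z^u)}\le|h(0)|\sqrt{T}+\norm{h'}_\infty\big(\norm{g}+\sqrt{TC_G}\,\norm{u}\big)$; combining with Cauchy--Schwarz gives, for a suitable constant $c_1>0$,
\[
\mathcal{J}(u)\le c_1(1+\norm{u})+\Big(\norm{h'}_\infty\sqrt{TC_G}-\frac{\gamma}{2}\Big)\norm{u}^2 ,
\]
and the hypothesis $\tfrac{\gamma}{2}>\norm{h'}_\infty\sqrt{TC_G}$ makes the quadratic coefficient strictly negative, which gives coercivity.

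Case~b) is the delicate one, and I expect it to be the main obstacle. Definition~\ref{def:admissible_impact_function}(ii) provides $|h(x)|\le C(1+|x|^\zeta)$ with $0<\zeta<1$, so that $|h(Z^u_t)|\,|u_t|\le C|u_t|+C|Z^u_t|^\zeta|u_t|$. After integrating and applying Cauchy--Schwarz to each of the two terms, one must control $\mathbb{E}\int_0^T|Z^u_t|^{2\zeta}\dd t$; here the key step is to regard $\tfrac{1}{T}\,\dd t\otimes\mathbb{P}$ as a probability measure and apply Jensen's inequality to the concave map $x\mapsto x^\zeta$, obtaining $\mathbb{E}\int_0^T|Z^u_t|^{2\zeta}\dd t\le T^{1-\zeta}\norm{Z^u}^{2\zeta}$. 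Using $\norm{Z^u}\le\norm{g}+\sqrt{TC_G}\,\norm{u}$, the subadditivity $(a+b)^\zeta\le a^\zeta+b^\zeta$, and $\zeta<1$, this leads to $|\langle h(Z^u),u\rangle|\le C'(1+\norm{u}+\norm{u}^{1+\zeta})=o(\norm{u}^2)$, so that $\mathcal{J}(u)\le C'(1+\norm{u}+\norm{u}^{1+\zeta})-\tfrac{\gamma}{2}\norm{u}^2\to-\infty$. The only genuine care required is the tracking of the exponents through the chain Cauchy--Schwarz $\to$ Jensen $\to$ the operator bound \eqref{eq:estimate_on_admissible_G_operator}; cases~a) and~c) are immediate once the decomposition is in hand.
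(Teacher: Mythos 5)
Your proposal is correct and follows essentially the same route as the paper: bound $\mathcal{J}$ from above by dropping the nonpositive inventory-penalisation terms, apply Cauchy--Schwarz, and control $\norm{h(Z^u)}$ via positive semi-definiteness of $\mathbf{G}$ (case a), the sublinear growth bound giving $\norm{h(Z^u)}=O(\norm{u}^{\zeta})$ (case b), or the Lipschitz bound giving a quadratic coefficient $\norm{h'}_\infty\sqrt{TC_G}-\tfrac{\gamma}{2}<0$ (case c). The only cosmetic difference is that you obtain the estimate $\mathbb{E}\int_0^T|Z^u_t|^{2\zeta}\dd t\le T^{1-\zeta}\norm{Z^u}^{2\zeta}$ via Jensen on the normalised product measure, whereas the paper uses H\"older's inequality between the $L^{2\zeta}$ and $L^{2}$ norms; these are equivalent.
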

\begin{proof}
    First of all, if Assumption a) holds, starting from the expression \eqref{eq:gain_functional_alpha_impact_cost_trade_of} of the gain functional $\mathcal{J}$, we readily have
    \begin{align*}
        \mathcal{J}(u) & \le \norm{\alpha-g} \norm{u} + X_{0} \E S_{T} - \frac{\gamma}{2} \norm{u}^{2} - \langle \mathbf{G}u, u \rangle \\
        & \le \norm{\alpha-g} \norm{u} + X_{0} \E S_{T} - \frac{\gamma}{2} \norm{u}^{2},
    \end{align*}
    where the second inequality is consequence of the positive semi-definite property of $G$. Coercivity follows immediately.
    
    Furthermore, starting again from \eqref{eq:gain_functional_alpha_impact_cost_trade_of}, and applying twice the Cauchy-Schwarz inequality, we have
    \begin{align} \notag
        \mathcal{J}(u) & \le \norm{\alpha} \norm{u} + X_{0} \E S_{T} - \frac{\gamma}{2} \norm{u}^{2} - \langle h \left( g_{t} + \mathbf{G}u \right), u \rangle \\ \label{eq:estimate_on_J_for_coercivity}
        & \le \norm{\alpha} \norm{u} + X_{0} \E S_{T} - \frac{\gamma}{2} \norm{u}^{2} + \norm{h \left( g +  \mathbf{G}u \right)} \norm{u}.
    \end{align}

    Suppose Assumption b) holds. By sublinearity of $h$ in the sense of Definition \ref{def:admissible_impact_function}(ii), we fix $0 \leq \zeta < 1$ such that
    \begin{align} \notag
        \norm{h \left( g + \mathbf{G}u \right)} & \leq C_{h} \mathbb{E}\bigg[ \int_{0}^{T} \Big( 1+|g_{t}+\left(\mathbf{G}u\right)_{t}|^\zeta \Big)^2 \dd t \bigg]^{\frac{1}{2}} \\ \notag
        & \leq \sqrt{2} C_{h} \mathbb{E}\bigg[ \int_{0}^{T} \left(1+|g_{t}+\left(\mathbf{G}u\right)_{t}|^{2\zeta} \right) dt \bigg]^{\frac{1}{2}} \\ \notag
        & \leq \sqrt{2} C_{h} \Big( \sqrt{T} + \mathbb{E}[\|g+\mathbf{G}u\|_{L^{2\zeta}}]^{\zeta} \Big) \\ \notag
        & \leq \sqrt{2} C_{h} \Big( \sqrt{T} + T^{\frac{1-\zeta}{2}}\mathbb{E}[\|g+\mathbf{G}u\|_{L^{2}}]^{\zeta} \Big) \\ \notag
        & \leq \sqrt{2} C_{h} \Big( \sqrt{T} + T^{\frac{1-\zeta}{2}}(\|g\| +\|\mathbf{G}u\|)^{\zeta} \Big) \\ \notag
        & \leq \sqrt{2} C_{h} \left( \sqrt{T} + T^{\frac{1-\zeta}{2}} \left( \|g\| + \sqrt{T C_{G}} \|u\| \right)^{\zeta} \right) \\ \label{eq:estimate_on_h}
        & \leq \sqrt{2} C_{h} \left( \sqrt{T} + T^{\frac{1-\zeta}{2}} \left( \|g\|^{\zeta} + \left(T C_{G} \right)^{\frac{\zeta}{2}} \|u\|^{\zeta} \right) \right).
    \end{align}
   Here we obtain the fourth inequality by using the fact that $\|h\|_{L^{2\zeta}}\leq T^{\frac{1-\zeta}{2\zeta}}\|h\|_{L^{2}},\; h\in L^2$ as a direct consequence of Hölder's inequality (with conjugate exponents $\frac{1}{\zeta}, \frac{1}{1-\zeta}$), while the fifth inequality follows by Minkowski's inequality, the sixth one is a consequence of the estimate \eqref{eq:estimate_on_admissible_G_operator} and the last one follows by sub-additivity of $z \mapsto |z|^{\zeta}, \; z \ge 0$.

    Following \eqref{eq:estimate_on_h}, we have
    \begin{equation} \label{eq:asymptotic_behavior_h_of_Z}
        \norm{h \left( g + \left( \mathbf{G}u\right) \right)} = O(\|u\|^{\zeta}) \quad  \text{as } \norm{u} \to \infty,
    \end{equation}
    so that combining \eqref{eq:asymptotic_behavior_h_of_Z} and \eqref{eq:estimate_on_J_for_coercivity}, we get
    \begin{equation*}
        \mathcal{J}(u) \underset{\norm{u} \to \infty}{\sim} - \frac{\gamma}{2} \norm{u}^{2} \underset{\norm{u} \to \infty}{\longrightarrow} - \infty,
    \end{equation*}
    which yields the desired coercivity.

    Finally, by Lipschitz continuity of $h$ and using the estimate \eqref{eq:estimate_on_admissible_G_operator}, we readily obtain
    \begin{align} \notag
        \norm{h \left( g + \mathbf{G}u \right)} & \le \norm{h \left( g \right)} + \norm{h'}_{\infty} \norm{\mathbf{G}u} \\ \label{eq:interm_coercivity_regularize}
        & \le \norm{h \left( g \right)} + \norm{h'}_{\infty} \sqrt{T C_{G}} \norm{u},
    \end{align}
    so that injecting \eqref{eq:interm_coercivity_regularize} into \eqref{eq:estimate_on_J_for_coercivity}, while noting that $1+\zeta<2$, readily yields
    \begin{equation*}
        \mathcal{J}(u) \underset{\norm{u} \to \infty}{\sim} \left( \norm{h'}_{\infty} \sqrt{T C_{G}} - \frac{\gamma}{2} \right) \norm{u}^{2} \underset{\norm{u} \to \infty}{\longrightarrow} - \infty,
    \end{equation*}
    where the divergence holds thanks to Assumption c).
\end{proof}

\subsection{\texorpdfstring{$-\mathcal{J}$ is sequentially weakly lower semi-continuous}{-J is sequentially weakly lower semi-continuous}} \label{ss:wlsc}

We are now interested in proving the sequential weak lower semi-continuity of $-\mathcal{J}$ on $\mathcal{L}^{2}$, that is, given any sequence $(u^n)_n \in \left(\mathcal{L}^{2}\right)^{\mathbb{N}}$ such that $(u^n)_n$ weakly converges to some $u \in \mathcal{L}^{2}$, denoted by $u^{n} \rightharpoonup u$, we aim to show that 
	\begin{equation*}
    	\liminf_{n\to \infty} (-\mathcal{J}(u^{n})) \ge -\mathcal{J}(u).
    \end{equation*}
This property is established in the next lemma, which requires the probability space $\Omega$ to be countable.

\begin{lemma} \label{lemma_wlsc}
	Suppose that $(\Omega, 2^{\Omega},\mathbb{P})$ is a countable probability space, where $2^{\Omega}$ denotes the power set of $\Omega$. If $h \colon \mathbb{R}\to\mathbb{R}$ is Lipschitz continuous and satisfies the strict sublinearity condition in Definition \ref{def:admissible_impact_function} (ii), then $- \mathcal{J}$ is sequentially weakly lower semi-continuous (denoted hereafter by s.w.l.s-c.).
\end{lemma}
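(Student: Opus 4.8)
The plan is to exploit the decomposition \eqref{eq:J_def_rewrite} of the performance functional, which reads
\begin{equation*}
-\mathcal{J}(u)=\frac{\gamma}{2}\norm{u}^2+\frac{\phi}{2}\norm{X^u}^2+\frac{\varrho}{2}\,\mathbb{E}\big[|X^u_T|^2\big]+\langle h(Z^u),u\rangle-\langle\alpha,u\rangle-X_0\,\mathbb{E}[S_T],
\end{equation*}
and to treat the six terms separately along a weakly convergent sequence $u^n\rightharpoonup u$ in $\mathcal{L}^2$ (so $\sup_n\norm{u^n}<\infty$). The three quadratic terms are sequentially weakly lower semi-continuous: the squared norm is so on any Hilbert space, and since $u\mapsto X^u=X_0+\mathbf 1 u$ and $u\mapsto X^u_T=X_0+\int_0^T u_s\,\dd s$ are bounded affine maps from $\mathcal{L}^2$ into $\mathcal{L}^2$ and into $L^2(\Omega,\mathcal F_T,\mathbb P)$ respectively (Cauchy--Schwarz, cf.~\eqref{eq:estimate_on_admissible_G_operator}), they send $u^n\rightharpoonup u$ to $X^{u^n}\rightharpoonup X^u$ and $X^{u^n}_T\rightharpoonup X^u_T$, whence $\liminf_n\norm{u^n}^2\ge\norm{u}^2$, $\liminf_n\norm{X^{u^n}}^2\ge\norm{X^u}^2$ and $\liminf_n\mathbb{E}[|X^{u^n}_T|^2]\ge\mathbb{E}[|X^u_T|^2]$. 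The term $-\langle\alpha,u^n\rangle\to-\langle\alpha,u\rangle$ and the last term is constant. By superadditivity of $\liminf$, the lemma reduces to showing that the single sign-indefinite term \emph{converges}, i.e.\ $\langle h(Z^{u^n}),u^n\rangle\to\langle h(Z^u),u\rangle$.

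It suffices to prove that $h(Z^{u^n})\to h(Z^u)$ \emph{strongly} in $\mathcal{L}^2$: writing $\langle h(Z^{u^n}),u^n\rangle-\langle h(Z^u),u\rangle=\langle h(Z^{u^n})-h(Z^u),u^n\rangle+\langle h(Z^u),u^n-u\rangle$, the first summand is bounded by $\norm{h(Z^{u^n})-h(Z^u)}\,\sup_n\norm{u^n}\to0$ and the second tends to $0$ by weak convergence. To obtain this strong convergence I would first use countability of $\Omega$. Write $\Omega=\{\omega_k\}_{k\ge1}$ with $p_k:=\mathbb P(\{\omega_k\})>0$; since $\mathcal{L}^2$ is a closed subspace of $L^2([0,T]\times\Omega,\dd t\otimes\mathbb P)$, weak convergence of $u^n$ may be tested against any element of the ambient space, in particular against $\mathbbm 1_{\{\omega_k\}}(\cdot)\varphi(\cdot)$ with $\varphi\in L^2([0,T])$; this yields $u^n(\cdot,\omega_k)\rightharpoonup u(\cdot,\omega_k)$ in $L^2([0,T])$ for every $k$. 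As $G$ is admissible, $\mathbbm 1_{[0,t]}(\cdot)G(t,\cdot)\in L^2([0,T])$ for every $t\in[0,T]$, whence $(\mathbf G u^n)_t(\omega_k)=\int_0^t G(t,s)u^n_s(\omega_k)\,\dd s\to(\mathbf G u)_t(\omega_k)$ for every $t$ and every $k$; consequently $Z^{u^n}\to Z^u$ and, by continuity of $h$, $h(Z^{u^n})\to h(Z^u)$ $(\dd t\otimes\mathbb P)$-a.e.

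The second ingredient, and the place where \emph{strict} sublinearity ($\zeta<1$) enters, is equi-integrability. From $|h(x)|\le C(1+|x|^\zeta)$ one gets $|h(Z^{u^n})|^{2/\zeta}\le C'(1+|Z^{u^n}|^2)$, and since $\norm{Z^{u^n}}\le\norm{g}+\sqrt{TC_G}\,\sup_n\norm{u^n}<\infty$ by \eqref{eq:estimate_on_admissible_G_operator}, the family $\{|h(Z^{u^n})|^2\}_n$ is bounded in $\mathcal{L}^{1/\zeta}$ with $1/\zeta>1$; on the finite measure space $([0,T]\times\Omega,\dd t\otimes\mathbb P)$ this forces (de la Vall\'ee-Poussin) that $\{|h(Z^{u^n})|^2\}_n$, and hence also $\{|h(Z^{u^n})-h(Z^u)|^2\}_n$ (dominate by $2|h(Z^{u^n})|^2+2|h(Z^u)|^2$, the last term being a fixed $\mathcal{L}^1$ function), is uniformly integrable. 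Vitali's convergence theorem, together with the a.e.\ convergence from the previous paragraph, then gives $\norm{h(Z^{u^n})-h(Z^u)}^2\to0$, which is what was needed; reassembling the terms proves $\liminf_n(-\mathcal{J}(u^n))\ge-\mathcal{J}(u)$.

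I expect the strong $\mathcal{L}^2$-convergence of $h(Z^{u^n})$ to be the crux. The operator $\mathbf G$ acts fibrewise in $\omega$ (it is of ``identity~$\otimes$~Volterra'' type), so it is not compact once $L^2(\Omega)$ is infinite dimensional; thus weak convergence of $u^n$ does not by itself upgrade to strong convergence of $\mathbf G u^n$, and countability of $\Omega$ is precisely what lets us recover pointwise a.e.\ convergence fibre by fibre. The strict sublinear growth is used precisely to convert the ensuing $\mathcal{L}^2$-boundedness into the uniform integrability required by Vitali; with merely linear growth the quadratic bound on $h(Z^{u^n})$ gives no equi-integrability, and the sign-indefinite pairing $\langle h(Z^{u^n}),u^n\rangle$ could then fail to pass to the limit.
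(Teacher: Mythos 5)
Your proof is correct, and its overall architecture matches the paper's: the same decomposition of $-\mathcal{J}$, the same reduction to continuity of $u\mapsto\langle h(Z^u),u\rangle$ along weakly convergent sequences, countability of $\Omega$ to obtain fibrewise weak convergence $u^n(\cdot,\omega)\rightharpoonup u(\cdot,\omega)$ in $L^2([0,T])$, and strict sublinearity plus Vitali to upgrade pointwise information to strong $\mathcal{L}^2$-convergence of $h(Z^{u^n})$. The one place where you genuinely diverge is the crux step. The paper first proves that the restriction of $\mathbf{G}$ to the deterministic space $L^2([0,T])$ is Hilbert--Schmidt, hence compact, so that fibrewise weak convergence of $u^n(\cdot,\omega)$ yields fibrewise \emph{strong} $L^2([0,T])$-convergence of $\mathbf{G}u^n(\cdot,\omega)$; it then uses the Lipschitz continuity of $h$ to transfer this to $h(Z^{u^n}(\omega))$, establishes uniform integrability of the $\omega$-indexed family $\norm{h(g(\omega)+\mathbf{G}u^n(\omega))}_{L^2}^2$, and applies Vitali over $\Omega$. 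You instead test the fibrewise weak convergence directly against the kernel sections $\mathbbm{1}_{[0,t]}G(t,\cdot)\in L^2([0,T])$ to get pointwise convergence $(\mathbf{G}u^n)_t(\omega)\to(\mathbf{G}u)_t(\omega)$ at every $(t,\omega)$, so that mere continuity of $h$ gives $(\dd t\otimes\mathbb{P})$-a.e.\ convergence, and you then apply de la Vall\'ee-Poussin and Vitali once on the product space. Your route is more elementary (no compactness argument) and uses only continuity rather than Lipschitz continuity of $h$ at this stage; the paper's compactness route gives the stronger intermediate conclusion of fibrewise strong convergence of $\mathbf{G}u^n$ independently of $h$. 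A further minor difference: you justify testing weak convergence against non-adapted functions $\mathbbm{1}_{\{\omega_k\}}\varphi$ via orthogonal projection onto the closed subspace $\mathcal{L}^2$ of $L^2([0,T]\times\Omega)$, whereas the paper uses the optional projection $\mathbb{E}_t[\mathbbm{1}_{\{\bar\omega\}}\psi(t)]$ as an adapted test function; both are valid.
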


\begin{proof}
    Recall from \eqref{eq:gain_functional_alpha_impact_cost_trade_of} that
    \begin{equation*}
        - \mathcal{J}(u) = - \langle \alpha, u \rangle + \frac{\gamma}{2} \norm{u}^{2} + \langle h \left( Z^{u} \right), u \rangle + \frac{\phi}{2} \norm{X^{u}}^{2} + \frac{\varrho}{2} \E \left| X_{T}^{u} \right|^{2} - X_{0} \E S_T, \quad u \in \mathcal{L}^{2}.
    \end{equation*}
   To complete the proof, it is then sufficient to demonstrate that the first five addends in the right-hand side of the equation above are s.w.l.s-c.
    
    First, $\norm{\cdot}^{2}$ is convex and (strongly) continuous, hence s.w.l.s-c., and $\langle \alpha, \cdot \rangle$ is weakly continuous. 
    Furthermore,  $u \mapsto X^{u}$ is a linear bounded operator in $\mathcal{L}^2$, hence it is weakly continuous. Since the operator $\norm{\cdot}^{2}$ is s.w.l.s-c., the corresponding penalization term $\frac{\phi}{2} \norm{X^{u}}^{2}$ is s.w.l.s-c. 
      Analogously, the mapping $u\mapsto X^u_T$ is weakly continuous from $\mathcal{L}^2$ to the space of square integrable, $\mathcal{F}_T-$measurable random variables endowed with the usual norm $(\mathbb{E}[|\cdot|^2])^{1/2}$.  It follows that $\frac{\varrho}{2} \E \left| X_{T}^{u} \right|^{2}$ is s.w.l.s-c., as well. In particular, notice that the four operators discussed so far are weakly lower semi-continuous.

    It remains to study the functional $\left\langle h(Z^\cdot), \cdot \right\rangle$. Fix $(u^{n})_n \in \left(\mathcal{L}^{2}\right)^{\mathbb{N}}$ such that $u^{n}\rightharpoonup u\in\mathcal{L}^2$ and consider, for every $n\in\mathbb{N}$,
    \begin{equation*}
        \left| \left\langle h(Z^{u^{n}}),u^{n}\right\rangle-
    	\left\langle h(Z^u),u\right\rangle\right|\le 
    	\left| \left\langle h(Z^{u}),u^{n}-u\right\rangle\right|
    	+ \left| \left\langle h(Z^{u^{n}}) -h(Z^{u}), u^{n} \right\rangle\right|=:\mathbf{\upperRomannumeral{1}}_n + \mathbf{\upperRomannumeral{2}}_n.
    \end{equation*}
    It is clear that $\lim_{n\to\infty}\mathbf{\upperRomannumeral{1}}_n=0$, hence we only have to focus on $\mathbf{\upperRomannumeral{2}}_n$. By hypothesis, the space $\Omega$ is countable and the probability measure $\mathbb{P}$ is defined on the $\sigma$-algebra $2^{\Omega}$. Therefore, for every $\bar{\omega}\in\Omega$ such that $\mathbb{P}(\{\bar{\omega}\})>0$, from the  weak convergence of $u^{n}$ to $u$ in $\mathcal{L}^{2}$ we infer that, while using the tower property of the conditional expectation, Fubini's Theorem and the fact that $u^{n}$ and $u$ are adapted,  
    \begin{align*}
    \int_{0}^{T} u_{t}^{n}(\bar{\omega}) \psi(t) \dd t & = \frac{1}{\mathbb{P}( \{ \bar{\omega}\})} \mathbb{E}\bigg[ \mathbbm{1}_{\{\bar{\omega}\}}(\omega)\int_{0}^{T} u_{t}^{n}(\bar{\omega}) \psi(t)\dd t \bigg] \\
    & = \frac{1}{\mathbb{P}(\{\bar{\omega}\})} \int_{0}^{T} \mathbb{E}\bigg[ \mathbb{E}_t[\mathbbm{1}_{\{\bar{\omega}\}} \psi(t)](\omega) u_{t}^{n}(\bar{\omega})\bigg] \dd t \\ 
    & = \frac{1}{\mathbb{P}(\{\bar{\omega}\})}  \mathbb{E}\bigg[ \int_{0}^{T} \mathbb{E}_t[\mathbbm{1}_{\{\bar{\omega}\}} \psi(t)](\omega) u_{t}^{n}(\bar{\omega}) \dd t \bigg] \\    
    & \underset{n\to\infty}{\longrightarrow}
    \frac{1}{\mathbb{P}( \{ \bar{\omega}\})}\mathbb{E}\bigg[ \mathbbm{1}_{\{\bar{\omega}\}}(\omega)\int_{0}^{T} u_{t}(\omega) \psi(t)\dd t \bigg]
    = \int_{0}^{T} u_{t}(\bar{\omega}) \psi(t)\dd t, \quad \psi \in L^{2}.
    \end{align*}
    Note that, in the previous step, we consider a progressively measurable version of the process $(t,\omega) \mapsto \mathbb{E}_t[\mathbbm{1}_{\{\bar{\omega}\}}](\omega),$ which exists by, e.g.~the optional projection theorem, hence $\mathbb{E}_.[\mathbbm{1}_{\{\bar{\omega}\}} \phi(.)](\omega) \in \mathcal{L}^{2}$ is a proper test function. This demonstrates that $u^{n}(\omega,\cdot)\rightharpoonup u(\omega,\cdot)$ in $L^{2}$ for $\mathbb{P}-$a.e. $\omega\in \Omega$. Moreover, the restriction of $\mathbf{G}$ to the separable Hilbert space $L^{2}$ is a Hilbert-Schmidt operator. Indeed, denoting by $(e_n)_n$ an orthonormal basis of $L^{2}$, by the monotone convergence theorem and Parseval's identity
     \begin{align*}
     	\sum_{n=1}^{\infty} \norm{\mathbf{G}e_n}_{L^{2}}^2 & = \sum_{n=1}^\infty\int_{0}^{T} \left|\left(\mathbf{G}e_n\right)(t)\right|^2 \dd t \\
        & = \sum_{n=1}^\infty \int_{0}^{T}\bigg|\int_{0}^{t}G(t,s) e_n(s)\dd s\bigg|^2 \dd t \\
        & = \int_{0}^{T} \sum_{n=1}^\infty\left|\langle G(t,\cdot), e_n \rangle_{L^{2}} \right|^2\dd t \le T C_{G} < \infty.
     \end{align*}
     As a result, the restriction of $\mathbf{G}$ to $L^2$ is a compact operator, which in particular maps weakly convergent sequences into strongly convergent ones. Combining the two previous observations with the Lipschitz continuity of $h$ we conclude that
      \begin{equation}\label{strong_Gcompact}
     \lim_{n\to\infty}\norm{h(g_{\cdot}(\omega) + \mathbf{G}u_{\cdot}^{n}(\omega))-h(g_{\cdot}(\omega) + \mathbf{G}u_{\cdot}(\omega))}_{L^{2}}=0,\quad \text{for $\mathbb{P}-$a.e. $\omega\in\Omega$}.
     \end{equation}
     By strict sublinearity of $h$, fix $\zeta \in [0,1)$ such that 
     \begin{equation} \label{eq_strict_sub_for_proof}
        |h(x)| \le C' \left( 1+|x|^\zeta \right), \quad x \in \R, 
    \end{equation}
    is satisfied for some constant $C'>0$. Assume for now $\zeta > 0$ and let us show that the sequence 
    \begin{equation*}
        (\norm{h(g + \mathbf{G}u^{n})-h(g + \mathbf{G}u)}_{L^{2}}^2)_{n} \in \mathbb{R}^{\mathbb{N}}
    \end{equation*} 
     is uniformly integrable. To achieve this, it is enough to show that 
    \begin{equation} \label{est_Vitali}
    \sup_{n\in \mathbb{N}}\mathbb{E}\left[\norm{h(g + \mathbf{G}u^{n})-h(g + \mathbf{G}u)}_{L^{2}}^{\frac{2}{\zeta}}\right]<\infty.
    \end{equation}
    Fix $\omega \in \Omega$. Then, denoting by $C'$ some positive constant, independent of $\omega$ and possibly dependent on $T,h,\zeta$, allowed to change from line to line, we have
    \begin{align} \notag
        \norm{h(g(\omega)+\mathbf{G}u^n(\omega))}_{L^2}^{\frac{2}{\zeta}}
         & \le C'\left(\int_0^T \left(1+|g_t(\omega)|^{2\zeta} + |\left(\mathbf{G}u^n\right)_t(\omega)|^{2\zeta}\right)\dd t \right)^{\frac{1}{\zeta}} \\ \notag
         & \le C'\Bigg( T^{\frac{1}{\zeta}} +\left(\int_0^T |g_t(\omega)|^{2\zeta} \dd t\right)^{\frac{1}{\zeta}}
         +
         \left(\int_0^T |\left(\mathbf{G}u^n\right)_t(\omega)|^{2\zeta} \dd t\right)^{\frac{1}{\zeta}}\Bigg) \\ \label{eq:estimate_intermediate_proof_existence_countable}
         & \le 
         C'\left(1+ \norm{g(\omega)}^2_{L^2}+\norm{\left(\mathbf{G}u^n\right)(\omega)}^2_{L^2}\right),\quad n\in \mathbb{N},
    \end{align}
    where in particular, we used \eqref{eq_strict_sub_for_proof} to get the first inequality, while the last one holds by Jensen's inequality, noting that $1/\zeta>1$. Taking the expected value on \eqref{eq:estimate_intermediate_proof_existence_countable}, given that any weakly convergent sequence is bounded i.e., $\sup_{n\in\mathbb{N}}\norm{u^{n}}^2<\infty$ and applying the estimate \eqref{eq:estimate_on_admissible_G_operator}, we conclude that 
    \[
        \sup_{n\in\mathbb{N}}\mathbb{E}\left[
        \norm{h(g+\mathbf{G}u^n)}_{L^2}^{2\frac{1}{\zeta}}
        \right]
        \le 
        C'\left( 1 + \norm{g}^2 + T C_{G} \sup_{n\in\mathbb{N}}\| u ^n\|^2\right)<\infty.
    \]
    An analogous argument demonstrates that $\mathbb{E}\Big[
    \norm{h(g+\mathbf{G}u)}_{L^2}^{\frac{2}{\zeta}}
    \Big]<\infty$, hence \eqref{est_Vitali} holds. 
    
    Combining \eqref{strong_Gcompact} and \eqref{est_Vitali}, Vitali's convergence theorem yields 
    \begin{equation} \label{eq:h_Z_n_convergence}
        \lim_{n\to\infty}\norm{h(g + \mathbf{G}u^{n})-h(g + \mathbf{G}u)}= 0.
    \end{equation}
    If $\zeta = 0$, then $h$ is bounded and \eqref{eq:h_Z_n_convergence} is immediate by the dominated convergence theorem. Therefore, by the Cauchy-Schwarz inequality,
    \begin{equation*}
        \mathbf{\upperRomannumeral{2}}_n \le \left(\sup_{n\in\mathbb{N}}\norm{u^n}\right)\norm{h(Z^{u^{n}})-h(Z^u)}\underset{n\to \infty}{\longrightarrow}0,
    \end{equation*}
    which proves the desired sequential weak continuity of $\left\langle h(Z^\cdot),\cdot\right\rangle$ and concludes the proof.
\end{proof}

\subsection{Putting everything together} \label{ss:everything_together_existence}

\begin{proof}[Proof of Theorem \ref{T:existence_beyond_monotonicity}]
    Since $\mathcal{L}^{2}$ is a Hilbert space and $-\mathcal{J}$ is coercive and s.w.l.s-c. by Lemma \ref{lemma:coercive}-\ref{lemma_wlsc}, respectively,  an application of \cite[Theorem 1.2, Chapter 1]{Struwe2008} yields the existence of a solution $\hat{u} \in \mathcal{L}^{2}$ satisfying \eqref{eq:optimal_strategy}. Additionally, if we assume  the differentiability of $h$ according to Definition \eqref{def:admissible_impact_function}(i), Fermat's rule for G\^ateaux differentiable functions in Hilbert spaces yields that $\hat{u}$ satisfies the FOC \eqref{eq:nonlinearfredholm}.
\end{proof}

\section{Proof of Proposition \ref{pp:conv_scheme} } \label{ss:proof_conv_scheme}

Recalling the definition of $\tilde{\mathbf{A}}$ in \eqref{eq:Atilde}, the iterates $(u^{[n]})_n$ in \eqref{scheme_update} satisfy, $\left(\dd t \otimes \mathbb{P}\right)-$ a.e.,
    \begin{multline*}
        \gamma u_{t}^{[n]} + \mathbf{A}(u^{[n-1]})_{t} + (\mathbf{G}+
        \mathbf{G}^*)(u^{[n]}-u^{[n-1]})_t
       \\+\mathbf{H}_{\phi,\varrho}u^{[n]}_{t} + \mathbf{H}_{\phi,\varrho}^{*}u^{[n]}_{t} = \alpha_t - X_{0} \left(\phi(T-t)+\varrho\right), \quad n \in \mathbb{N},
    \end{multline*}
    where the kernel $H_{\phi,\varrho}$ is given in \eqref{eq:kernel_H}.
    If we now subtract \eqref{eq:nonlinearfredholm} to the previous equation, and then apply the scalar product of the resulting processes against $u^{[n]}-\hat{u}$, we obtain
    \begin{align*}
        \gamma{\lVert{u^{[n]}-\hat{u}}\rVert}^2= &\left\langle u^{[n]}-\hat{u},\mathbf{A}(\hat{u})-\mathbf{A}(u^{[n-1]})\right\rangle
        -
        \left\langle u^{[n]}-\hat{u},(\mathbf{G}+\mathbf{G}^*)({u}^{[n]}-u^{[n-1]})\right\rangle
        \\ &
        - \left\langle u^{[n]}-\hat{u}, \left(\mathbf{H}_{\phi,\varrho} + \mathbf{H}^\ast_{\phi,\varrho}\right)(u^{[n]}-\hat{u})
        \right\rangle,\quad \mathbb{P}-\text{a.s.}
    \end{align*}
Since, by  hypothesis and  Lemma \ref{thm:conc_conditions}, the operators $\mathbf{G}$ and  $\mathbf{H}_{\phi,\varrho}$ are positive semi-definite,  an application of the Cauchy-Schwarz inequality, \eqref{eq:estimate_on_admissible_G_operator} and \eqref{eq:estimate_on_adjoint}, as well as the triangle inequality with \eqref{eq:def_A_operator}, yields
\begin{align}
&\notag 
\gamma{\lVert{u^{[n]}-\hat{u}}\rVert}  \le 
    {\lVert\mathbf{A}(\hat{u})-\mathbf{A}(u^{[n-1]})\rVert} +2\sqrt{T C_{G}} \lVert\hat{u}-u^{[n-1]}\rVert \\	\notag
    &\qquad  \le {\Big\lVert{h\Big(Z^{\hat{u}}\Big)-h\Big(Z^{u^{[n-1]}}\Big)}\Big\rVert}
        + {\Big\lVert{\mathbf{G}^\ast\Big(h'\Big(Z^{\hat{u}}\Big)\hat{u}-h'\Big(Z^{u^{[n-1]}}\Big)u^{[n-1]}\Big)\Big\rVert}}
        +2\sqrt{T C_{G}} \lVert\hat{u}-u^{[n-1]}\rVert
    \\ \label{eq:conve_proof1}
    &\qquad  =:\mathbf{\upperRomannumeral{1}}_n + \mathbf{\upperRomannumeral{2}}_n
    +
    \mathbf{\upperRomannumeral{3}}_n,\quad n \in \mathbb{N}.
\end{align}
By the Lipschitz continuity of $h$ with constant $\norm{h'}_{\infty}$, recalling the definition \eqref{eq:def_Z} and using the estimate \eqref{eq:estimate_on_admissible_G_operator},
\begin{align*}
\mathbf{\upperRomannumeral{1}}^2_n & \le \norm{h'}_{\infty}^2 \norm{\mathbf{G}(u^{[n-1]}-\hat{u})}
 \le T C_{G} \norm{h'}_{\infty}^2
{\lVert{u^{[n-1]}-\hat{u}}\rVert}^2.
\end{align*}
Next, by the triangle inequality, 
\begin{align*}
    \mathbf{\upperRomannumeral{2}}_n & \leq
    {\Big\lVert{\mathbf{G}^\ast\Big(h'\Big(Z^{u^{[n-1]}}\Big)(\hat{u}-u^{[n-1]})\Big) \Big\rVert}}
    + {\Big\lVert{\mathbf{G}^\ast\Big(\Big(h'\Big(Z^{\hat{u}}\Big)-h'\Big(Z^{u^{[n-1]}}\Big)\Big)\hat{u}\Big) \Big\rVert}} \\
    & =:\mathbf{\upperRomannumeral{2}}_{1,n} + \mathbf{\upperRomannumeral{2}}_{2,n}.
\end{align*} 
Notice that, by boundedness of $h'$ and using the estimate \eqref{eq:estimate_on_adjoint}, we get
\begin{align*}
    \mathbf{\upperRomannumeral{2}}^2_{1,n} \le T C_{G} \norm{h'}_\infty^2 {\lVert{u^{[n-1]}-\hat{u}}\rVert}^2.
\end{align*}
Moreover, letting $L>0$ be the Lipschitz constant of $h'$,  we have
\begin{align*}
    \mathbf{\upperRomannumeral{2}}^2_{2,n} & \le T C_{G} L^2 \norm{\left(\mathbf{G}(\hat{u}-u^{[n-1]})\right) \hat{u}}^{2} \\
    & \le T C_{G}^{2} L^2 \E \bigg[ \int_{0}^{T} \hat{u}_{t}^{2} \dd t \int_{0}^{T} \left(\hat{u}_{s}^{[n-1]} - \hat{u}_{s} \right)^{2} \dd s \bigg]  \le T C_{G}^{2} L^2 M_{\gamma}(\hat{u}) \lVert \hat{u}^{[n-1]} - \hat{u}\rVert^{2},
\end{align*}
where we apply \eqref{eq:estimate_on_adjoint} for the first inequality, Cauchy-Schwarz's inequality as in \eqref{eq:bound_product_term} for the  second, and \eqref{eq:definition_ess_sup} for the last one.

Combining all the above and coming back to \eqref{eq:conve_proof1}, we obtain
\begin{equation*}
    \gamma{\lVert{u^{[n]}-\hat{u}}\rVert} \leq \widetilde{C} {\lVert{u^{[n-1]}-\hat{u}}\rVert},
\end{equation*}
where $\widetilde{C}>0$ is given in \eqref{eq:common_ratio_geometric_sequence}.  Since $\widetilde{C}/\gamma<1$ by \eqref{eq:common_ratio_geometric_sequence}, the previous inequality coincides with \eqref{eq:convergence_rate}, hence 
the proof is complete.

\section{Proofs of Propositions~\ref{pp:stability_with_respect_to_kernel} and \ref{pp:stability_with_respect_to_signals} } \label{s:proofs_stability_results}

\begin{proof}[Proof of Proposition~\ref{pp:stability_with_respect_to_kernel}]
    To start with, we denote by 
    \begin{equation*}
        C_n := \sup_{t\in [0,T]} \int_{0}^{t}\left|G_n(t,s)-G(t,s)\right|^2\dd s, \quad n \in \mathbb{N},
    \end{equation*}
    such that $C_n\to 0$ as $n\to \infty$ by assumption \eqref{eq:condition_stability}. By applying the estimate \eqref{eq:estimate_on_admissible_G_operator}, we readily have
    \begin{equation}\label{estimate_Gnorm}
     	\norm{(\mathbf{G}-\mathbf{G}_n)h}
     	\le \sqrt{TC_n}\norm{h}, \quad h \in \mathcal{L}^{2}.
    \end{equation}
    The required hypotheses enable us to apply Theorem \ref{T:mainnonlinear}, which characterizes the optimal strategies $\hat{u}$ and $\left(\hat{u}_n\right)_{n\in\mathbb{N}}$ as the unique solutions of the nonlinear stochastic Fredholm equation \eqref{eq:nonlinearfredholm}, replacing accordingly $\mathbf{A}$ by $\left(\mathbf{A}_n\right)_{n\in\mathbb{N}}$. Consequently, for every $n \in \mathbb{N}$, $\dd t \otimes \mathbb{P}-$a.e., 
    \begin{equation*}
        \gamma (\left(\hat{u}_{n}\right)_{t}-\hat{u}_t)=\mathbf{A}(\hat{u})(t)-\mathbf{A}_n(\hat{u}_{n})(t)-\left(\mathbf{H}_{\phi,\varrho}\left(\hat{u}_n-\hat{u}\right) \right)_{t}
        - \left(\mathbf{H}^\ast_{\phi,\varrho}\left(\hat{u}_n-\hat{u}\right)\right)_{t}.
    \end{equation*}
    Taking the scalar product in $\mathcal{L}^{2}$ of the previous equation with $\hat{u}_{n}-\hat{u}$ yields
    \begin{equation*}
    	\gamma \norm{\hat{u}_n-\hat{u}}^2
    	= \left\langle\mathbf{A}(\hat{u})-\mathbf{A}_n(\hat{u}_{n}), \hat{u}_{n}-\hat{u}\right\rangle
    	- \left\langle \left(\mathbf{H}_{\phi,\varrho}+\mathbf{H}_{\phi,\varrho}^{\ast}\right) (\hat{u}_{n}-\hat{u}),\hat{u}_{n}-\hat{u} \right\rangle.
    \end{equation*}
    Observing again that $\mathbf{H}_{\phi,\varrho}$ is a positive semi-definite operator in the sense of \eqref{eq:sdp_operator_def} by Lemma \ref{thm:conc_conditions}, recalling also the monotonicity of $\mathbf{A}_n$ in \eqref{eq:definition_monotone} we deduce that 
    \begin{align*}
    	\gamma \norm{\hat{u}_n-\hat{u}}^2
    	& \le \left\langle(\mathbf{A}(\hat{u})-\mathbf{A}_n(\hat{u}))+(\mathbf{A}_n(\hat{u})-\mathbf{A}_n(\hat{u}_{n})), \hat{u}_{n}-\hat{u}\right\rangle
    	\\
        & \le \left\langle\mathbf{A}(\hat{u})-\mathbf{A}_n(\hat{u}), \hat{u}_{n}-\hat{u}\right\rangle \\
        & = \left\langle h(g+\mathbf{G}(\hat{u}))-h(g+\mathbf{G}_{n}(\hat{u})),  \hat{u}_{n}-\hat{u} \right\rangle + \left\langle h'(g+\mathbf{G}(\hat{u}))\hat{u},  \left( \mathbf{G} - \mathbf{G}_{n}\right) \left( \hat{u}_{n}-\hat{u} \right)\right\rangle \\
        & \quad + \left\langle \left(h'(g+\mathbf{G}(\hat{u}))-h'(g+\mathbf{G}_n(\hat{u}))\right) \hat{u}, \mathbf{G}_n(\hat{u}_n-\hat{u}) \right\rangle,
    \end{align*}
    where the last equality follows from the definitions of the operators $(\mathbf{A}_n)_n$ and $\mathbf{A}$ in \eqref{eq:def_A_operator} and an application of Fubini.
    Now, we have by the Cauchy-Schwarz inequality
    \begin{align} \notag 
        \gamma \norm{\hat{u}_n-\hat{u}}^2
    	& \le \norm{h(g+\mathbf{G}(\hat{u}))-h(g+\mathbf{G}_{n}(\hat{u}))}\norm{\hat{u}_n-\hat{u}} \\ \notag
        & \quad + \norm{h'(g+\mathbf{G}(\hat{u}))\hat{u}}\norm{\left( \mathbf{G} - \mathbf{G}_{n}\right) \left( \hat{u}_{n}-\hat{u} \right)} \\ \notag
        & \quad + \left\langle
    	\left(h'(g+\mathbf{G}(\hat{u}))-h'(g+\mathbf{G}_n(\hat{u}))\right) \hat{u}, \mathbf{G}_n(\hat{u}_n-\hat{u}) \right\rangle \\ \label{est_proof_stability}
        &=: \mathbf{\upperRomannumeral{1}}_n + \mathbf{\upperRomannumeral{2}}_n + \mathbf{\upperRomannumeral{3}}_n.
    \end{align}
    Using the fact that $h$ is Lipschitz continuous with constant $\norm{h'}_\infty$ and the estimate \eqref{estimate_Gnorm}, we have
    \begin{equation} \label{est_1}
        \mathbf{\mathbf{\upperRomannumeral{1}}}_n \le \norm{h'}_\infty \norm{(\mathbf{G}-\mathbf{G}_n)(\hat{u})} \norm{\hat{u}_n-\hat{u}} \le \norm{h'}_\infty \sqrt{TC_n}
        \norm{\hat{u}} \norm{\hat{u}_n-\hat{u}}.
    \end{equation}
    As for $\mathbf{\mathbf{\upperRomannumeral{2}}}_n$ in \eqref{est_proof_stability}, by boundedness of $h'$, it is immediate to deduce from \eqref{estimate_Gnorm} that 
    \begin{equation} \label{est_2}
    	\mathbf{\upperRomannumeral{2}}_n \le \norm{h'}_\infty \sqrt{TC_n} \norm{\hat{u}}\norm{\hat{u}_n-\hat{u}}.
    \end{equation}
    Finally, applying Hölder's inequality, the Lipschitz continuity of $h'$ and using again the estimate \eqref{estimate_Gnorm}, we get
    \begin{align*}
        \mathbf{\upperRomannumeral{3}}_n & = \left\langle
    	\left(h'(g+\mathbf{G}(\hat{u}))-h'(g+\mathbf{G}_n(\hat{u}))\right) \hat{u}, \left(\mathbf{G}_n-\mathbf{G}\right)(\hat{u}_n-\hat{u})
    	\right\rangle \\
        & \quad + \left\langle
    	\left(h'(g+\mathbf{G}(\hat{u}))-h'(g+\mathbf{G}_n(\hat{u}))\right) \hat{u}, \mathbf{G}(\hat{u}_n-\hat{u})
    	\right\rangle \\
        & =: \mathbf{\upperRomannumeral{3}}_{1,n} + \mathbf{\upperRomannumeral{3}}_{2,n}
    \end{align*}
    Observe that
    \begin{equation} \label{estimate:rough}
    	\norm{\left(h'(g+\mathbf{G}(\hat{u}))-h'(g+\mathbf{G}_n(\hat{u}))\right) \hat{u}} \le \sqrt{2}\norm{h'}_\infty\norm{\hat{u}}.
    \end{equation}
    Then, by the Cauchy--Schwarz inequality and using the estimates \eqref{estimate_Gnorm} and \eqref{estimate:rough}, we obtain
    \begin{equation} \label{est_3_1}
    	\mathbf{\mathbf{\upperRomannumeral{3}}}_{1,n} \le \sqrt{2TC_n} \norm{h'}_\infty \norm{\hat{u}} \norm{\hat{u}_{n}-\hat{u}}.
    \end{equation}
    Furthermore, since $G$ is admissible according to Definition \ref{def:admissible_kernel}, by the Cauchy-Schwarz inequality, \eqref{eq:estimate_on_admissible_G_operator} and \eqref{estimate:rough} we infer that 
    \begin{align} \label{eq:est_to_be_refined}
    	\mathbf{\mathbf{\upperRomannumeral{3}}}_{2,n} & \le \sqrt{TC_{G}} \norm{\left(h'(g+\mathbf{G}(\hat{u}))-h'(g+\mathbf{G}_n(\hat{u}))\right) \hat{u}} \norm{\hat{u}_{n}-\hat{u}} \\ \label{est_3_2_rough}
        & \le \sqrt{2T C_{G}} \norm{h'}_\infty \norm{\hat{u}} \norm{\hat{u}_{n}-\hat{u}}.
    \end{align}
    Combining all the above estimates \eqref{est_1}--\eqref{est_2}--\eqref{est_3_1}--\eqref{est_3_2_rough} into \eqref{est_proof_stability}, we eventually obtain
    \begin{equation} \label{eq:rough_estimate_norm_diff}
        \norm{\hat{u}_n-\hat{u}} \le \frac{\sqrt{2T} \norm{h'}_\infty \norm{\hat{u}}}{\gamma} \left( \sqrt{C_n} \left( 1 + \sqrt{2} \right) + \sqrt{C_{G}} \right) < \infty,
    \end{equation}
    hence the sequence $(\hat{u}_n)_n$ is bounded in $\mathcal{L}^{2}$, which is a necessary but not sufficient condition to ensure $\norm{\hat{u}_n-\hat{u}} \underset{n \to \infty}{\longrightarrow} 0$.
    To achieve such convergence, first notice that, by the Cauchy-Schwarz inequality,  
    \begin{align} \notag
        \norm{\left( \left( \mathbf{G} - \mathbf{G}_{n} \right) \hat{u} \right) \hat{u}}^{2} & = \E \bigg[ \int_{0}^{T} \left( \int_{0}^{t} \left( G(t,s) - G_{n}(t,s) \right) \hat{u}_{s} \dd s \right)^{2} \hat{u}_{t}^{2} \dd t \bigg] \\ \notag
        & \leq \E \bigg[ \int_{0}^{T} \left( \int_{0}^{t} \left| G(t,s) - G_{n}(t,s) \right|^{2} \dd s \int_{0}^{t} \hat{u}_{s}^{2} \dd s \right) \hat{u}_{t}^{2} \dd t \bigg] \\ \label{eq:bound_product_term}
        & \le C_{n} \mathbb{E} \bigg[ \bigg(\int_{0}^{T} \hat{u}_{t}^{2} \dd t \bigg)^{2} \bigg].
    \end{align}
  We now refine the estimate \eqref{eq:est_to_be_refined}. Leveraging the Lipschitz continuity of $h'$ with constant $L>0$ instead of using the rough estimate \eqref{estimate:rough}, it readily follows from  \eqref{eq:bound_product_term} that 
    \begin{align} \notag
        \mathbf{\mathbf{\upperRomannumeral{3}}}_{2,n} & \le L \sqrt{T C_{G}} \norm{\left( \left( \mathbf{G} - \mathbf{G}_{n} \right) \hat{u} \right) \hat{u}} \norm{\hat{u}_{n}-\hat{u}} \\ \label{est_3_2_refined}
        & \le L \sqrt{T C_{G} C_{n} \mathbb{E} \bigg[ \bigg(\int_{0}^{T} \hat{u}_{t}^{2} \dd t \bigg)^{2} \bigg]} \norm{\hat{u}_{n}-\hat{u}}.
    \end{align}
    Then, using \eqref{est_3_2_refined} instead of \eqref{est_3_2_rough} for $\mathbf{\mathbf{\upperRomannumeral{3}}}_{2,n}$, \eqref{eq:rough_estimate_norm_diff} becomes 
    \begin{equation*}
        \norm{\hat{u}_n-\hat{u}} \le \frac{\sqrt{2TC_{n}} \norm{h'}_\infty \norm{\hat{u}}}{\gamma} \Bigg( 1 + \sqrt{2} + L \sqrt{ \frac{1}{2} C_{G} \mathbb{E} \bigg[ \bigg(\int_{0}^{T} \hat{u}_{t}^{2} \dd t \bigg)^{2} \bigg]} \Bigg) \underset{n \to \infty}{\longrightarrow} 0,
    \end{equation*}
    where the convergence holds due to the assumption \eqref{eq:assumption_on_optimal_strategy_for_stability}, concluding the proof.
\end{proof}

\begin{proof}[Proof of Proposition~\ref{pp:stability_with_respect_to_kernel}]
    By definition of the gain functional in \eqref{eq:gain_functional}, the Cauchy-Schwarz inequality, and the Lipschitz continuity of $h$, there is a constant $C'>0$ such that 
    \begin{equation}\begin{aligned}\label{eq:J_n-J}
        \big|\mathcal{J}_n(u)-\mathcal{J}(u)\big|&=\langle\alpha_n-\alpha,u\rangle+\langle h(g+\mathbf{G}u)-h(g_n+\mathbf{G}u)\alpha,u\rangle\\
        &\leq \|\alpha_n-\alpha\|\|u\|+C'\|g-g_n\|\|u\|.
    \end{aligned}\end{equation}
    Next, due to the convergence assumption, $(\alpha_n)_n$ and $(g_n)_n$ are bounded  in $\mathcal{L}^2$, that is,
    $$ \sup_n \|\alpha_n\|<\infty, \quad  \sup_n \|g_n\|<\infty.$$
    Therefore, it follows from \eqref{eq:estimate_on_J_for_coercivity} and \eqref{eq:estimate_on_h} with $\alpha$ and $g$ replaced by $\alpha_n$ and $g_n$, that there is a constant $M>0$ such that 
    $$
    \sup_{u\in\mathcal{L}^2}\mathcal{J}(u)=\sup_{\substack{u\in\mathcal{L}^2\\\|u\|\leq M}}\mathcal{J}(u),\quad \sup_{u\in\mathcal{L}^2}\mathcal{J}_n(u)=\sup_{\substack{u\in\mathcal{L}^2\\\|u\|\leq M}}\mathcal{J}_n(u),\quad\text{for all }n\in\mathbb{N}.
    $$
    Hence, \eqref{eq:J_n-J} implies that 
    \begin{align*}
        \Big|\sup_{u\in\mathcal{L}^2}\mathcal{J}_n(u)-\sup_{u\in\mathcal{L}^2}\mathcal{J}(u)\Big| & = \Big|\sup_{\substack{u\in\mathcal{L}^2 \\|u\|\leq M}}\mathcal{J}_n(u)-\sup_{\substack{u\in\mathcal{L}^2\\\|u\|\leq M}}\mathcal{J}(u)\Big|\\
        & \leq \sup_{\substack{u\in\mathcal{L}^2\\\|u\|\leq M}}\Big|\mathcal{J}_n(u)-\mathcal{J}(u)\Big|\\
        &\leq  M\big(\|\alpha_n-\alpha\|+C''\|g-g_n\|\big)\underset{n\to\infty}{\longrightarrow}0,
    \end{align*}
    which completes the proof.
\end{proof}

\section{Proof of Theorem \ref{T:existence_uniqueness_beyond_monotonicity} } \label{s:proof_existence_uniqueness_beyond_monotonicity}

 In the following lemma, we establish an \emph{a priori} estimate on the $L^2$-norm of any solution to \eqref{eq:nonlinearfredholm}, which will enable us to deduce properties of a solution to \eqref{eq:nonlinearfredholm} from the signals $\alpha$ and $g$.

\begin{lemma} \label{a_priori_lemma}
Suppose that 
\begin{equation*}
    C_{G}^{\ast} := \sup_{t \in [0,T]} \int_{t}^{T} \left| G(s,t) \right|^{2} \dd s < \infty.
\end{equation*}
Let $h\colon\mathbb{R}\to\mathbb{R}$ be differentiable with bounded derivative $h'$, and  assume that the following bound holds:
\begin{equation}\label{a_priori_precise}
    \gamma > 2\sqrt{T} \max \left\{\widetilde{C}_{H,G}, \; \sqrt{C_{H}} + \frac{1}{2}\norm{h'}_{\infty}
    \left(\sqrt{C_{G}} + \sqrt{T C_{G}^{\ast}}
     \left( \gamma - 2\sqrt{T} \widetilde{C}_{H,G} \right)^{-1} \widetilde{C}_{H,G}
    \right) \right\},
\end{equation}
with
\begin{equation} \label{eq:def_c_h_k_constant}
    \widetilde{C}_{H,G} := \sqrt{C_{H}} + \norm{h'}_\infty \sqrt{C_{G}},
\end{equation}
where the constants $C_{H},\,C_{G}$ associated with the admissible kernels $H_{\phi,\varrho}, \; G$, respectively, are defined in \eqref{eq:constant_norm_of_G}.
Then a solution $\hat{u}\in\mathcal{L}^2$ of \eqref{eq:nonlinearfredholm} satisfies $\mathbb{P}-$a.s. the a priori estimate
\begin{align} \notag
    \norm{\hat{u}}_{L^2} \le \frac{1}{\widetilde{C}_\gamma}
    \Bigg( & \norm{\tilde \alpha}_{L^2} + \norm{h(g)}_{L^2} + \norm{h'}_\infty \sqrt{C_{G}^{\ast}} \left( \gamma - 2 \sqrt{T} \widetilde{C}_{H,G} \right)^{-1} \\ \label{est_apriori}
    & \times \bigg( \bigg(\int_{0}^{T}\mathbb{E}_t\Big[ \norm{\tilde \alpha}_{L^2}^2 \Big]\dd t \bigg)^{\frac{1}{2}} + \bigg( \int_{0}^{T}\mathbb{E}_t\Big[ \norm{h(g)}_{L^2}^2 \Big]\dd t \bigg)^{\frac{1}{2}} \bigg) \Bigg),
\end{align}
where 
\begin{equation} \label{eq:effective_signal_with_soft_penalization}
    \tilde \alpha_v := \alpha_v - X_{0} \left( \phi(T-v) + \varrho \right), \quad v \in [0,T],
\end{equation}
and
\begin{equation} \label{constant_Ctilde}
    \widetilde{C}_\gamma := \gamma - \sqrt{T} \left( 2\sqrt{C_{H}} + \norm{h'}_{\infty} \left( \sqrt{C_{G}} + \sqrt{T C_{G}^{\ast}} \left( \gamma - 2 \sqrt{T} \widetilde{C}_{H,G} \right)^{-1} \widetilde{C}_{H,G} \right) \right).
\end{equation}
\end{lemma}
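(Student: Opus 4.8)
The plan is to treat \eqref{eq:nonlinearfredholm} as a fixed-point identity for $\hat u$ and extract from it a pathwise $L^2([0,T])$ estimate. First I would isolate $\gamma\hat u_t$:
\begin{equation*}
  \gamma\hat u_t = \tilde\alpha_t - h(Z^{\hat u}_t) - \big(\mathbf{G}^{\ast}(h'(Z^{\hat u})\hat u)\big)_t - \big(\mathbf{H}_{\phi,\varrho}\hat u\big)_t - \big(\mathbf{H}_{\phi,\varrho}^{\ast}\hat u\big)_t ,
\end{equation*}
with $Z^{\hat u}_t=g_t+(\mathbf{G}\hat u)_t$ and $\tilde\alpha$ as in \eqref{eq:effective_signal_with_soft_penalization}. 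Fixing $\omega$ outside a null set and taking $L^2([0,T])$-norms, the two ``forward'' terms are controlled pathwise: Lipschitz continuity of $h$ gives $\|h(Z^{\hat u}(\omega))\|_{L^2}\le\|h(g(\omega))\|_{L^2}+\|h'\|_\infty\|\mathbf{G}\hat u(\omega)\|_{L^2}$, while Cauchy--Schwarz and \eqref{eq:estimate_on_admissible_G_operator} yield $\|\mathbf{G}\hat u(\omega)\|_{L^2}\le\sqrt{TC_{G}}\|\hat u(\omega)\|_{L^2}$ and $\|\mathbf{H}_{\phi,\varrho}\hat u(\omega)\|_{L^2}\le\sqrt{TC_{H}}\|\hat u(\omega)\|_{L^2}$; together these contribute the coefficient $\sqrt{T}\,\widetilde{C}_{H,G}$ of $\|\hat u(\omega)\|_{L^2}$, with $\widetilde{C}_{H,G}$ from \eqref{eq:def_c_h_k_constant}.

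The genuinely non-trivial terms are the two adjoint ones, since $\mathbf{G}^{\ast}$ and $\mathbf{H}_{\phi,\varrho}^{\ast}$ involve conditional expectations and cannot be bounded pathwise by $\|\hat u(\omega)\|_{L^2}$ alone. Starting from $(\mathbf{G}^{\ast}v)_t=\int_t^T G(s,t)\,\mathbb{E}_t[v_s]\,\dd s$, Cauchy--Schwarz in $s$ (this is exactly where $C_{G}^{\ast}<\infty$ is used), conditional Jensen, and the tower property give
\begin{equation*}
  \big\|\mathbf{G}^{\ast}\big(h'(Z^{\hat u})\hat u\big)(\omega)\big\|_{L^2}\le\|h'\|_\infty\sqrt{C_{G}^{\ast}}\,\Big(\int_0^T\mathbb{E}_t\big[\|\hat u\|_{L^2}^2\big](\omega)\,\dd t\Big)^{1/2},
\end{equation*}
and the same scheme applied to $\mathbf{H}_{\phi,\varrho}^{\ast}\hat u$ produces a term of the same type, with a finite constant depending only on $H_{\phi,\varrho}$. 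Setting $K(\omega):=\int_0^T\mathbb{E}_t[\|\hat u\|_{L^2}^2](\omega)\,\dd t$, these estimates combine into the $\mathbb{P}$-a.s. inequality
\begin{equation*}
  \big(\gamma-\sqrt{T}\,\widetilde{C}_{H,G}\big)\|\hat u(\omega)\|_{L^2}\le\|\tilde\alpha(\omega)\|_{L^2}+\|h(g(\omega))\|_{L^2}+\|h'\|_\infty\sqrt{C_{G}^{\ast}}\,K(\omega)^{1/2}+\cdots,
\end{equation*}
where the omitted term is the analogous $\mathbf{H}_{\phi,\varrho}^{\ast}$-contribution.

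To close the loop — which is the main obstacle — I would square this inequality, apply $\mathbb{E}_t[\cdot]$, and use the conditional Minkowski inequality together with the tower identity $\mathbb{E}_t\big[\mathbb{E}_r[\|\hat u\|_{L^2}^2]\big]=\mathbb{E}_{t\wedge r}[\|\hat u\|_{L^2}^2]$ (with a progressively measurable version furnished by the optional projection theorem) to produce a pointwise bound for $(\mathbb{E}_t[\|\hat u\|_{L^2}^2])^{1/2}$; taking $L^2([0,T])$-norms in $t$ and using Fubini then converts this into a self-referential inequality $K(\omega)^{1/2}\le A(\omega)+\kappa\,K(\omega)^{1/2}$, where $A(\omega)$ is built from $\int_0^T\mathbb{E}_t[\|\tilde\alpha\|_{L^2}^2]\,\dd t$ and $\int_0^T\mathbb{E}_t[\|h(g)\|_{L^2}^2]\,\dd t$ and the contraction constant $\kappa<1$ precisely under $\gamma>2\sqrt{T}\,\widetilde{C}_{H,G}$. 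Since $\hat u\in\mathcal{L}^2$ forces $K(\omega)<\infty$ a.s., I may solve for $K(\omega)$, which is the source of the factor $(\gamma-2\sqrt{T}\,\widetilde{C}_{H,G})^{-1}$ in \eqref{est_apriori}; substituting back into the displayed inequality and invoking the full smallness requirement \eqref{a_priori_precise} (which guarantees $\widetilde{C}_\gamma>0$, with $\widetilde{C}_\gamma$ as in \eqref{constant_Ctilde}) yields the claimed a priori estimate. I expect the bulk of the effort to lie in tracking the nested conditional expectations and the constants so that the two thresholds on $\gamma$ emerge exactly as stated, rather than in any single analytic step.
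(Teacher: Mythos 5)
Your overall strategy coincides with the paper's: isolate $\gamma\hat u$, take pathwise $L^2([0,T])$-norms, absorb the forward terms with coefficient $\sqrt{T}\,\widetilde{C}_{H,G}$, and control the adjoint terms through conditional second moments of $\hat u$, closed by a contraction argument. The forward estimates and the bound $\|\mathbf{G}^{\ast}(h'(Z^{\hat u})\hat u)\|_{L^2}\le\|h'\|_\infty\sqrt{C_G^{\ast}}\,K(\omega)^{1/2}$ are correct as written.

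The divergence — and the genuine gap relative to the statement — is in how the loop is closed. The paper does \emph{not} close it at the level of $K(\omega)=\int_0^T\mathbb{E}_t[\|\hat u\|_{L^2}^2]\,\dd t$. Instead it conditions the \emph{pointwise} inequality $\gamma|\hat u_v|\le|\tilde\alpha_v|+|h(g_v)|+\cdots$ on $\mathcal F_t$ (restricted to $v>t$) \emph{before} taking any norm, introducing $m^{(t)}_v:=\mathbbm{1}_{\{v>t\}}\mathbb{E}_t[|\hat u_v|]$. Under $\mathbb{E}_t[\cdot]$ the adjoint operators $\mathbf{G}^{\ast},\mathbf{H}^{\ast}_{\phi,\varrho}$ lose their inner conditional expectations (tower property) and become ordinary integral operators acting on the path $m^{(t)}$, so both forward and adjoint contributions are absorbed with the \emph{forward} constants, yielding $\gamma\|m^{(t)}\|_{L^2}\le\|Y^{(t)}\|_{L^2}+2\sqrt{T}\,\widetilde{C}_{H,G}\|m^{(t)}\|_{L^2}$; this is precisely the origin of the factor $(\gamma-2\sqrt{T}\,\widetilde{C}_{H,G})^{-1}$ in \eqref{a_priori_precise}--\eqref{constant_Ctilde}, with $\widetilde{C}_{H,G}$ built from $C_H$ and $C_G$ only. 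Your route (square, apply $\mathbb{E}_t$, conditional Minkowski, $\mathbb{E}_t\mathbb{E}_r=\mathbb{E}_{t\wedge r}$) produces a self-referential inequality for $K^{1/2}$ whose contraction constant is governed by the \emph{adjoint}-side constants $\sqrt{C_G^{\ast}}$ and the analogue $\sqrt{C_H^{\ast}}:=\big(\sup_t\int_t^T|H_{\phi,\varrho}(s,t)|^2\dd s\big)^{1/2}$ (which does not appear anywhere in the stated constants), and moreover $\mathbb{E}_t[K]=\int_0^t\mathbb{E}_r[\|\hat u\|^2_{L^2}]\dd r+(T-t)\mathbb{E}_t[\|\hat u\|^2_{L^2}]$ leaves an extra term to be reabsorbed. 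So while your scheme would deliver \emph{an} a priori estimate of the same shape under \emph{some} largeness condition on $\gamma$, it cannot reproduce \eqref{est_apriori} with the thresholds \eqref{a_priori_precise} and the constant \eqref{constant_Ctilde} as stated — and these exact constants matter downstream, since they are what guarantees \eqref{eq:common_ratio_geometric_sequence} in the proof of Theorem \ref{T:existence_uniqueness_beyond_monotonicity}. To repair this, replace the $K$-based closure by the conditioning-first device with $m^{(t)}$.
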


\begin{proof}
    Fix a solution $\hat{u} \in \mathcal{L}^{2}$ to the nonlinear Fredholm equation \eqref{eq:nonlinearfredholm}. Note that \eqref{eq:nonlinearfredholm} re-writes as
    \begin{equation*}
        \gamma \hat{u}_v = \tilde \alpha_v - \left(\mathbf A(\hat{u})\right)_{v} - \left( \mathbf{H}_{\phi,\varrho}\hat{u} \right)_{v}
        - \left( \mathbf{H}_{\phi,\varrho}^{*}\hat{u} \right)_{v} \quad \left( \dd t \otimes \mathbb{P} \right)-a.e.,
    \end{equation*}
    where $\tilde \alpha$ is given by \eqref{eq:effective_signal_with_soft_penalization}, so that applying the absolute value together with the triangular inequality yields
    \begin{equation} \label{eq:estimate_by_abs_value}
        \gamma \left|\hat{u}_v\right| \le \left|\tilde \alpha_v \right| + \left|\left(\mathbf A(\hat{u})\right)_{v}\right| + \left( \mathbf{H}_{\phi,\varrho}|\hat{u}| \right)_{v}
        + \left( \mathbf{H}_{\phi,\varrho}^{*}|\hat{u}| \right)_{v} \quad \left( \dd t \otimes \mathbb{P} \right)-a.e.
    \end{equation}
    By boundedness of $h'$, $h$ is Lipschitz continuous with constant $\norm{h'}_{\infty}$ satisfying
    \begin{equation} \label{eq:lipschitz_continuity_h}
        \left| h \left( g + \mathbf{G}u \right) \right| \le \left| h \left( g \right) \right| + \norm{h'}_{\infty} \left| \mathbf{G}u \right|,
    \end{equation}
    such that using the definition of $\mathbf{A}$ from \eqref{eq:def_A_operator}, we readily get
    \begin{equation} \label{eq:estimate_on_A_lipschitz}
        \left|\left(\mathbf A(\hat{u})\right)_{v}\right| \le \left|h\left( g_{v} \right)\right| + \norm{h'}_{\infty} \left( \left(\mathbf{G} |\hat{u}| \right)_{v} + \left(\mathbf{G}^{\ast} |\hat{u}| \right)_{v} \right) \quad \left( \dd t \otimes \mathbb{P} \right)-a.e.
    \end{equation}
    Now, combining \eqref{eq:estimate_by_abs_value} and \eqref{eq:estimate_on_A_lipschitz} yields
    \begin{align} \notag
        \gamma \left|\hat{u}_v\right| \le & \left|\tilde \alpha_v\right| + \left|h\left( g_{v} \right)\right| + \norm{h'}_{\infty} \left( \left(\mathbf{G} |\hat{u}| \right)_{v} + \left(\mathbf{G}^{\ast} |\hat{u}| \right)_{v} \right) \\ \label{eq:estimate_before_massacre}
        & + \left( \mathbf{H}_{\phi,\varrho}|\hat{u}| \right)_{v}
        + \left( \mathbf{H}_{\phi,\varrho}^{*}|\hat{u}| \right)_{v} \quad \left( \dd t \otimes \mathbb{P} \right)-a.e.
    \end{align}
   Fix $t \in [0,T]$ and define 
    \begin{equation*}
        m_{v}^{(t)} := \mathbbm{1}_{\{v>t\}}\mathbb{E}_t{[|\hat{u}_v|]},\quad v\in[0,T].
    \end{equation*}
    Then, taking the conditional expectation $\mathbbm{1}_{\{v>t\}} \mathbb{E}_t[\cdot]$ on the last estimate \eqref{eq:estimate_before_massacre} gives, thanks to the tower property,
    \begin{align} \notag
        \gamma m_{v}^{(t)} \le \; & Y_{v}^{(t)} + \mathbbm{1}_{\{v>t\}} \left(\left(\norm{h'}_{\infty} \mathbf{G}^{(t)} + \mathbf{H}^{(t)}_{\phi,\rho} \right) m^{(t)}\right)_{v} \\ \label{eq_estimate_intermediary_m_t}
        & + \left(\left( \norm{h'}_{\infty} \left(\mathbf{G}^{(t)}\right)^{\ast} + \left(\mathbf{H}^{(t)}_{\phi,\rho}\right)^{\ast} \right) m^{(t)}\right)_{v}, \quad \left( \dd t \otimes \mathbb{P} \right)-a.e.
    \end{align}
    Here $\mathbf{G}^{(t)}$ (resp.~$\mathbf{H}^{(t)}_{\phi,\rho}$) is the linear integral $\mathcal{L}^2-$operator associated with the kernel 
    $$G^{(t)}(v,s) := \mathbbm{1}_{\{s>t\}}G(v,s)\quad  \text{(resp.~$H^{(t)}_{\phi,\rho}(v,s) := \mathbbm{1}_{\{s>t\}}H_{\phi,\rho}(v,s)$)},$$ for $s,v\in [0,T]$, and $Y^{(t)}=(Y^{(t)}_v)_{v\ge0}\in \mathcal{L}^2$ is the auxiliary process defined by 
    \begin{align} \notag
        Y^{(t)}_v := \mathbbm{1}_{\{v>t\}} \Big( & \mathbb{E}_t[|\tilde \alpha_v|]+\mathbb{E}_t[|h(g_v)|] + \int_{0}^{t} H_{\phi,\rho}(v,s)|\hat{u}_s|\dd s \\ \label{def:aux_Y}
        & + \norm{h'}_\infty \int_{0}^{t} G(v,s)|\hat{u}_s|\dd s
        \Big),\quad v\in [0,T].
    \end{align}
    Taking the $L^2([0,T])-$norm in the previous estimate \eqref{eq_estimate_intermediary_m_t}, similarly to  \eqref{eq:estimate_on_admissible_G_operator} and \eqref{eq:dual_operator_def},
    \begin{equation*}
        \gamma\lVert{m^{(t)}}\rVert_{L^2}\le \lVert{Y^{(t)}}\rVert_{L^2} + 
        2 \sqrt{T} \widetilde{C}_{H,G} \lVert{m^{(t)}}\rVert_{L^2}, \quad \mathbb{P}-\text{a.s.},
    \end{equation*}
    where $\widetilde{C}_{H,G}$ is given by \eqref{eq:def_c_h_k_constant}. Considering also the joint measurability in $[0,T] \times \Omega$ of the processes $(t,\omega) \mapsto \norm{m^{(t)}(\omega)}_{L^2}$ and $(t,\omega) \mapsto \norm{Y^{(t)}(\omega)}_{L^2}$, it follows from assumption \eqref{a_priori_precise} that
    \begin{equation}\label{est_mt}
        \lVert{m^{(t)}}\rVert_{L^2} \le \left(\gamma-2\sqrt{T} \widetilde{C}_{H,G} \right)^{-1} \lVert{Y^{(t)}}\rVert_{L^2}, \quad \text{for a.e. }t\in [0,T], \quad \mathbb{P}-\text{a.s.}
    \end{equation}
    Furthermore, from \eqref{def:aux_Y} we compute, also employing conditional Jensen's and Jensen's inequalities
    \begin{equation} \label{eq:estimate_on_Y}
        \lVert{Y^{(t)}}\rVert_{L^2} \le \left( \mathbb{E}_t \left[ \norm{\tilde \alpha}_{L^2}^2 \right] \right)^{\frac{1}{2}} + \left( \mathbb{E}_t \left[ \norm{h(g)}_{L^2}^2 \right] \right)^{\frac{1}{2}} + \sqrt{T} \widetilde{C}_{H,G} \norm{\hat{u}}_{L^2} \quad \left(\dd t\otimes \mathbb{P}\right)-\text{a.e.}
    \end{equation}
    In particular, injecting the estimate \eqref{eq:estimate_on_Y} into \eqref{est_mt}, together with the triangular inequality yields
    \begin{align} \notag
        \left(\int_{0}^{T} \lVert{m^{(t)}}\rVert_{L^2}^2\dd t \right)^{\frac{1}{2}} \le & \left(\gamma-2\sqrt{T} \widetilde{C}_{H,G} \right)^{-1} \Bigg( \left( \int_{0}^{T}\mathbb{E}_t \Big[ \norm{\tilde \alpha}_{L^2}^2 \Big]\dd t \right)^{\frac{1}{2}} \\ \label{eq:estimate_integral_m_t}
        & + \left( \int_{0}^{T}\mathbb{E}_t\Big[ \norm{h(g)}_{L^2}^2 \Big]\dd t \right)^{\frac{1}{2}} + T \widetilde{C}_{H,G} \norm{\hat{u}}_{L^2} \Bigg).
    \end{align}
    Furthermore, by the estimate \eqref{eq:lipschitz_continuity_h} and the Cauchy--Schwarz inequality, from \eqref{eq:def_A_operator} we obtain $\mathbb{P}$ almost surely the following estimate:
    \begin{equation} \label{eq:estimate_on_A_L_2_norm}	
        \norm{\mathbf A (\hat{u})}_{L^2} \le \norm{h(g)}_{L^2} + \sqrt{T C_{G}}\norm{h'}_\infty \norm{\hat{u}}_{L^2} + \norm{h'}_\infty \sqrt{C_{G}^{\ast}} \left( \int_{0}^{T} \lVert{m^{(t)}}\rVert^2_{L^2} \dd t \right)^{\frac{1}{2}}.
    \end{equation}
    Starting from \eqref{eq:nonlinearfredholm}, taking the $L^2$-norm, injecting \eqref{eq:estimate_on_A_L_2_norm} and then \eqref{eq:estimate_integral_m_t} yields successively
    \begin{align*}
        \gamma \norm{\hat{u}}_{L^2} & \le \norm{\tilde \alpha}_{L^2} + \norm{\mathbf A (\hat{u})}_{L^2} + 2 \sqrt{T C_{H}} \norm{\hat{u}}_{L^2} \\
        & \le \norm{\tilde \alpha}_{L^2} + \norm{h(g)}_{L^2} + \sqrt{T} \left( 2\sqrt{C_{H}} + \norm{h'}_{\infty} \sqrt{C_{G}} \right) \norm{\hat{u}}_{L^2} \\
        & \quad + \norm{h'}_\infty\sqrt{C_{G}^{\ast}}	\bigg(\int_{0}^{T} \lVert{m^{(t)}}\rVert_{L^2}^2\dd t\bigg)^{\frac{1}{2}} \\ 
        & \le \norm{\tilde \alpha}_{L^2} + \norm{h(g)}_{L^2} + \norm{h'}_\infty \sqrt{C_{G}^{\ast}} \left( \gamma - 2 \sqrt{T} \widetilde{C}_{H,G} \right)^{-1} \\ 
        & \quad \times \bigg( \bigg(\int_{0}^{T}\mathbb{E}_t\Big[ \norm{\tilde \alpha}_{L^2}^2 \Big]\dd t \bigg)^{\frac{1}{2}} + \bigg( \int_{0}^{T}\mathbb{E}_t\Big[ \norm{h(g)}_{L^2}^2 \Big]\dd t \bigg)^{\frac{1}{2}} \bigg) \\
        & \quad + \sqrt{T} \bigg( 2\sqrt{C_{H}} + \norm{h'}_{\infty} \Big( \sqrt{C_{G}} + \sqrt{T C_{G}^{\ast}} \left( \gamma - 2 \sqrt{T} \widetilde{C}_{H,G} \right)^{-1} \widetilde{C}_{H,G} \Big) \bigg) \norm{\hat{u}}_{L^2},
    \end{align*}
    which holds $\mathbb{P}-$a.s. By \eqref{a_priori_precise} and \eqref{constant_Ctilde}, this inequality gives the desired a priori estimate \eqref{est_apriori}, completing the proof.
\end{proof}

Thanks to Lemma \ref{a_priori_lemma}, the following corollary is immediate.
\begin{corollary} \label{cor_Linfinity}
    Under the hypotheses of Lemma \ref{a_priori_lemma}, if $\alpha,\,g\in\mathcal{L}^{\infty} ( \Omega, L^{2}([0,T]))$, that is,
    \begin{equation*}
        \esssup_{\omega\in\Omega}\norm{\alpha(\omega)}_{L^2}<\infty \quad \text{ and }\quad  \esssup_{\omega\in\Omega}\norm{g(\omega)}_{L^2}<\infty,
    \end{equation*}
    then any solution of \eqref{eq:nonlinearfredholm} belongs to the space $ \mathcal{L}^\infty(\Omega;L^2([0,T]))$. 
\end{corollary}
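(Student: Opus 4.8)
The plan is to feed the essential-boundedness hypotheses on $\alpha$ and $g$ directly into the a priori estimate \eqref{est_apriori} of Lemma \ref{a_priori_lemma}, and to verify that every term on its right-hand side is essentially bounded in $\omega$. Recall that \eqref{est_apriori} holds $\mathbb{P}$-a.s., and that the hypotheses of Lemma \ref{a_priori_lemma} — in particular the bound \eqref{a_priori_precise} on $\gamma$ — guarantee that the constant $\widetilde{C}_\gamma$ in \eqref{constant_Ctilde} and the factor $\gamma - 2\sqrt{T}\,\widetilde{C}_{H,G}$ appearing in \eqref{est_apriori} are strictly positive \emph{deterministic} numbers. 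Since all the prefactors are thus fixed constants, it suffices to dominate, up to a $\mathbb{P}$-null set, the four random norms on the right-hand side of \eqref{est_apriori} by deterministic quantities.

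First I would treat the two ``pointwise'' norms. For $\tilde\alpha$, the definition \eqref{eq:effective_signal_with_soft_penalization} together with the triangle inequality in $L^2([0,T])$ gives $\norm{\tilde\alpha(\omega)}_{L^2} \le \norm{\alpha(\omega)}_{L^2} + X_0\big(\int_0^T (\phi(T-v)+\varrho)^2\,\dd v\big)^{1/2}$, where the second summand is a finite deterministic constant because $v \mapsto \phi(T-v)+\varrho$ is bounded on $[0,T]$; hence $K_{\tilde\alpha} := \esssup_{\omega} \norm{\tilde\alpha(\omega)}_{L^2} < \infty$ by the assumption $\alpha \in \mathcal{L}^{\infty}(\Omega, L^2)$. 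For $h(g)$, the boundedness of $h'$ yields the Lipschitz bound $|h(g_v)| \le |h(0)| + \norm{h'}_\infty |g_v|$, so that $\norm{h(g)(\omega)}_{L^2} \le |h(0)|\sqrt{T} + \norm{h'}_\infty \norm{g(\omega)}_{L^2}$, and therefore $K_{h(g)} := \esssup_{\omega}\norm{h(g)(\omega)}_{L^2} < \infty$ from $g \in \mathcal{L}^{\infty}(\Omega, L^2)$.

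The only mildly delicate terms are the two conditional-expectation integrals. From the previous step, $\norm{\tilde\alpha}_{L^2}^2 \le K_{\tilde\alpha}^2$ $\mathbb{P}$-a.s., so by monotonicity of the conditional expectation $\mathbb{E}_t[\norm{\tilde\alpha}_{L^2}^2] \le K_{\tilde\alpha}^2$ for a.e.\ $t$, $\mathbb{P}$-a.s., whence $\big(\int_0^T \mathbb{E}_t[\norm{\tilde\alpha}_{L^2}^2]\,\dd t\big)^{1/2} \le \sqrt{T}\,K_{\tilde\alpha}$ $\mathbb{P}$-a.s.; the identical argument bounds the remaining integral by $\sqrt{T}\,K_{h(g)}$. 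Substituting these four deterministic bounds into \eqref{est_apriori} produces a finite deterministic constant $C$ with $\norm{\hat{u}(\omega)}_{L^2} \le C$ for $\mathbb{P}$-a.e.\ $\omega$, that is, $\hat{u} \in \mathcal{L}^{\infty}(\Omega, L^2([0,T]))$. There is no genuine obstacle beyond this bookkeeping; the single point requiring care is that the conditional-expectation terms are controlled via an a.s.\ bound followed by monotonicity of $\mathbb{E}_t[\cdot]$, rather than by any $\omega$-by-$\omega$ manipulation — which is precisely why the essential-supremum hypotheses on $\alpha$ and $g$, as opposed to mere $\mathcal{L}^2$ integrability, are the correct assumptions here.
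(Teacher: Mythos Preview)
Your proposal is correct and takes essentially the same approach as the paper, which simply declares the corollary immediate from Lemma~\ref{a_priori_lemma}; you have correctly spelled out the routine verification that each term on the right-hand side of \eqref{est_apriori} is essentially bounded, including the conditional-expectation terms via the a.s.\ bound and monotonicity of $\mathbb{E}_t[\cdot]$.
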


\begin{proof}[Proof of Theorem \ref{T:existence_uniqueness_beyond_monotonicity}]
    Theorem \ref{T:existence_beyond_monotonicity} ensures the existence of an optimal trading strategy $\hat{u}\in\mathcal{L}^2$ that satisfies the nonlinear stochastic Fredholm equation \eqref{eq:nonlinearfredholm}.
    For every  $\gamma$ sufficiently large, following the a priori estimate \eqref{est_apriori} in Lemma \ref{a_priori_lemma}, and an application of Corollary \ref{cor_Linfinity} under assumption \eqref{eq:alpha_infinity}, any solution of \eqref{eq:nonlinearfredholm} belongs to $\mathcal{L}^{\infty} \left( \Omega, L^{2}([0,T]) \right)$ and satisfies \eqref{eq:common_ratio_geometric_sequence}. Hence, Proposition \ref{pp:conv_scheme} applies and guarantees the convergence of the iterative scheme \eqref{scheme_initialization}-\eqref{scheme_update} to $\hat{u}$, which is then unique solution in $\mathcal{L}^2$. This completes the proof.
\end{proof}

\appendix

\section{Empirical PnL and error with the Least Squares Monte Carlo (LSMC)} \label{s:pnl_error_scheme}

For this section, unless stated otherwise,  in the gain functional $\mathcal{J}$ from \eqref{eq:gain_functional_alpha_impact_cost_trade_of} we set: $\gamma = 1$, $\phi = \varrho = 0$ and $G := G_{\nu, \epsilon}$, the fractional kernel from \eqref{eq:def_shifted_frac_kernel} with $\xi = 1$, $\epsilon=0$ and $\nu=0.6$. The impact function is  $h_{x_{0},c}$ from \eqref{eq:impact_function_specification}  with $(x_{0}, c) = (0.1, 0.6)$, and the stochastic signal is parametrized as the integral of an Ornstein-Uhlenbeck process (see \eqref{eq:drift_signal_specification}) with volatility $\tilde \xi = 5$, mean long-term level $\tilde \theta = 40$, mean-reversion speed $\tilde \kappa=5$ and initial value $\tilde I_0 = 10$. We consider $M=10 000$ trajectories (including antithetic paths) and  $100$ time-steps.

We also define the following state variables: 
\begin{align*}
    X^{\tilde\kappa} & := \int_{0}^{t} e^{-\tilde\kappa (t-s)} \alpha_{s} \dd s \\
    X^{i} & := \xi_{i} \int_{0}^{t} e^{-x_{i} (t-s)} \alpha_{s} \dd s, \; i \in \{1,\cdots,5\}, \\
    X^{frac, ref} & := \int_{0}^{t} G_{\nu, \epsilon}(t,s) \alpha_{s} \dd s
\end{align*}
where $(\xi_{i}, x_{i})_{i \in \{1,\cdots,5\}}$ are given in Table~\ref{tab:optimal_weights_mean_reversions_approx_shifted_fractional_kernel}.

Notice that both solving the linear Fredholm equation \eqref{scheme_update} as well as computing the empirical error metric $E^{N}\left( u^{[n]} \right), \; n \geq 1$ \eqref{eq:empirical_error_metric} depend in general on how well the conditional expectations in these expressions are respectively estimated. In the particular case of the LSMC technique, the quality of estimation depends on:
\begin{enumerate}
    \item the choice of the regression variables: the regression basis $\left( \alpha, \int_{0}^{.} \alpha_{s} \dd s, X^{\tilde{\kappa}} \right)$ displays the best PnL maximization and error minimization results in Figure \ref{fig:impact_regression_basis}; 
    
    \item the number $M \in \mathbb{N}^{*}$ of sample trajectories: the more trajectories, the better the convergence as illustrated in Figure \ref{fig:impact_nber_trajs};
    
    \item the choice of the expansion basis, which we specify as a family of orthonormal polynomials, and has no significant impact on the quality of convergence in our case, when testing Chebyshev, Legendre, Laguerre, Hermite polynomials. The maximum degree $d$ of the basis from \eqref{eq:basis_expansion_definition} has also a minor impact in all the provided examples: by default we use $d=2$.
\end{enumerate}

\begin{figure}[H]
    \centering
    \begin{subfigure}[t]{0.48\textwidth}
        \centering
        \includegraphics[width=\textwidth, trim=0cm 0cm 0cm 0cm, clip]{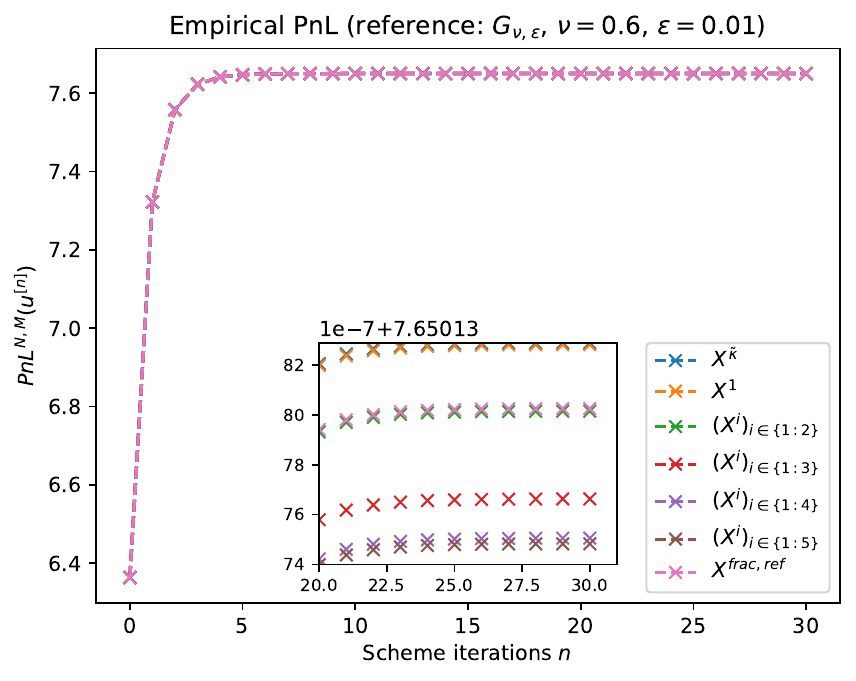} 
    \end{subfigure}
    \hfill 
    \begin{subfigure}[t]{0.48\textwidth}
        \centering
        \includegraphics[width=\textwidth, trim=0cm 0cm 0cm 0cm, clip]{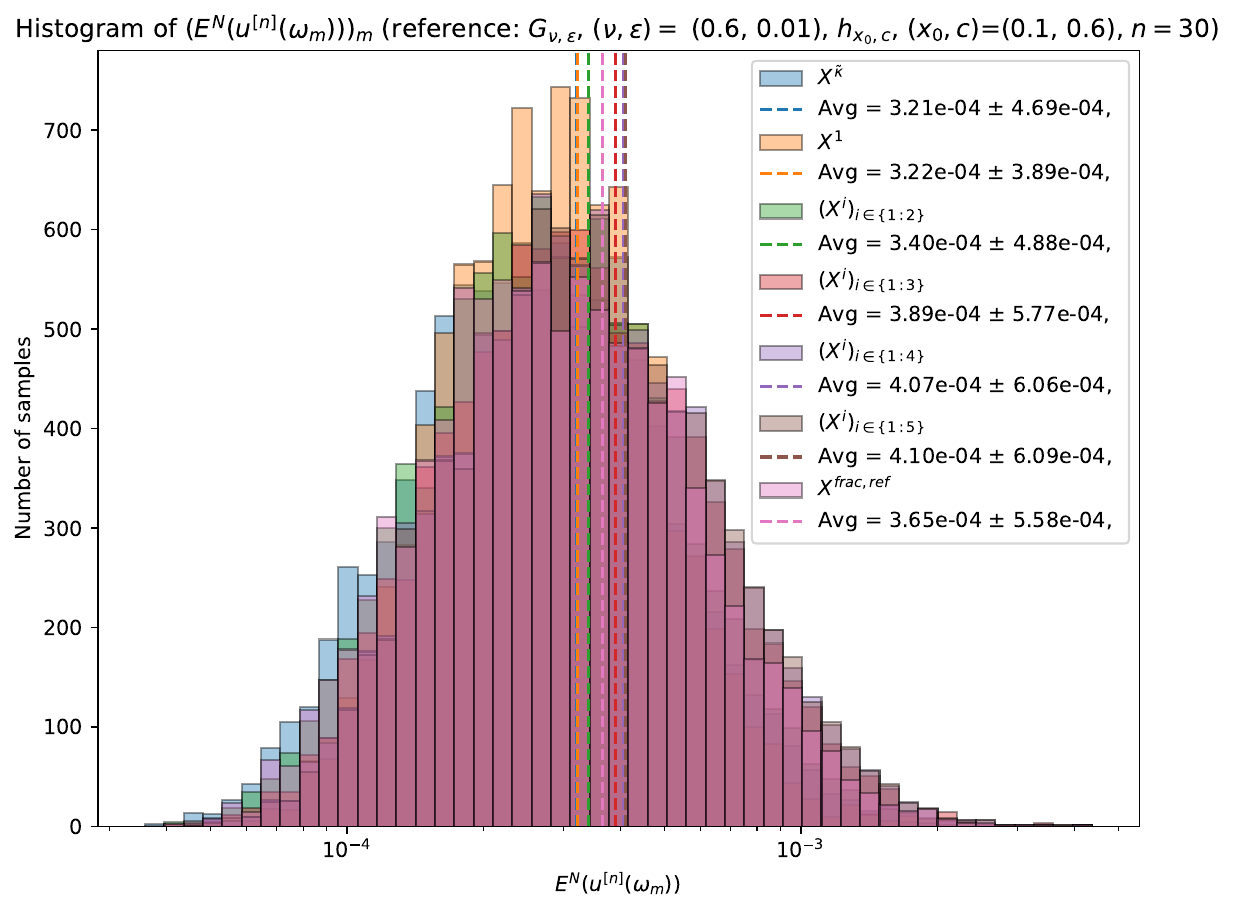} 
    \end{subfigure}
    
    \caption{Impact of the choice of regression variables in the Least Square Monte Carlo on the empirical $PnL^{N,M}$ from \eqref{eq:empirical_pnl_functional_alpha_impact_cost_trade_of} through the scheme \eqref{scheme_initialization}--\eqref{scheme_update} iterations $n \in \mathbb{N}$ (left) and the histograms of the empirical errors $\left( E^{N}(u^{[n]}(\omega_{m}) \right)_{m}$ of the scheme from \eqref{eq:empirical_error_metric_per_omega} (the respective empirical averages $E^{N,M}(u^{[n]})$ from \eqref{eq:empirical_error_metric} are displayed in dashed line) after $n=30$ iterations (right). In each case, we estimate the conditional expectations \eqref{eq:conditional_expectations_to_estimate} required to solve the linear Fredholm equation at each iteration by LSMC with the regression variables $\left( \alpha, \int_{0}^{.} \alpha_{s} \dd s \right)$ and additionally those mentioned respectively in the legend. Laguerre polynomials up to degree $2$ are used for the basis expansion. For a fair comparison, in all the cases, the conditional expectations in the error metric \eqref{eq:empirical_error_metric_per_omega} are estimated using the regression basis $\left( \alpha, \int_{0}^{.} \alpha_{s} \dd s, X^{\tilde{\kappa}} \right)$ with a Laguerre polynomial basis expansion up to degree $3$. We apply a Ridge regularization with intensity $1e-6$ for all linear regressions.}
    \label{fig:impact_regression_basis}
\end{figure}

\begin{figure}[H]
    \centering
    \begin{subfigure}[t]{0.48\textwidth}
        \centering
        \includegraphics[width=\textwidth, trim=0cm 0cm 0cm 0cm, clip]{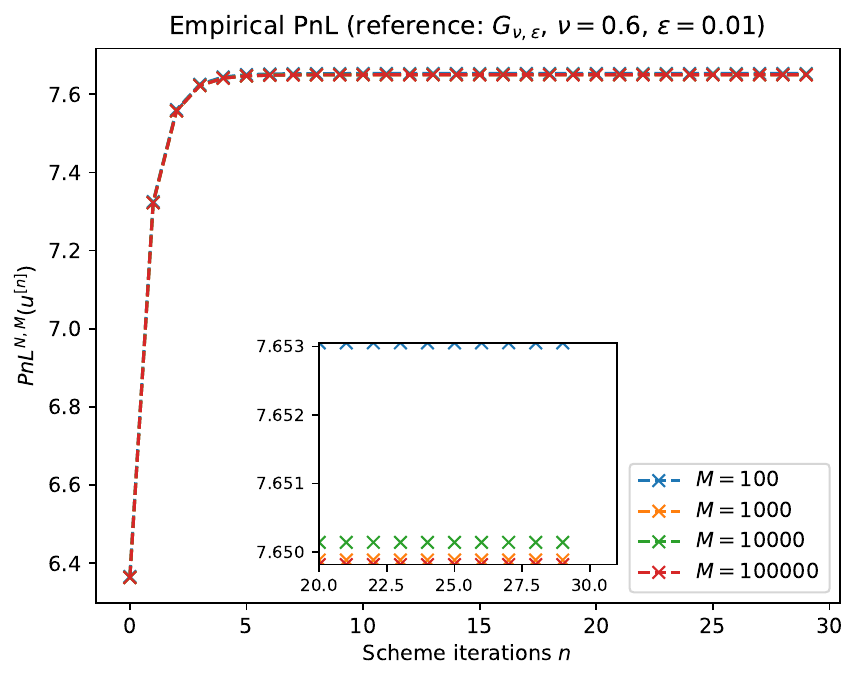} 
    \end{subfigure}
    \hfill 
    \begin{subfigure}[t]{0.48\textwidth}
        \centering
        \includegraphics[width=\textwidth, trim=0cm 0cm 0cm 0cm, clip]{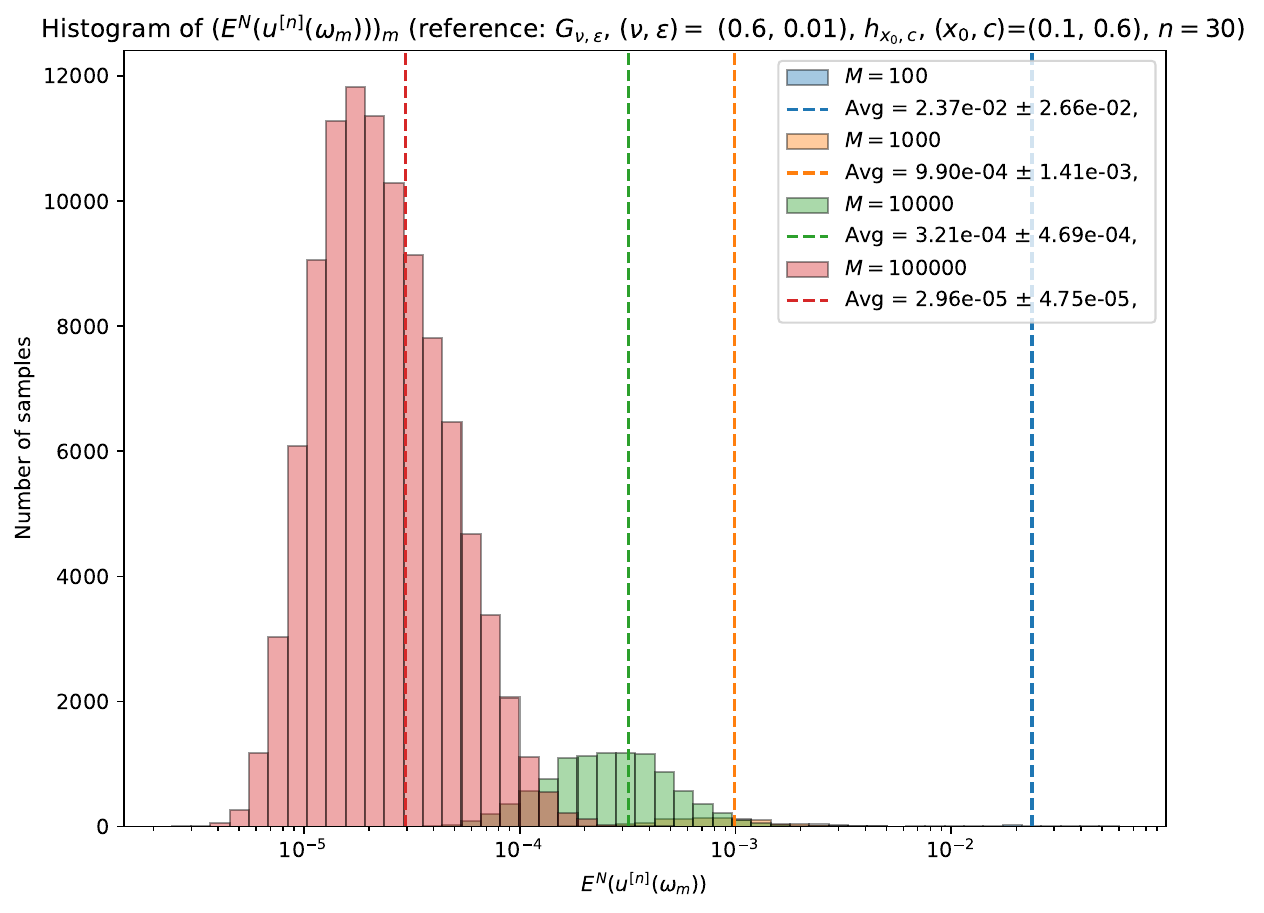} 
    \end{subfigure}
    
    \caption{Impact of the number of sample trajectories $M \in \mathbb{N}^{*}$ of the signal in the Least Square Monte Carlo on the empirical $PnL^{N,M}$ from \eqref{eq:empirical_pnl_functional_alpha_impact_cost_trade_of} through the scheme \eqref{scheme_initialization}--\eqref{scheme_update} iterations $n \in \mathbb{N}$ (left) and the histograms of the empirical errors $\left( E^{N}(u^{[n]}(\omega_{m}) \right)_{m}$ of the scheme from \eqref{eq:empirical_error_metric_per_omega} (the respective empirical averages $E^{N,M}(u^{[n]})$ from \eqref{eq:empirical_error_metric} are displayed in dashed line) after $n=30$ iterations (right). In each case, we estimate the conditional expectations \eqref{eq:conditional_expectations_to_estimate} required to solve the linear Fredholm equation at each iteration by LSMC with the regression variables $\left( \alpha, \int_{0}^{.} \alpha_{s} \dd s, X^{\tilde{\kappa}} \right)$. Laguerre polynomials up to degree $2$ are used for the basis expansion. For a fair comparison, in all the cases, the conditional expectations in the error metric \eqref{eq:empirical_error_metric_per_omega} are estimated using the regression basis $\left( \alpha, \int_{0}^{.} \alpha_{s} \dd s, X^{\tilde{\kappa}} \right)$ with a Laguerre polynomial basis expansion up to degree $3$. We apply a Ridge regularization with intensity $1e-6$ for all linear regressions.}
    \label{fig:impact_nber_trajs}
\end{figure}

\end{document}